\title{Overcomplete Tensor Decomposition via Koszul--Young Flattenings}
\author{}
\date{}
\author[1]{Pravesh K.\ Kothari\thanks{Email: \textit{kothari@cs.princeton.edu}. Partially supported by NSF CAREER Award \#2047933, Alfred P. Sloan Fellowship and a Google Research Scholar Award.
}}
\author[2,3]{Ankur Moitra\thanks{Email: \textit{moitra@mit.edu}. Partially supported by a Microsoft Trustworthy AI Grant, an ONR grant and a David and Lucile Packard Fellowship.}}
\author[4]{Alexander S.\ Wein\thanks{Email: \textit{aswein@ucdavis.edu}. Partially supported by an Alfred P.\ Sloan Research Fellowship and NSF CAREER Award CCF-2338091.}}
\affil[1]{Computer Science Department, Princeton}
\affil[2]{Department of Mathematics, MIT}
\affil[3]{Computer Science and Artificial Intelligence Lab, MIT}
\affil[4]{Department of Mathematics, UC Davis}
\begin{document}

\maketitle

\begin{abstract}
Motivated by connections between algebraic complexity lower bounds and tensor decompositions, we investigate Koszul--Young flattenings, which are the main ingredient in recent lower bounds for matrix multiplication. Based on this tool we give a new algorithm for decomposing an $n_1 \times n_2 \times n_3$ tensor as the sum of a minimal number of rank-1 terms, and certifying uniqueness of this decomposition. For $n_1 \le n_2 \le n_3$ with $n_1 \to \infty$ and $n_3/n_2 = O(1)$, our algorithm is guaranteed to succeed when the tensor rank is bounded by $r \le (1-\epsilon)(n_2 + n_3)$ for an arbitrary $\epsilon > 0$, provided the tensor components are generically chosen. For any fixed $\epsilon$, the runtime is polynomial in $n_3$. When $n_2 = n_3 = n$, our condition on the rank gives a factor-of-2 improvement over the classical simultaneous diagonalization algorithm, which requires $r \le n$, and also improves on the recent algorithm of Koiran (2024) which requires $r \le 4n/3$. It also improves on the PhD thesis of Persu (2018) which solves rank detection for $r \leq 3n/2$.

We complement our upper bounds by showing limitations, in particular that no flattening of the style we consider can surpass rank $n_2 + n_3$. Furthermore, for $n \times n \times n$ tensors, we show that an even more general class of degree-$d$ polynomial flattenings cannot surpass rank $Cn$ for a constant $C = C(d)$. This suggests that for tensor decompositions, the case of generic components may be fundamentally harder than that of random components, where efficient decomposition is possible even in highly overcomplete settings.
\end{abstract}

\newpage

\tableofcontents

\newpage

\section{Introduction}

A \emph{tensor} is a multi-way array $T \in \RR^{n_1 \times n_2 \times \cdots \times n_k}$ with entries denoted $T_{i_1,i_2,\ldots,i_k}$ for $i_j \in [n_j] := \{1,\ldots,n_j\}$, where $k$ is called the \emph{order} of the tensor. For vectors $a^{(1)} \in \RR^{n_1}, \, a^{(2)} \in \RR^{n_2}, \ldots, \, a^{(k)} \in \RR^{n_k}$, the associated \emph{rank-1} tensor is denoted $T = a^{(1)} \otimes a^{(2)} \otimes \cdots \otimes a^{(k)}$ and has entries $T_{i_1,i_2,\ldots,i_k} = a^{(1)}_{i_1} a^{(2)}_{i_2} \cdots a^{(k)}_{i_k}$. Also, $a^{\otimes k}$ is shorthand for $a \otimes a \otimes \cdots \otimes a$ ($k$ times). While various notions of tensor rank exist, we consider the \emph{CP (canonical polyadic)} decomposition, that is, the rank of a tensor $T$ is the minimum $r$ such that $T$ can be expressed as the sum of $r$ rank-1 tensors. In the matrix case $k=2$, this coincides with the usual notion of matrix rank.

\emph{Tensor decomposition} is the algorithmic task of breaking down a given rank-$r$ tensor into its $r$ constituent rank-1 summands. For matrices, this is ill-posed due to the ``rotation problem'' (see~\cite{moitra-book}). In contrast, minimum-rank tensor decompositions for $k \ge 3$ tend to be unique, provided the rank is small enough. This uniqueness is key to the usefulness of tensor methods in various applications throughout statistics, data science, and machine learning. For instance, minimum rank decompositions of the third/fourth \emph{moment} tensor can be used to learn the parameters of expressive statistical models, particularly ones with latent variables. We refer the reader to~\cite{kolda-survey,moitra-book,tensor-ml-survey,aravindan-chapter} for more background and applications of tensor decompositions.

In this work, we study tensor decompositions for an $n_1 \times \cdots \times n_k$ tensor $T$ of rank $r$. Any such tensor can be written as
\begin{equation}\label{eq:intro-decomp}
T = \sum_{\ell=1}^r a^{(1,\ell)} \otimes \cdots \otimes a^{(k,\ell)}.
\end{equation}
The vectors $a^{(i,\ell)} \in \RR^{n_i}$ are called the \emph{components} of the tensor $T$. In this work, we focus on the well-studied setting of ``large'' dimensional tensors of a fixed order; that is, we imagine a sequence of problem sizes where $k$ remains fixed but some or all of the dimensions $n_1,\ldots,n_k$ tend to infinity. 

Three natural and interrelated questions of interest arise when studying decompositions for $T$:
\begin{itemize}
\item {\bf Uniqueness Theorem}: A uniqueness theorem specifies an explicit list of conditions (ideally, efficiently checkable) on the collection of components $\{a^{(i,\ell)} \,:\, i \in [k], \, \ell \in [r]\}$ that guarantees the uniqueness of the decomposition~\eqref{eq:intro-decomp} up to natural ambiguities in ordering and scaling (see Definition~\ref{def:unique}). E.g., note that we cannot hope to resolve the norm of each component $a^{(i,\ell)}$, since $a \otimes b \otimes c = (2a) \otimes (b/2) \otimes c$.
\item {\bf Rank Detection}: Given a tensor $T$ of the form~\eqref{eq:intro-decomp} generated according to some model, the goal is to output $r$. Our lower bounds (i.e., impossibility results) will apply to this goal, which is, in general, easier than computing a minimum rank decomposition for $T$. 
\item {\bf Decomposition}: Given a tensor $T$ of the form~\eqref{eq:intro-decomp} generated according to some model, the goal is to recover the components $a^{(i,\ell)}$, again up to ordering and scaling.
\end{itemize}
Tensors pose a number of computational challenges because basic operations such as computing the rank or finding the best rank-1 approximation are NP-hard for tensors of order 3 and up~\cite{hastad-rank,hillar-lim,rank-hard-approx}.
As a result, we do not hope for an efficient (i.e., polynomial-time) tensor decomposition algorithm that works on all possible inputs, and it will be important to impose some non-degeneracy assumptions on the input tensor. Some common assumptions include:
\begin{itemize}
\item {\bf Generic components}: Here, the components are assumed to be chosen ``generically'' in the sense of algebraic geometry (see Section~\ref{sec:genericity}). This amounts to asking the algorithm to succeed for ``almost all'' choices of components (all but a measure-zero set).
\item {\bf Smoothed analysis model}: Here, the components are obtained by starting with an arbitrary collection of components and then perturbing them by a small amount of random noise. Additional noise may also be added to the observed tensor. While results in the generic setting tend to assume exact access to the input tensor and exact computation of linear-algebraic primitives, results in the smoothed setting also need to address the algorithm's numerical stability and robustness to noise.
\item {\bf Random components}: Here, the components are assumed to be drawn independently from a particular prior, such as i.i.d.\ Gaussian or uniform on the unit sphere. As compared to generic components, this induces rather strong assumptions on the components (e.g., near-orthogonality).
\end{itemize}
Additionally, some prior work assumes the rank-1 terms in the decomposition are \emph{symmetric} tensors $(a^{(\ell)})^{\otimes k}$, in place of the non-symmetric setup we have presented above. We will be interested in the \emph{overcomplete} setting, which in the ``square'' case $n_1 = n_2 = \cdots = n_k = n$ means $r > n$. 
While much is known about overcomplete third-order tensor decomposition in the random model, there are wide gaps in our understanding in the smoothed and generic settings. This will be our main focus in this paper. 

\subsection{Prior Work}

\paragraph{Generic Case.}

First we give an overview of prior work on third-order tensor decomposition, focusing on square tensors, i.e.\ those of format $n \times n \times n$, with $n$ large. A first question is that of identifiability: when are minimum-rank tensor decompositions unique? A tensor of format $n \times n \times n$ cannot have rank larger than $n^{2}$, because it is comprised of $n$ matrix ``slices'' and each is $n \times n$ with rank $\le n$. This means if $T$ is generated according to~\eqref{eq:intro-decomp} with $r > n^2$, the planted decomposition will not be the minimum-rank one, so we do not hope to find it. On the other hand, if $r \le cn^{2}$ for a particular constant $c \in (0,1)$ and the components are generically chosen, then the planted decomposition will be the unique rank-$r$ decomposition~\cite{identifiability}.  See also~\cite{specific-id} for identifiability results for tensors of specific sizes.

But what about computationally efficient algorithms?
Known polynomial-time algorithms require $r$ to be significantly below the identifiability threshold. For order-3 tensors, a classical method called \emph{simultaneous diagonalization} or \emph{Jennrich's algorithm} can decompose generic tensors of rank $r \le n$ (the ``undercomplete'' case)~\cite{harshman,LRA-3-way} (see~\cite{moitra-book}). Furthermore, the analysis comes with a uniqueness theorem: if $T = \sum_{\ell=1}^r a^{(\ell)} \otimes b^{(\ell)} \otimes c^{(\ell)}$ where $\{a^{(\ell)}\}$ and $\{b^{(\ell)}\}$ are each linearly independent and $\{c^{(\ell)}\}$ are pairwise linearly independent, then the decomposition is unique and the algorithm will find it. Another well-known uniqueness theorem due to Kruskal~\cite{kruskal} has the advantage of being applicable even for rank up to $r \le 3n/2$ but the disadvantage that the conditions on the components are not efficiently checkable and no algorithm is known for finding the decomposition in polynomial time; see also~\cite{lovitz-kruskal,gubkin-unique} and references within for various extensions of Kruskal's theorem.

\paragraph{Random Case.}

If the components are taken to be random rather than generic, there are efficient algorithms that succeed for much larger values of $r$. Known algorithms reach $r \approx n^{3/2}$ (up to logarithmic factors in $n$)~\cite{GM-sos,MSS,fast-sos,kivva-exact,ding-fast}. However these algorithms exploit the randomness of the components in an essential way:  They rely on the observation that the maxima of $\langle T, x \otimes y \otimes z \rangle$ over unit vectors $x$, $y$ and $z$ occur near the unknown components. Thus many algorithms for decomposing random third-order tensors start from a semidefinite programming relaxation for the above polynomial optimization problem and show how to round its solutions to recover the components. However this strategy quickly breaks down even in the smoothed model because the magnitude of the perturbations can be smaller than the magnitude of the components in the base tensor by a large polynomial factor. 

It appears likely that this threshold --- and more generally, $n^{k/2}$ for order-$k$ tensors --- is the fundamental limit for efficient algorithms in the random model, as evidenced by lower bounds against low-degree polynomials~\cite{tensor-decomp-LD}. We remark that for fourth-order tensors (discussed in Section~\ref{sec:further}), the algorithms for the generic model match the lower bounds for the (easier) random model, meaning we (likely) cannot expect further improvements in the generic setting, and the generic setting is no harder than the random setting. In contrast, known algorithms for third-order tensors have a gap between the generic ($r \approx n$) and random ($r \approx n^{3/2}$) settings.

\paragraph{Generic third-order overcomplete tensor decomposition.}

Our main focus will be on algorithms for decomposing overcomplete third-order tensors in the generic case. We remark that even a modest improvement on the value of $r$ has potential consequences for downstream applications in learning, since it often imposes an upper bound on the complexity of models we can work with (e.g., the number of components in a mixture model). And the assumption that the components are random often leads to much more stringent assumptions on the model as well, at least compared to the generic case. 

For some time, the question of finding any algorithm (or an efficient uniqueness theorem) in the ``overcomplete'' regime $r > n$ remained a notable open problem (highlighted at COLT 2014~\cite{colt-open-prob} and also mentioned in~\cite{MSS,moitra-book}), and recently some progress has been made. The work of~\cite{CR-overcomplete} gave an algorithm for decomposing slightly overcomplete tensors, namely $r = n+\ell$ where the runtime is polynomial in $n$ but exponential in $\ell$. More recently, Koiran~\cite{koiran} gave both a decomposition algorithm and an efficient uniqueness theorem for $r \le 4n/3$; this algorithm only uses 4 ``slices'' of the tensor, and works for tensors of format $n \times n \times p$ for $p \ge 4$.
Our work aims to push these approaches to the limit, improving the condition on $r$ as much as possible and also investigating whether there are fundamental limitations for efficient algorithms.

\subsection{Our Contributions: Upper and Lower Bounds.}

Our starting point is the work of Garg, Kayal and Saha~\cite{garg2020learning} who showed a surprising connection between algebraic complexity theory and learning. In particular they showed how a powerful lower bound method, called the method of \emph{shifted partial derivatives}~\cite{chen2011partial}, could be leveraged to give algorithms for decomposing polynomials into power sums, which is a close relative of tensor decomposition. We will similarly build our algorithms out of certain algebraic gadgets, which we formalize below:

\begin{definition}\label{def:rankgadget}
A rank detection gadget (a.k.a.\ ``flattening'') $M$ is an efficiently computable function that maps third-order tensors $T \in \RR^{n \times n \times n}$ to matrices $M(T)$. We say that $M$ solves the generic rank detection problem up to rank $r$ if there is some function $f$ such that:
\begin{enumerate}
    \item For all third-order tensors $T$, $f(\rank(M(T)) \leq \rank(T)$.
    \item Moreover for generic tensors $T$ up to rank $r$, $f(\rank(M(T)) = \rank(T)$.
\end{enumerate}
\end{definition}
\noindent This definition is inspired by the definition of what it means to refute a random CSP. In particular we want that $f(\rank(M(T))$ should always yield a valid lower bound on the rank of $T$, but we only require this lower bound to be accurate on generic tensors up to some given rank. The main question is: How can we design rank detection gadgets that work for the largest range of ranks? 

A simple rank detection gadget is to just take the first ``slice" through the tensor. It is easy to check that its rank is always a lower bound on the rank of $T$. But since $M(T)$ is an $n \times n$ matrix, this gadget can only plausibly work for generic tensors up to rank $n$. Devising rank detection gadgets that work beyond $n$, i.e.\ in the overcomplete regime, turns out to be rather non-trivial. The first rank detection gadget that plausibly works beyond $n$ was given by Strassen \cite{strassen1983rank}. The key property is that if $T$ is $n \times n \times n$ then $M(T)$ is a $3n \times 3n$ matrix and moreover the mapping is linear and sends a rank-one tensor $T$ to a rank-two matrix $M(T)$. Thus Strassen's gadget plausibly works up to rank $r \leq 3n/2$. But proving this requires showing that the rank two matrices that arise from each component do not generically interfere with each other. Strassen's gadget was analyzed in the PhD thesis of Persu~\cite{persu-thesis} who showed it solves rank detection (but not decomposition) when $r \le 3n/2$. It turns out this is an example of an even more general strategy called \emph{Koszul--Young flattenings}~\cite{LO-young} (see also~\cite{landsberg2011tensors}), which are the best known rank detection gadgets. These are parameterized by a positive integer $p$ and map an $n \times n \times n$ tensor $T$ to a $\binom{2p+1}{p}n \times \binom{2p+1}{p}n$ matrix $M(T)$. They are again linear, and map a rank-one tensor $T$ to a rank-$\binom{2p}{p}$ matrix $M(T)$, so they plausibly work up to $r = (2-\epsilon) n$ for any $\epsilon > 0$. 

Our main algorithmic results are tensor decomposition and efficient uniqueness theorems that reach the same limit as the Koszul--Young flattenings. Below we give an informal and simplified statement of the main algorithm results (``upper bounds''), with the full details presented in Section~\ref{sec:uniqueness-thm}.
\begin{theorem*}[Upper Bounds, Informal]
Consider tensors of format $n_1 \times n_2 \times n_3$ where (without loss of generality) $n_1 \le n_2 \le n_3$, and further assume an asymptotic regime where $n_1 \to \infty$ and $n_3/n_2 =: \alpha = O(1)$. In the setting of generic components, we give algorithms for both the rank detection and decomposition tasks, as well as an efficient uniqueness theorem. These results tolerate rank as large as $r \le (1-\epsilon)(n_2 + n_3)$ for an arbitrary constant $\epsilon > 0$. For any fixed $\epsilon$, the runtime is polynomial in $n_3$. More precisely, the runtime is polynomial in $n_3 \cdot 2^{\alpha/\epsilon}$.
\end{theorem*}
\noindent For comparison, the previous state-of-the-art results for rank detection~\cite{persu-thesis} and decomposition~\cite{koiran} assume $n_2 = n_3 = n$ and require $r \le 3n/2$ and $r \le 4n/3$ respectively. Also, our efficient uniqueness theorem is the first uniqueness theorem (efficient or not) to surpass Kruskal's classical bound $r \le 3n/2$~\cite{kruskal}.

In addition to the Koszul--Young flattenings, a key ingredient in our approach will be the algorithm of~\cite{JLV} (see also the precursors~\cite{cardoso,DL-link,foobi,cpd-constrained}) for finding rank-1 matrices in a linear subspace. In contrast to our approach based on rank detection gadgets, the decomposition algorithm of~\cite{koiran} takes a different approach, based on connections with the problem of computing a \emph{commuting extension}. We can exploit this connection in the other direction: Appendix~\ref{app:comm-ext} shows how our decomposition algorithm implies an algorithm for commuting extensions that is (in some sense) nearly optimal.

We also investigate limitations of our approach (``lower bounds'') by showing that certain natural strategies cannot even solve the easier rank detection problem. We give a few different results for successively broader classes of flattenings. First we consider linear flattenings of the form $M(a \otimes b \otimes c) = A(a) \otimes (bc^\top)$ for a matrix $A(a)$ whose entries are homogeneous degree-1 polynomials in the entries of $a$. We designate such flattenings as \emph{Koszul--Young type} because the Koszul--Young flattenings take this form. We also consider the substantially more general \emph{degree-$d$ polynomial flattenings} where each entry of $M(T)$ is a homogeneous degree-$d$ polynomial in the entries of $T$. In both cases, we do not quite rule out rank detection in the sense of Definition~\ref{def:rankgadget}, but we rule out a natural sufficient condition called \emph{rank additivity}. In the case of linear flattenings, rank additivity corresponds to Definition~\ref{def:rankgadget} with $f(x) = mx$ for some $m \in \ZZ_{\ge 1}$. Below we give an informal and simplified statement of our lower bounds, with the full details presented in Section~\ref{sec:lower}.
\begin{theorem*}[Lower Bounds, Informal]
Consider tensors of format $n_1 \times n_2 \times n_3$ where $n_1 \le n_2 \le n_3$.
\begin{itemize}
    \item Flattenings of Koszul--Young type can only exhibit rank additivity when $r \le n_2 + n_3$.
    \item Degree-$d$ homogeneous polynomial flattenings can only exhibit rank additivity when $r \le Cn_3$ for a constant $C = C(d)$.
\end{itemize}
\end{theorem*}
\noindent The bound $n_2+n_3$ in the first result is precisely the threshold achieved by our algorithm, so the Koszul--Young flattenings are optimal among flattenings of this style. The proofs extend~\cite{barriers-rank}, which covers the case $d=1$ of the second result. We note that this style of lower bound is distinct from the low-degree polynomial framework used by~\cite{tensor-decomp-LD} for the case of random components, as that approach only applies to fully-Bayesian random models.

Our lower bounds formalize the sense in which reaching $r \gg n$ would require a rather different approach, and might be fundamentally out of reach for efficient algorithms. Of course, there are other potential approaches that are not ruled out by our lower bounds. For instance, we have not ruled out a flattening whose rank tracks that of the original tensor in a non-additive way. Beyond flattenings, there is the commuting extension approach of~\cite{koiran}, but curiously this appears to have a natural breaking point at $r=2n$ (see Section~2 of~\cite{koiran-extensions}), matching our algorithm. We note that there is an explicit tensor with a rank lower bound of roughly $3n$~\cite{factor-3}, but this does not appear to suggest a decomposition algorithm.

\paragraph{Model of computation.}

Our algorithm only uses standard linear algebra primitives such as solving linear equations, computing eigenvalues, and intersecting linear subspaces (all applied to polynomial-sized inputs). The eigenvalue computation is used within simultaneous diagonalization, which is a subroutine of~\cite{JLV}. Following~\cite[Section~1.4]{koiran} (and numerous prior works in the tensor decomposition literature), we treat these as atomic operations and assume they can be implemented exactly. Our algorithm can be implemented to give guarantees in the computational model that allows rational arithmetic with polynomial bits of precision.

We leave for future work the question of obtaining a detailed analysis of the stability and robustness of our algorithm, e.g., in the style of~\cite{smoothed,goyal-robust}. Certainly our description and analysis of the algorithm in exact terms is a necessary step toward this goal. Another seemingly necessary step would be to obtain a robust analysis of the~\cite{JLV} algorithm (which we use as a subroutine), but currently this only exists for the ``non-planted'' variant of the problem~\cite{robust-JLV} (where the goal is to \emph{refute} the existence of a rank-1 matrix in the given subspace).

\subsection{Further Discussion}
\label{sec:further}

We remark that there are fourth-order tensor decomposition algorithms that work in the generic and highly  overcomplete setting. In particular the \emph{FOOBI} algorithm and its variants~\cite{cardoso,DL-link,foobi,MSS,HSS-robust,JLV} can decompose generic tensors of rank $r \le cn^2$ for a constant $c > 0$. See~\cite{JLV} for a recent account, which generalizes the setting and corrects an error in the original analysis of~\cite{DL-link}; this result also certifies uniqueness of the decomposition, providing an efficiently-checkable uniqueness theorem. However from the vantage point of Definition~\ref{def:rankgadget} an associated rank detection gadget for fourth-order tensors is easy to construct: We can rearrange the $n \times n \times n \times n$ tensor into an $n^2 \times n^2$ matrix. This works as a rank detecting gadget for rank $r \leq n^2$. Also noise-robustness and numerical precision of (variants of) the above algorithms have been studied (for instance, in the smoothed analysis model) for order-3~\cite{smoothed,goyal-robust,koiran-precision} and order-4~\cite{MSS,HSS-robust} tensors.

\section{Main Results}

\subsection{Preliminaries}

\subsubsection{Tensor Decomposition and Rank}

\begin{definition}[Tensor Rank]
For vectors $a^{(1)} \in \RR^{n_1}, \, a^{(2)} \in \RR^{n_2}, \ldots, \, a^{(k)} \in \RR^{n_k}$, the \emph{rank-1} tensor $a^{(1)} \otimes a^{(2)} \otimes \cdots \otimes a^{(k)}$ has order $k$ and format $n_1 \times \cdots \times n_k$, with entries $(a^{(1)} \otimes \cdots \otimes a^{(k)})_{i_1,\ldots,i_k} = a^{(1)}_{i_1} a^{(2)}_{i_2} \cdots a^{(k)}_{i_k}$. In general, the \emph{rank} of an $n_1 \times \cdots \times n_k$ tensor $T$ is the minimum $r$ for which $T$ can be expressed as the sum of $r$ rank-1 tensors: $T = \sum_{\ell=1}^r a^{(1,\ell)} \otimes \cdots \otimes a^{(k,\ell)}$ where $a^{(i,\ell)} \in \RR^{n_i}$.
\end{definition}

\begin{definition}[Unique Decomposition]
\label{def:unique}
For an $n_1 \times \cdots \times n_k$ tensor $T$ and a positive integer $r$, we say that a decomposition $T = \sum_{\ell=1}^r a^{(1,\ell)} \otimes \cdots \otimes a^{(k,\ell)}$ into $r$ rank-1 terms is \emph{unique} if for any decomposition $T = \sum_{\ell=1}^r \tilde{a}^{(1,\ell)} \otimes \cdots \otimes \tilde{a}^{(k,\ell)}$ there exists a permutation $\pi: [r] \to [r]$ such that $\tilde{a}^{(1,\ell)} \otimes \cdots \otimes \tilde{a}^{(k,\ell)} = a^{(1,\pi(\ell))} \otimes \cdots \otimes a^{(k,\pi(\ell))}$ for all $\ell \in [r]$. When we speak of \emph{recovering} such a decomposition, we mean the task of finding the $r$ rank-1 tensors (which can be listed in any order), given $T$ as input.
\end{definition}

\noindent Note that if $T$ has a unique decomposition into $r$ rank-1 terms then none of these terms can be zero, and the rank of $T$ must be exactly $r$.

\subsubsection{Genericity and Symbolic Rank}
\label{sec:genericity}

Throughout, we use the standard meaning of the term ``generic'' from algebraic geometry. Namely, for symbolic variables $x = (x_1,\ldots,x_p)$, we say that a property holds true ``generically'' (or holds for ``generically chosen $x$'') if there exists a nonidentically-zero polynomial $P(x_1,\ldots,x_p) \in \RR[x]$ such that the property holds true for all $x \in \RR^n$ such that $P(x) \ne 0$. If a property holds true generically, it holds for ``almost all'' $x$ in the sense that the set of $x$ for which the property fails is measure zero. If a finite collection of properties each holds true generically, then it is also generically true that all the properties hold simultaneously; this can be seen by multiplying together all the associated polynomials $P(x)$.

For us, a common use case for genericity will be in reference to a tensor decomposition $T = \sum_{\ell=1}^r a^{(1,\ell)} \otimes \cdots \otimes a^{(k,\ell)}$ where $a^{(i,\ell)} \in \RR^{n_i}$. When we say a property holds true for ``generically chosen components,'' we mean the notion of generic from above, where the variables $x$ are all the entries of all the vectors $a^{(i,\ell)}$ (so $r \sum_{i=1}^k n_i$ variables in total).

Suppose $M = M(x)$ is a matrix whose entries are polynomials in the variables $x = (x_1,\ldots,x_p)$, that is, $M \in \RR[x]^{m \times n}$. Every minor of $M$ (determinant of a square submatrix) is a polynomial in the variables $x$. The \emph{symbolic rank} of $M$ is the largest integer $r$ for which $M$ has an $r \times r$ minor that is not identically zero. Equivalently, this is the rank of $M$ over the fraction field $\RR(x)$. Note that if $M$ has symbolic rank $r$ then $M(x)$ has rank $r$ for generically chosen $x$; this follows from the characterization of rank as the size of the largest nonzero minor. This allows us to speak of the ``generic rank'' of a matrix (which equals the symbolic rank). For specific values of $x$, the rank of $M(x)$ may be lower than the generic rank of $M$, but never higher (since all minors larger than the generic rank are identically zero).

\subsection{Upper Bounds}

\subsubsection{Trivial Flattenings}
\label{sec:trivial}

As a benchmark, we first discuss the ``trivial'' flattenings of tensors to matrices. Suppose $T$ is a tensor of format $n_1 \times \cdots \times n_k$. Choose a subset $S \subseteq [k]$ and let $\bar{S} := [k] \setminus S$ denote its complement. We can flatten $T$ to a matrix $M^\triv = M^\triv(T;S)$ of format $(\prod_{i \in S} n_i) \times (\prod_{i \in \bar{S}} n_i)$ in a straightforward way:
\begin{equation}\label{eq:def-triv}
M^\triv_{(i_j \,:\, j \in S), \, (i_j \,:\, j \in \bar{S})} := T_{i_1,\ldots,i_k} \qquad \text{where } i_j \in [n_j].
\end{equation}
A rank-1 tensor $a^{(1)} \otimes \cdots \otimes a^{(k)}$ flattens to a rank-1 matrix
\[ M^\triv(a^{(1)} \otimes \cdots \otimes a^{(k)};S) = \left(\bigotimes_{i \in S} a^{(i)}\right)\left(\bigotimes_{i \in \bar{S}} a^{(i)}\right)^\top. \]
A rank-$r$ tensor $T = \sum_{\ell=1}^r a^{(1,\ell)} \otimes \cdots \otimes a^{(k,\ell)}$ flattens to the matrix
\[ \sum_{\ell=1}^r M^\triv(a^{(1,\ell)} \otimes \cdots \otimes a^{(k,\ell)};S), \]
which has rank at most $r$. For rank detection, it would be useful if the rank were exactly $r$, i.e., if the rank of the flattening were \emph{additive} across the rank-1 terms of the tensor, so that we could read off the rank of $T$ from the rank of $M^\triv$. The rank of $M^\triv(T;S)$ certainly cannot exceed the minimum of its two dimensions, namely
\[ R_S := \min\left\{\prod_{i \in S} n_i, \, \prod_{i \in \bar{S}} n_i\right\}. \]
Conceivably, additivity could hold as long as $r \le R_S$. Indeed, this is the case as long as the components $a^{(i,\ell)}$ are generically chosen in the sense of Section~\ref{sec:genericity}; we include the proof in Appendix~\ref{app:trivial} for completeness. This gives a solution to the rank detection problem for tensors of rank up to $R_S$ (with generic components). Taking the best choice of $S$, the trivial flattening can solve rank detection up to rank
\[ n_* := \max_{S \subseteq [k]} \, \min\left\{\prod_{i \in S} n_i, \, \prod_{i \in \bar{S}} n_i\right\}. \]
For order-$k$ tensors of format $n \times \cdots \times n$, this gives rank detection up to rank $n^{\lfloor k/2 \rfloor}$.

Section~6 of~\cite{JLV} uses the trivial flattening to give an algorithm for recovering the minimum-rank decomposition (and certifying its uniqueness), under slightly stronger assumptions. The idea is to search the column span of the flattening for vectors that have rank-1 structure. For instance, Corollary~27 of~\cite{JLV} can decompose an $n_1 \times n_2 \times n_3$ tensor with $n_1 \le n_2 \le n_3$ and generically chosen components, of rank $r$ up to
\begin{equation}\label{eq:r-triv}
r \le \min\left\{\frac{1}{4}(n_1-1)(n_2-1), \, n_3\right\}.
\end{equation}
For comparison, $n_*$ takes the value $\min\{n_1 n_2,\, n_3\}$ in this setting.

\subsubsection{Koszul--Young Flattenings}

Our new algorithm is inspired by an idea first introduced by~\cite{LO-young} as ``Young flattenings''~\cite{LO-young,landsberg2011tensors,nontriv-explicit} and later referred to as ``Koszul flattenings''~\cite{alg-waring,border-matrix-mult,homotopy-decomp} or ``Koszul--Young flattenings''~\cite{KY-border,KY-complexity}. (We thank J.M.\ Landsberg for clarifying that Koszul flattenings are a special case of Young flattenings, with the optimal choice of parameters.) Essentially, Koszul--Young flattenings provide a non-trivial linear map from tensors to matrices. These ideas have already appeared in tensor decomposition algorithms~\cite{alg-waring}, but for $n \times n \times n$ tensors, these results only handle the undercomplete case $r \le n$. More recently,~\cite{persu-thesis} used Koszul--Young flattenings to solve rank detection for overcomplete tensors with generic components, namely $n \times n \times n$ tensors of rank up to $3n/2$. Furthermore,~\cite{persu-thesis} proposed a hierarchy of poly-time approaches for rank detection that conjecturally approaches rank $2n$, as well as an approach for decomposition that conjecturally tolerates rank $(1+\epsilon)n$ for a small constant $\epsilon > 0$. We will resolve and extend these conjectures: for sufficiently large $n \times n \times n$ tensors with generic components, we solve both rank detection and decomposition in polynomial time, up to rank $(2-\epsilon)n$ for an arbitrary constant $\epsilon > 0$. We will also handle non-square tensors.

The Koszul--Young flattenings are typically described in abstract algebraic terms, namely as a map $B^* \otimes \wedge^{p}(A) \rightarrow \wedge^{p+1}A \otimes C$. We will give instead a very explicit combinatorial description of a matrix based on these ideas. Our construction has two parameters $p,q$ instead of only $p$, and specializes to (the transpose of) the usual Koszul--Young flattenings when $q = 2p+1$. Suppose $T$ is $n_1 \times n_2 \times n_3$ and fix integers $p,q$ with $0 \le p < p+1 \le q \le n_1$. Define a matrix $M = M(T;p,q)$ with rows indexed by
\[ \{(S,j) \,:\, S \subseteq [q], \, |S|=p, \, j \in [n_2]\}, \]
columns indexed by
\[ \{(U,k) \,:\, U \subseteq [q], \, |U|=p+1, \, k \in [n_3]\}, \]
and entries
\begin{equation}\label{eq:def-M}
M_{Sj,Uk} := \sum_{i=1}^q \One_{U = S \sqcup \{i\}} \cdot \sigma(U,i) \cdot T_{ijk}.
\end{equation}
Here, $\One_A$ denotes the $\{0,1\}$-valued indicator for an event $A$, and $\sigma(U,i) \in \{\pm 1\}$ is the parity of $i$'s position in $U$, that is,
\begin{equation}\label{eq:def-sig}
\sigma(U,i) := (-1)^{|\{j \in U \,:\, j < i\}|}.
\end{equation}
While we have written~\eqref{eq:def-M} as a sum over $i \in [q]$, at most one term in this sum can be nonzero. Note that only $q$ slices of $T$ (along the first mode) are used.

The dimensions of $M$ are $\binom{q}{p} n_2 \times \binom{q}{p+1} n_3$. We will leave $p,q$ as free variables for now, but eventually it will be advantageous to take $q$ to be a (large) constant so that $M$ has polynomial size, and $p \approx q \frac{n_3}{n_2+n_3}$ which makes $M$ as square as possible. One special case of interest is $n_2 = n_3$, in which case we can take $q$ odd with $q = 2p+1$ so that $M$ is exactly square, since $\binom{2p+1}{p} = \binom{2p+1}{p+1}$; the prior work~\cite{persu-thesis} considers only this square case and mainly focuses on $p=1$. We remark that the extreme cases $p=0$ and $p=q-1$ essentially reduce to ``trivial'' flattenings from the previous section (but with additional signs in the case $p=q-1$).

The flattening~\eqref{eq:def-M} bears resemblance to a signed variant of the \emph{Kikuchi matrices} that were introduced by~\cite{kikuchi} (with a variation independently discovered by~\cite{hastings-quantum}) and have subsequently appeared in results on constraint satisfaction~\cite{smoothed-random}, error-correcting codes~\cite{ldc,lcc-1,lcc-2}, and extremal combinatorics~\cite{smoothed-random,simpler-moore}. In particular, the related matrix $A$ defined in~\eqref{eq:def-A} below resembles a Kikuchi matrix in the sense that its rows and columns are indexed by subsets and it only contains a nonzero entry when the two subsets have their symmetric difference of a particular size (in this case, 1).

\subsubsection{Rank Detection}
\label{sec:rank-det}

If $T$ is a rank-1 tensor $T = a \otimes b \otimes c$ then the flattening $M = M(T;p,q)$ from~\eqref{eq:def-M} is the Kronecker product $M = A \otimes (bc^\top)$ where $A = A(a;p,q)$ is the $\binom{q}{p} \times \binom{q}{p+1}$ matrix
\begin{equation}\label{eq:def-A}
A_{SU} = \sum_{i=1}^q \One_{U = S \sqcup \{i\}} \cdot \sigma(U,i) \cdot a_i.
\end{equation}
Therefore, as long as neither $b$ nor $c$ is the zero vector, $M$ has the same rank as $A$. We will see that the matrix $A$ is rank-deficient, with generic rank $\binom{q-1}{p}$. If, for instance, $q = 2p+1$ then $A$ is square since $\binom{2p+1}{p} = \binom{2p+1}{p+1}$, but crucially, its generic rank $\binom{2p}{p}$ is roughly half its dimension (for large $p$). This is a known fact, and we provide a proof in Section~\ref{sec:pf-prop-rank}.

\begin{proposition}[\cite{LO-young}]
\label{prop:rank}
For a rank-1 tensor $T = a \otimes b \otimes c$ with $a,b,c$ generically chosen, the matrix $M(T;p,q)$ defined in~\eqref{eq:def-M} has rank exactly $\binom{q-1}{p}$.
\end{proposition}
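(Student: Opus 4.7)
The plan is to reduce the statement to a rank computation for a Koszul differential. Since $T = a \otimes b \otimes c$, direct inspection of~\eqref{eq:def-M} shows that
$$M(T;p,q) \;=\; A(a;p,q) \,\otimes\, (bc^\top),$$
with $A$ the matrix from~\eqref{eq:def-A}. For generically chosen $b$ and $c$ both are nonzero, so $bc^\top$ has rank exactly $1$ and therefore $\mathrm{rank}(M) = \mathrm{rank}(A)$. The proposition thus reduces to showing that $\mathrm{rank}(A) = \binom{q-1}{p}$ for generic $a$; I would in fact argue this for every nonzero $a$, which trivially implies the generic statement.

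Next I would identify $A$ with the matrix, in the standard bases $\{e_U = e_{u_1}\wedge\cdots\wedge e_{u_{|U|}}\}$ of the wedge powers of $\RR^q$, of the contraction (interior product) map $\iota_a\colon \Lambda^{p+1}\RR^q \to \Lambda^p\RR^q$ given on basis elements by
$$\iota_a(e_U) \;=\; \sum_{i\in U}\sigma(U,i)\,a_i\,e_{U\setminus\{i\}}.$$
The coefficient of $e_S$ in $\iota_a(e_U)$ is nonzero only when $S = U\setminus\{i\}$, i.e.\ $U = S\sqcup\{i\}$, and then it equals $\sigma(U,i)\,a_i$, matching~\eqref{eq:def-A} term by term. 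I expect this sign bookkeeping to be the most delicate part, though it is routine once one writes $U = \{u_1 < \cdots < u_{p+1}\}$ and observes that $\sigma(U,u_k) = (-1)^{k-1}$ agrees with the standard Koszul sign convention.

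Finally I would invoke the classical fact that for every nonzero $a \in \RR^q$, the Koszul complex
$$0 \longrightarrow \Lambda^q\RR^q \xrightarrow{\iota_a} \Lambda^{q-1}\RR^q \xrightarrow{\iota_a} \cdots \xrightarrow{\iota_a} \Lambda^1\RR^q \xrightarrow{\iota_a} \Lambda^0\RR^q \longrightarrow 0$$
is exact, which one can prove directly by choosing coordinates in which $a = e_1$ and exhibiting a contracting homotopy. Exactness gives $\dim\ker(\iota_a|_{\Lambda^{p+1}}) = \mathrm{rank}(\iota_a|_{\Lambda^{p+2}})$, and iterating this relation down from $\Lambda^q$ (where the map is injective with image of dimension $1$) yields a telescoping sum
$$\mathrm{rank}(A) \;=\; \sum_{k=0}^{q-p-1}(-1)^k\binom{q}{p+1+k} \;=\; \binom{q-1}{p},$$
where the last identity follows from $\sum_{j=0}^{q}(-1)^j\binom{q}{j}=0$ together with the standard partial sum $\sum_{j=0}^{p}(-1)^j\binom{q}{j} = (-1)^p\binom{q-1}{p}$ (a consequence of Pascal's rule). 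Since the equality holds for every nonzero $a$, it is also the generic rank of $A$, and hence of $M$.
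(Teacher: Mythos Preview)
Your argument is correct. The reduction $M = A(a;p,q)\otimes(bc^\top)$ and hence $\rank(M)=\rank(A)$ for nonzero $b,c$ is the same first step as in the paper. From there, however, the paper computes $\rank(A)$ by a direct combinatorial argument: fixing any $i$ with $a_i\ne 0$, it exhibits the $\binom{q-1}{p}\times\binom{q-1}{p}$ submatrix on rows $\{S:i\notin S\}$ and columns $\{U:i\in U\}$ as a signed permutation matrix (giving the lower bound), and then shows by an explicit linear relation among the columns $\{Y\setminus\{j\}:j\in Y\}$ that every column lies in the span of those with $i\in U$ (giving the upper bound). Your route instead recognizes $A$ as the Koszul differential $\iota_a:\Lambda^{p+1}\RR^q\to\Lambda^p\RR^q$ and reads off the rank from exactness of the Koszul complex via the alternating binomial identity.

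Both approaches are valid and yield the same result for every nonzero $a$, not just generic $a$. The paper's argument is more elementary and, importantly, also identifies \emph{which} rows and columns of $A$ form bases for the row and column spans; this structural information is reused later (in the proofs of Theorem~\ref{thm:rank-det} and Lemma~\ref{lem:subspace-intersect}). Your Koszul argument is cleaner conceptually and explains the ``coincidence'' of the rank value, but does not by itself single out these explicit bases, so if you were to continue with the rest of the paper you would still need to extract them (e.g., from the contracting homotopy in coordinates where $a=e_i$).
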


\noindent The transformation $T \mapsto M(T;p,q)$ is linear, so a rank-$r$ tensor $T = \sum_{\ell=1}^r a^{(\ell)} \otimes b^{(\ell)} \otimes c^{(\ell)}$ flattens to the matrix
\[ M(T;p,q) = \sum_{\ell=1}^r M(a^{(\ell)} \otimes b^{(\ell)} \otimes c^{(\ell)};p,q). \]
If the rank were additive across these $r$ terms then $M(T;p,q)$ would need to have rank $r\binom{q-1}{p}$, in light of Proposition~\ref{prop:rank}. Conceivably, additivity could hold as long as $r\binom{q-1}{p}$ does not exceed the minimum dimension of $M(T;p,q)$, which is $\min\{\binom{q}{p}n_2,\binom{q}{p+1}n_3\}$. Dividing through by $\binom{q-1}{p}$, this condition becomes $r \le \min\{\frac{q}{q-p} n_2, \frac{q}{p+1} n_3\}$. Choosing $q$ large and $p \approx q \frac{n_3}{n_2+n_3}$ to balance the two terms in $\min\{\cdots\}$, this heuristic suggests that rank detection might be possible for $r$ as large as, roughly, $n_2+n_3$. We show that indeed this pans out.

\begin{theorem}[Rank detection]
\label{thm:rank-det}
Let $1 \le q \le n_1$ and
\[ p = \left\lfloor q \cdot \frac{n_3}{n_2+n_3} \right\rfloor. \]
If $T = \sum_{\ell=1}^r a^{(\ell)} \otimes b^{(\ell)} \otimes c^{(\ell)}$ is $n_1 \times n_2 \times n_3$ with generically chosen components, and
\[ r \le (n_2+n_3)\left(1 - \frac{1+\alpha}{q}\right) - q \qquad \text{where } \alpha := \max\left\{\frac{n_2}{n_3},\frac{n_3}{n_2}\right\}, \]
then the matrix $M(T;p,q)$ defined in~\eqref{eq:def-M} has rank exactly $r \binom{q-1}{p}$.
\end{theorem}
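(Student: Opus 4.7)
The upper bound $\text{rank}(M(T;p,q)) \le r \binom{q-1}{p}$ is immediate from the linearity of the map $T \mapsto M(T;p,q)$, Proposition~\ref{prop:rank} applied to each of the $r$ rank-1 summands, and subadditivity of matrix rank.

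For the matching lower bound, the plan is to use a standard Zariski-openness argument: the condition $\text{rank}(M(T;p,q)) \ge r\binom{q-1}{p}$ asserts the non-vanishing of at least one $r\binom{q-1}{p} \times r\binom{q-1}{p}$ minor of $M(T;p,q)$, and each such minor is a polynomial in the components $(a^{(\ell)}, b^{(\ell)}, c^{(\ell)})$. Hence the condition is Zariski open in the parameter space, and to establish that it holds generically it suffices to exhibit a single tuple of components for which $\text{rank}(M(T;p,q))$ reaches $r\binom{q-1}{p}$.

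The main obstacle is producing such a certifying example under the numerical bound on $r$. For a rank-1 tensor $a \otimes b \otimes c$, one computes $M(a \otimes b \otimes c; p,q) = A(a;p,q) \otimes (bc^\top)$, so its column space is $W(a) \otimes \text{span}(b) \subseteq \RR^{\binom{q}{p}} \otimes \RR^{n_2}$, where $W(a) := \text{col}(A(a;p,q))$ is $\binom{q-1}{p}$-dimensional and, by Koszul exactness of $\wedge^\bullet \RR^q$ with differential contract-with-$a$, coincides with the image of $\iota_a : \wedge^{p+1}\RR^q \to \wedge^p\RR^q$. Rank additivity is therefore equivalent to the subspaces $W(a^{(\ell)}) \otimes \text{span}(b^{(\ell)})$ being in direct sum inside $\RR^{\binom{q}{p}} \otimes \RR^{n_2}$; a symmetric row-space argument yields an analogous direct-sum condition in $\RR^{\binom{q}{p+1}} \otimes \RR^{n_3}$ involving $c^{(\ell)}$ and the row spaces of $A(a^{(\ell)};p,q)$.

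I would build the certifying example by picking $a^{(\ell)}$ from a structured family (for instance, points on a rational normal curve in $\RR^q$, chosen so that the $W(a^{(\ell)})$'s are in ``generic position'' within the subvariety of the Grassmannian $\mathrm{Gr}(\binom{q-1}{p}, \binom{q}{p})$ swept out by the map $a \mapsto W(a)$) and $b^{(\ell)}, c^{(\ell)}$ generic. Setting $B = [b^{(1)} \mid \cdots \mid b^{(r)}]$, the column-space direct-sum condition becomes the vanishing of the intersection $\bigl(\bigoplus_\ell W(a^{(\ell)}) \otimes e_\ell\bigr) \cap \bigl(\RR^{\binom{q}{p}} \otimes \ker B\bigr)$ inside $\RR^{\binom{q}{p}} \otimes \RR^r$. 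Its expected dimension is $r\binom{q-1}{p} - n_2 \binom{q}{p}$, which is nonpositive precisely when $r \le \frac{q}{q-p} n_2$. Substituting $p = \lfloor q n_3/(n_2+n_3) \rfloor$ and combining with the symmetric row-space constraint $r \le \frac{q}{p+1} n_3$ recovers the quantitative bound $r \le (n_2+n_3)(1-(1+\alpha)/q) - q$ in the theorem, with the additive $-q$ absorbing slack from the floor in $p$ and from simultaneously enforcing both direct-sum conditions. The hardest step I expect is upgrading the ``expected dimension'' count to a rigorous guarantee for the structured family above --- that is, verifying that the family of subspaces $W(a)$ is ``rich enough'' (in the tangent-space sense) for the generic intersection heuristic to be provable, not merely plausible.
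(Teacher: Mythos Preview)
Your framing is correct through the point where you reduce to exhibiting a single witness for which the subspaces $W(a^{(\ell)}) \otimes \mathrm{span}(b^{(\ell)})$ are in direct sum (and dually on the row side), and your dimension-count heuristic $r \le \min\{\frac{q}{q-p}n_2,\,\frac{q}{p+1}n_3\}$ matches the paper. But you explicitly leave open the step of upgrading the expected-dimension count to an actual proof, and your proposed route --- structured $a^{(\ell)}$'s on a rational normal curve plus a Grassmannian-genericity argument --- is not carried out and is not obviously workable. The subvariety $\{W(a) : a \in \RR^q\} \subseteq \mathrm{Gr}(\binom{q-1}{p},\binom{q}{p})$ is only $(q-1)$-dimensional, so ``generic position'' in the ambient Grassmannian is not available; you would need a bespoke transversality statement, and you have not supplied one.

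The paper bypasses this entirely with a concrete combinatorial argument. It first writes $A(a;p,q) = \mathrm{diag}(v)\,\tilde{A}\,\mathrm{diag}(w)$ with $\tilde{A} = A(\mathbf{1};p,q)$ a fixed numerical matrix, then factors $\tilde{A} = \tilde{Q}\tilde{R}^\top$ with inner dimension $\binom{q-1}{p}$. This yields an explicit factorization $M = (\text{left block})(\text{right block})^\top$ where each block has $r\binom{q-1}{p}$ columns, and it suffices to show each block has full column rank. For the left block, one takes $r = mq$, specializes the $b$-variables so as to block-diagonalize into $m$ identical $q\binom{q-1}{p} \times q\binom{q-1}{p}$ blocks, and then shows a single such block $Q''$ has nonzero determinant by expanding $\det(Q'')$ as a sum over assignments of ``cells'' and exhibiting one monomial in the $a,b$-variables that appears with coefficient $\pm\prod_\ell \det(F_\ell)$, where each $F_\ell$ is a full-rank submatrix of $\tilde{Q}$ indexed by $\{S : \ell \notin S\}$ (nonsingular by Lemma~\ref{lem:rank-A}). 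The integer $m$ must satisfy $r \le mq$ and $n_2 \ge m(q-p)$, $n_3 \ge m(p+1)$; solving for $m$ gives exactly the condition on $r$ with the $-q$ term. So the missing ingredient in your proposal is not a dimension count but this explicit monomial-isolation argument on a specialized block.
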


\noindent The proof can be found in Section~\ref{sec:pf-rank-det}. Of course, one has the option to permute the 3 modes of the tensor before applying Theorem~\ref{thm:rank-det}, and it is advantageous to make $n_1$ the smallest mode. Suppose $n_1 \le n_2 \le n_3$ and consider the regime $n_1 \to \infty$ with $\alpha$ of constant order. For any constant $\epsilon > 0$ we can handle $r$ as large as $(1-\epsilon)(n_2+n_3)$ by taking $q$ to be a sufficiently large constant (depending on $\alpha,\epsilon$). For comparison, the trivial flattening of Section~\ref{sec:trivial} works up to rank $n_3$ in this regime. Our advantage over the trivial flattening diminishes as the ratio $\alpha = n_3/n_2$ increases. As long as $q$ is a constant, the size of $M$ and the complexity of computing its entries are polynomial in the largest dimension $n_3$.

How does the runtime scale with $\epsilon$? Since $q$ is a constant, it suffices to have $(1+\alpha)/q \le \epsilon/2$, meaning we should choose $q \ge 2(1+\alpha)/\epsilon$. Now each dimension of $M$ is bounded by $n_3 \cdot 2^q = n_3 \cdot 2^{O(\alpha/\epsilon)}$. The runtime is therefore polynomial in $n_3 \cdot 2^{\alpha/\epsilon}$. It might be possible to remove the dependence on $\alpha$ with a more careful analysis.

\begin{remark}[Certifying lower bounds on tensor rank]
For an \emph{arbitrary} tensor $T$, we always have $\rank(M(T;p,q)) \le \binom{q-1}{p} \cdot \rank(T)$. This follows from the fact that the rank of $A$ from~\eqref{eq:def-A} \emph{never} exceeds its generic rank $\binom{q-1}{p}$; see Lemma~\ref{lem:rank-A}. As a result, $M$ provides a \emph{certificate} that $\rank(T) \ge \lceil \rank(M(T;p,q)) / \binom{q-1}{p} \rceil$ for any given tensor $T$.
\end{remark}

\subsubsection{Decomposition Algorithm}

We now outline our strategy for how to use the flattening~\eqref{eq:def-M} to extract the components of the tensor. The aim of this section is to describe the algorithm and the intuition behind it, while Section~\ref{sec:uniqueness-thm} below will contain the main theorems that guarantee this algorithm's success when the tensor components are generically chosen.

Consider an $n_1 \times n_2 \times n_3$ tensor $T = \sum_{\ell=1}^r a^{(\ell)} \otimes b^{(\ell)} \otimes c^{(\ell)}$. Build the matrix $M = M(T;p,q)$ defined in~\eqref{eq:def-M} and compute its column span. Recall that each rank-1 term $a^{(\ell)} \otimes b^{(\ell)} \otimes c^{(\ell)}$ flattens to $A(a^{(\ell)};p,q) \otimes (b^{(\ell)}c^{(\ell)\top})$, where $A(a;p,q)$ is defined in~\eqref{eq:def-A}. Generically, we have the additivity property of Theorem~\ref{thm:rank-det}, implying that the column span of $M$ is equal to the span of all columns of these individual flattenings, that is,
\begin{equation}\label{eq:col-span}
\Colspan(M) = \Span\{z^{(U)}(a^{(\ell)};p,q) \otimes b^{(\ell)} \,:\, U \subseteq [q], \, |U| = p+1, \, \ell \in [r]\}
\end{equation}
where $z^{(U)}(a;p,q) \in \RR^{\binom{q}{p}}$ denotes column $U$ of $A(a;p,q)$.

Fix one particular column of $A$, namely $V := [p+1]$. The column $z^{(V)}$ has a particular sparsity pattern, with nonzeros only allowed in the $p+1$ rows $S$ where $S \subseteq V$. Let $Z_{p,q} \subseteq \RR^{\binom{q}{p}}$ denote the $(p+1)$-dimensional subspace comprised of all vectors with this sparsity pattern, that is,
\[ Z_{p,q} := \Span\{e^{(S)} \,:\, S \subseteq [q], \, |S| = p, \, S \subseteq [p+1]\} \subseteq \RR^{\binom{q}{p}}, \]
where $e^{(S)}$ denotes the standard unit basis vector. We will compute the intersection of the two subspaces $\Colspan(M)$ and $Z_{p,q} \otimes \RR^{n_2}$. By design, the $r$ vectors $z^{(V)}(a^{(\ell)};p,q) \otimes b^{(\ell)}$ for $\ell \in [r]$ lie in the intersection, and we will show that, generically, the intersection does not contain any additional ``spurious'' elements, that is,
\begin{equation}\label{eq:intersect-colspan}
\Colspan(M) \cap (Z_{p,q} \otimes \RR^{n_2}) = \Span\{z^{(V)}(a^{(\ell)};p,q) \otimes b^{(\ell)} \,:\, \ell \in [r]\}.
\end{equation}
It will be convenient to change basis so as to remove the zero entries of $z^{(V)}$. To this end, we will define a specific isomorphism $\varphi_{p,q}: Z_{p,q} \to \RR^{p+1}$, namely the linear map defined on unit basis vectors as follows:
\begin{equation}\label{eq:def-phi}
\varphi_{p,q}(e^{([p+1] \setminus \{i\})}) = (-1)^{i-1} e^{(i)} \qquad \text{for } i \in [p+1].
\end{equation}
This is designed so that $\varphi_{p,q}(z^{(V)}(a;p,q)) = (a_1,\ldots,a_{p+1})^\top$, since entry $V \setminus \{i\}$ of $z^{(V)}(a;p,q)$ is equal to $\sigma(V,i) a_i = (-1)^{i-1} a_i$. Defining $d^{(\ell)}$ as the first $p+1$ entries of $a^{(\ell)}$,
\[ d^{(\ell)} := (a^{(\ell)}_1,\ldots,a^{(\ell)}_{p+1})^\top, \]
this allows us to write~\eqref{eq:intersect-colspan} as
\begin{equation}\label{eq:apply-phi}
(\varphi_{p,q} \otimes I)(\Colspan(M) \cap (Z_{p,q} \otimes \RR^{n_2})) = \Span\{d^{(\ell)} \otimes b^{(\ell)} \,:\, \ell \in [r]\},
\end{equation}
where $I$ is the identity map $\RR^{n_2} \to \RR^{n_2}$.

To recap, we can compute the left-hand side of~\eqref{eq:apply-phi} from the flattening $M$, which gives us access to the right-hand side. This means we know the span of $r$ different rank-1 tensors $d^{(\ell)} \otimes b^{(\ell)}$ (which can also be viewed as rank-1 matrices of dimension $(p+1) \times n_2$) and our next objective will be to extract the individual rank-1 elements $d^{(\ell)} \otimes b^{(\ell)}$. Since our rank-1 tensors are generic, the work of~\cite{JLV} provides an algorithm for precisely this task, under the condition $r \le p(n_2-1)/4$; see Corollary~3 of~\cite{JLV}, which considers a generalization of this task (in their notation, we only need the case $s=R$). This allows us to recover vectors $\hat{d}^{(\ell)}$ and $\hat{b}^{(\ell)}$ for $\ell \in [r]$, such that $\hat{d}^{(\ell)} = \alpha_\ell \cdot d^{(\pi(\ell))}$ and $\hat{b}^{(\ell)} = \beta_\ell \cdot b^{(\pi(\ell))}$ where $\alpha_\ell, \beta_\ell \in \RR$ are nonzero scalars and $\pi$ is a permutation $[r] \to [r]$.

At this point, we repeat the entire algorithm so far with the ``$b$'' and ``$c$'' modes switched. We use the same value for $q$, but use $q-p-1$ in place of $p$ so that the new flattening remains as square as possible. This allows us to obtain vectors $\hat{c}^{(\ell)}$ and $\hat{f}^{(\ell)}$ which serve as estimates for $c^{(\ell)}$ and
\[ f^{(\ell)} := (a^{(\ell)}_1,\ldots,a^{(\ell)}_{q-p})^\top, \]
respectively (again, up to re-ordering and scalar multiple). By comparing the $\hat{d}^{(\ell)}$ and $\hat{f}^{(\ell)}$ vectors we can identify the correct pairing between $\hat{b}^{(\ell)}$ and $\hat{c}^{(\ell)}$, and re-index the $\hat{c}^{(\ell)}$ so that $\hat{b}^{(1)}$ is paired with $\hat{c}^{(1)}$, etc. Now we know the decomposition takes the form
\[ T = \sum_{\ell=1}^r w^{(\ell)} \otimes \hat{b}^{(\ell)} \otimes \hat{c}^{(\ell)} \]
for unknown vectors $w^{(\ell)}$, which can be recovered by solving a linear system of equations; we will show that, generically, this linear system has a unique solution, namely the desired decomposition.

The full algorithm is presented as Algorithm~\ref{alg:decomp} below. The second phase of the algorithm involves a flattening $M' = M'(T;p,q)$ that is identical to $M$ but with the ``$b$'' and ``$c$'' modes switched, and with $q-p-1$ in place of $p$. Formally, $M'(T;p,q)$ is indexed by
\[ \{(S,k) \,:\, S \subseteq [q], \, |S| = q-p-1, \, k \in [n_3]\} \times \{(U,j) \,:\, U \subseteq [q], \, |U| = q-p, \, j \in [n_2]\} \]
with entries
\begin{equation}\label{eq:def-M-prime}
M'_{Sk,Uj} := \sum_{i=1}^q \One_{U = S \sqcup \{i\}} \cdot \sigma(U,i) \cdot T_{ijk}.
\end{equation}

\begin{algorithm}[Overcomplete third-order tensor decomposition]
\label{alg:decomp}
\phantom{}
\begin{itemize}
\item Input: A tensor $T \in \RR^{n_1 \times n_2 \times n_3}$.
\item Input: Parameters $p,q \in \ZZ$ with $q \le n_1$ and $\bar{p} := \min\{p+1,q-p\} \ge 2$.
\item Input: A parameter $r$ (or learn $r$ from $\rank(M)$ using Theorem~\ref{thm:rank-det}).
\end{itemize}
\begin{enumerate}
\item Build the matrix $M = M(T;p,q)$ as in~\eqref{eq:def-M}.
\item Construct the subspace $B := (\varphi_{p,q} \otimes I)(\Colspan(M) \cap (Z_{p,q} \otimes \RR^{n_2})) \subseteq \RR^{p+1} \otimes \RR^{n_2}$.
\item Run the algorithm of~\cite[Corollary~3]{JLV} to find rank-1 tensors in $B$, call them $\hat{d}^{(1)} \otimes \hat{b}^{(1)}, \ldots, \hat{d}^{(r)} \otimes \hat{b}^{(r)}$. If the number of rank-1 tensors found is not $r$, output ``fail.''
\item Build the matrix $M' = M'(T;p,q)$ as in~\eqref{eq:def-M-prime}.
\item Construct the subspace $C := (\varphi_{q-p-1,q} \otimes I)(\Colspan(M') \cap (Z_{q-p-1,q} \otimes \RR^{n_3})) \subseteq \RR^{q-p} \otimes \RR^{n_3}$.
\item Run the algorithm of~\cite[Corollary~3]{JLV} to find rank-1 tensors in $C$, call them $\hat{f}^{(1)} \otimes \hat{c}^{(1)}, \ldots, \hat{f}^{(r)} \otimes \hat{c}^{(r)}$. If the number of rank-1 tensors found is not $r$, output ``fail.''
\item\label{item:p-bar} Define a permutation $\tau: [r] \to [r]$ so that the first $\bar{p}$ entries of $\hat{f}^{(\tau(\ell))}$ form a scalar multiple of the first $\bar{p}$ entries of $\hat{d}^{(\ell)}$ for each $\ell \in [r]$. If this is not possible, output ``fail.''
\item Solve the following linear system of equations in variables $w^{(\ell)} \in \RR^{n_1}$ for $\ell \in [r]$:
\begin{equation}\label{eq:linear}
\sum_{\ell=1}^r w^{(\ell)} \otimes \hat{b}^{(\ell)} \otimes \hat{c}^{(\tau(\ell))} = T,
\end{equation}
and output the resulting rank-$r$ decomposition of $T$. If there is no solution, output ``fail.''
\end{enumerate}
\end{algorithm}

We have assumed $\bar{p} \ge 2$, since otherwise Step~\ref{item:p-bar} will not have a unique solution. If $\bar{p} = 1$, this means either $p=0$ or $p=q-1$, which are the extreme cases where $M(T;p,q)$ essentially reduces to a trivial flattening. In these cases, one should use a simpler decomposition algorithm of~\cite{JLV} based on the trivial flattening, as we mentioned in Section~\ref{sec:trivial}; this algorithm works under the condition~\eqref{eq:r-triv}.

\subsubsection{Uniqueness Theorem and Algorithmic Guarantees}
\label{sec:uniqueness-thm}

We will see that the discussion above gives not only a decomposition algorithm that works for generic rank-$r$ tensors but also an explicit list of efficiently-checkable conditions on the components that guarantees uniqueness of the decomposition. We first present these conditions (Theorem~\ref{thm:uniqueness}) and then show that generic components satisfy these conditions when $r$ is small enough (Theorem~\ref{thm:generic-success}).

Some of the conditions below may appear rather opaque, so we first briefly motivate what role they play. It is natural for the matrix $M$ to appear (condition~\ref{item:M}), given its use as a rank detection device (Theorem~\ref{thm:rank-det}) and its role in the decomposition algorithm, but we will also define two additional matrices $N$ and $P$. The condition~\ref{item:N} on $N$ is used to ensure that~\eqref{eq:intersect-colspan} holds (no spurious elements in the subspace intersection), while the condition~\ref{item:P} on $P$ is used to ensure that the~\cite{JLV} algorithm succeeds. The matrices $M', N', P'$ are identical to $M,N,P$ but with the ``$b$'' and ``$c$'' modes of the tensor switched and with $p$ replaced by $q-p-1$.

\begin{definition}
\label{def:uniqueness-def}
Consider an $n_1 \times n_2 \times n_3$ tensor $T$ with decomposition
\begin{equation}
T = \sum_{\ell=1}^r a^{(\ell)} \otimes b^{(\ell)} \otimes c^{(\ell)}
\end{equation}
and parameters $p,q \in \ZZ$ with $q \le n_1$ and $\bar{p} := \min\{p+1,q-p\} \ge 2$.
\begin{itemize}
    \item The matrix $N$ has rows indexed by
    \[ \{(S,j) \,:\, S \subseteq [q], \, |S| = p, \, S \not\subseteq [p+1], \, j \in [n_2]\} \]
    and columns indexed by
    \[ \{(U,\ell) \,:\, U \subseteq [q], \, |U| = p+1, \, 1 \in U, \, U \ne [p+1], \, \ell \in [r]\}, \]
    with entries
    \begin{equation}\label{eq:def-N}
    N_{Sj,U\ell} = b^{(\ell)}_j \cdot \sum_{i=1}^q \One_{U = S \sqcup \{i\}} \cdot \sigma(U,i) \cdot a^{(\ell)}_i,
    \end{equation}
    where $\sigma$ is defined in~\eqref{eq:def-sig}.
    \item The matrix $N'$ has rows indexed by
    \[ \{(S,k) \,:\, S \subseteq [q], \, |S| = q-p-1, \, S \not\subseteq [q-p], \, k \in [n_3]\} \]
    and columns indexed by
    \[ \{(U,\ell) \,:\, U \subseteq [q], \, |U| = q-p, \, 1 \in U, \, U \ne [q-p], \, \ell \in [r]\}, \]
    with entries
    \[ N'_{Sk,U\ell} = c^{(\ell)}_k \cdot \sum_{i=1}^q \One_{U = S \sqcup \{i\}} \cdot \sigma(U,i) \cdot a^{(\ell)}_i. \]
    \item The matrix $P$ is indexed by
    \[ \{(i_1,i_2,j_1,j_2) \,:\, 1 \le i_1 < i_2 \le p+1, \, 1 \le j_1 < j_2 \le n_2\} \times \{(\ell_1,\ell_2) \,:\, 1 \le \ell_1 < \ell_2 \le r\} \]
    with entries
    \begin{equation}\label{eq:def-P}
    P_{i_1 i_2 j_1 j_2 , \ell_1 \ell_2} = a^{(\ell_1)}_{i_1} b^{(\ell_1)}_{j_1} a^{(\ell_2)}_{i_2} b^{(\ell_2)}_{j_2} + a^{(\ell_2)}_{i_1} b^{(\ell_2)}_{j_1} a^{(\ell_1)}_{i_2} b^{(\ell_1)}_{j_2} - a^{(\ell_1)}_{i_1} b^{(\ell_1)}_{j_2} a^{(\ell_2)}_{i_2} b^{(\ell_2)}_{j_1} - a^{(\ell_2)}_{i_1} b^{(\ell_2)}_{j_2} a^{(\ell_1)}_{i_2} b^{(\ell_1)}_{j_1}.
    \end{equation}
    \item The matrix $P'$ is indexed by
    \[ \{(i_1,i_2,j_1,j_2) \,:\, 1 \le i_1 < i_2 \le q-p, \, 1 \le j_1 < j_2 \le n_3\} \times \{(\ell_1,\ell_2) \,:\, 1 \le \ell_1 < \ell_2 \le r\} \]
    with entries
    \[ P'_{i_1 i_2 j_1 j_2 , \ell_1 \ell_2} = a^{(\ell_1)}_{i_1} c^{(\ell_1)}_{j_1} a^{(\ell_2)}_{i_2} c^{(\ell_2)}_{j_2} + a^{(\ell_2)}_{i_1} c^{(\ell_2)}_{j_1} a^{(\ell_1)}_{i_2} c^{(\ell_1)}_{j_2} - a^{(\ell_1)}_{i_1} c^{(\ell_1)}_{j_2} a^{(\ell_2)}_{i_2} c^{(\ell_2)}_{j_1} - a^{(\ell_2)}_{i_1} c^{(\ell_2)}_{j_2} a^{(\ell_1)}_{i_2} c^{(\ell_1)}_{j_1}. \]
\end{itemize}
\end{definition}

\begin{theorem}[Uniqueness theorem]
\label{thm:uniqueness}
Consider an $n_1 \times n_2 \times n_3$ tensor $T$ with decomposition
\begin{equation}\label{eq:rank-r-decomp}
T = \sum_{\ell=1}^r a^{(\ell)} \otimes b^{(\ell)} \otimes c^{(\ell)}
\end{equation}
and parameters $p,q \in \ZZ$ with $q \le n_1$ and $\bar{p} := \min\{p+1,q-p\} \ge 2$. Suppose the components satisfy the following conditions:
\begin{enumerate}[label=(\roman*)]
    \item\label{item:a1} $a^{(\ell)}_1 \ne 0$ for all $\ell \in [r]$.
    \item\label{item:d-distinct} The $r$ vectors $(a^{(\ell)}_1,\ldots,a^{(\ell)}_{\bar{p}})^\top$ for $\ell \in [r]$ are pairwise linearly independent, i.e., no one of these vectors is a scalar multiple of another.
    \item\label{item:db-li} The $r$ vectors $\{d^{(\ell)} \otimes b^{(\ell)} \,:\, \ell \in [r]\}$ are linearly independent, where $d^{(\ell)} := (a^{(\ell)}_1,\ldots,a^{(\ell)}_{p+1})^\top$.
    \item\label{item:dc-li} The $r$ vectors $\{f^{(\ell)} \otimes c^{(\ell)} \,:\, \ell \in [r]\}$ are linearly independent, where $f^{(\ell)} := (a^{(\ell)}_1,\ldots,a^{(\ell)}_{q-p})^\top$.
    \item\label{item:bc-li} The $r$ vectors $\{b^{(\ell)} \otimes c^{(\ell)} \,:\, \ell \in [r]\}$ are linearly independent.
    \item\label{item:M} The matrix $M = M(T;p,q)$ from~\eqref{eq:def-M} has rank exactly $r\binom{q-1}{p}$.
    \item\label{item:M-prime} The matrix $M' = M'(T;p,q)$ from~\eqref{eq:def-M-prime} has rank exactly $r\binom{q-1}{p}$.
    \item\label{item:N} The matrix $N$ from Definition~\ref{def:uniqueness-def} has full column rank.
    \item\label{item:N-prime} The matrix $N'$ from Definition~\ref{def:uniqueness-def} has full column rank.
    \item\label{item:P} The matrix $P$ from Definition~\ref{def:uniqueness-def} has full column rank.
    \item\label{item:P-prime} The matrix $P'$ from Definition~\ref{def:uniqueness-def} has full column rank.
\end{enumerate}
Then~\eqref{eq:rank-r-decomp} is the unique rank-$r$ decomposition of $T$ (in the sense of Definition~\ref{def:unique}), and this decomposition is recovered by Algorithm~\ref{alg:decomp} (with input $T,p,q,r$).
\end{theorem}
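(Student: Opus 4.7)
The plan is to track Algorithm~\ref{alg:decomp} step by step, verifying that each line succeeds under the stated hypotheses, and then to derive the uniqueness clause from the fact that every object the algorithm computes depends only on $T$. First I would analyze $M$: by linearity, $M = \sum_{\ell=1}^r A(a^{(\ell)};p,q) \otimes (b^{(\ell)} c^{(\ell)\top})$. Under condition~\ref{item:a1}, restricting $A(a^{(\ell)};p,q)$ to rows $\{S : 1 \notin S\}$ and columns $\{U : 1 \in U\}$ gives a permuted diagonal matrix with entries $a_1^{(\ell)} \ne 0$, so those $\binom{q-1}{p}$ columns are linearly independent, and combined with the rank upper bound from Lemma~\ref{lem:rank-A} they span $\Colspan(A(a^{(\ell)};p,q))$. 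Thus $\{z^{(U)}(a^{(\ell)};p,q) \otimes b^{(\ell)} : 1 \in U,\, \ell \in [r]\}$ spans $\Colspan(M)$, and by~\ref{item:M} these $r\binom{q-1}{p}$ vectors form a basis.

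Next I would compute the intersection $\Colspan(M) \cap (Z_{p,q} \otimes \RR^{n_2})$. The $r$ vectors indexed by $U = V := [p+1]$ lie in both subspaces and remain linearly independent after applying $\varphi_{p,q}$ by~\ref{item:db-li}. For the reverse inclusion, I would parameterize a generic element by coefficients $\lambda_{U,\ell}$ over $1 \in U,\,\ell \in [r]$, split off the $U=V$ contribution (which already lies in $Z_{p,q} \otimes \RR^{n_2}$), and require the $(S,j)$-coordinates with $S \not\subseteq [p+1]$ to vanish; this constraint is precisely the system $N\lambda' = 0$ where $\lambda'$ indexes $\{(U,\ell) : 1 \in U,\, U \ne V\}$ and $N$ is the matrix from Definition~\ref{def:uniqueness-def}. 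Condition~\ref{item:N} then forces $\lambda' = 0$, establishing~\eqref{eq:apply-phi} and hence $B = \Span\{d^{(\ell)} \otimes b^{(\ell)} : \ell \in [r]\}$. The hypothesis~\ref{item:P} is exactly the full-column-rank condition that~\cite[Corollary~3]{JLV} requires for rank-1 detection, so JLV outputs $r$ lines $\hat d^{(\ell)} \otimes \hat b^{(\ell)}$, each proportional to some $d^{(\pi(\ell))} \otimes b^{(\pi(\ell))}$. Running the identical argument on $M'$ with modes swapped and $p$ replaced by $q-p-1$, conditions~\ref{item:M-prime},~\ref{item:dc-li},~\ref{item:N-prime},~\ref{item:P-prime} produce $\hat f^{(\ell)} \otimes \hat c^{(\ell)}$ proportional to $f^{(\sigma(\ell))} \otimes c^{(\sigma(\ell))}$. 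Because $d^{(\ell)}$ and $f^{(\ell)}$ share their first $\bar p$ entries $(a_1^{(\ell)},\ldots,a_{\bar p}^{(\ell)})^\top$ and these truncated vectors are pairwise non-proportional by~\ref{item:d-distinct}, Step~\ref{item:p-bar} pins down a unique $\tau$ and hence a single permutation $\rho$ with $\hat b^{(\ell)} \propto b^{(\rho(\ell))}$ and $\hat c^{(\tau(\ell))} \propto c^{(\rho(\ell))}$. Equation~\eqref{eq:linear} then decouples slicewise; each per-slice equation is a linear combination of the rank-1 matrices $\hat b^{(\ell)} \otimes \hat c^{(\tau(\ell))}$, which are nonzero scalar multiples of $b^{(\rho(\ell))} \otimes c^{(\rho(\ell))}$, linearly independent by~\ref{item:bc-li}, so the system has a unique solution and assembly yields a valid rank-$r$ decomposition matching the input up to scaling and permutation. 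Uniqueness in the sense of Definition~\ref{def:unique} is then automatic: the subspace $B$, the set of rank-1 lines it contains, and the per-slice linear solves are all functions of $T$ alone, so any other rank-$r$ decomposition must produce the same $\{b^{(\ell)}\}$ and $\{c^{(\ell)}\}$ (up to scaling and a common permutation), forcing the $a^{(\ell)}$ to match as well.

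The main obstacle will be the intersection calculation: one must commit to the basis of $\Colspan(M)$ enabled by~\ref{item:a1} and~\ref{item:M}, cleanly peel off the distinguished $U = V$ contribution, and verify that the residual coefficient matrix is \emph{literally} the matrix $N$ of Definition~\ref{def:uniqueness-def} with the signs $\sigma(U,i)$ reproduced exactly; this bookkeeping over indexed subsets and their relative orientations is the step where errors are most likely to hide. A secondary technical check is confirming that~\ref{item:P} matches the full-rank hypothesis of~\cite[Corollary~3]{JLV} when specialized to the rank-1 matrices $d^{(\ell)} \otimes b^{(\ell)}$ spanning $B$, which amounts to identifying $P$ with the antisymmetric ``FOOBI'' pairing matrix underlying JLV's rank-1 detection linear system.
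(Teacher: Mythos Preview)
Your plan matches the paper's argument almost exactly: the analysis of $\Colspan(M)$ via condition~\ref{item:a1} and Lemma~\ref{lem:rank-A}, the identification of the intersection with the kernel condition $N\lambda'=0$, the invocation of JLV via~\ref{item:P}, the symmetric treatment of $M'$, the matching step via~\ref{item:d-distinct}, and the slicewise linear solve via~\ref{item:bc-li} are all exactly what the paper does in Section~\ref{sec:pf-uniqueness} (with Lemmas~\ref{lem:subspace-intersect} and~\ref{lem:JLV} as the two boxed-off subroutines).

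The one place your plan is too quick is the uniqueness clause. You write that uniqueness is ``automatic'' because $B$, its rank-1 lines, and the per-slice solves are functions of $T$ alone. That only shows the \emph{algorithm's output} is well-defined; it does not yet tie an arbitrary second decomposition $T=\sum_\ell \tilde a^{(\ell)}\otimes\tilde b^{(\ell)}\otimes\tilde c^{(\ell)}$ to that output. The missing step is to argue that the alternative components themselves land in $B$: since $M=\sum_\ell A(\tilde a^{(\ell)})\otimes(\tilde b^{(\ell)}\tilde c^{(\ell)\top})$ and $\rank(M)=r\binom{q-1}{p}$ while each $A(\tilde a^{(\ell)})$ has rank at most $\binom{q-1}{p}$, rank additivity is forced for the alternative decomposition too, so $z^{(V)}(\tilde a^{(\ell)})\otimes\tilde b^{(\ell)}\in\Colspan(M)\cap(Z_{p,q}\otimes\RR^{n_2})$ and hence $\tilde d^{(\ell)}\otimes\tilde b^{(\ell)}\in B$. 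Only then does the JLV conclusion pin each $\tilde d^{(\ell)}\otimes\tilde b^{(\ell)}$ to some $d^{(\ell')}\otimes b^{(\ell')}$. The paper carries this out explicitly (equations~\eqref{eq:colspan-M-alt} and the paragraph following Lemma~\ref{lem:JLV}), and also handles the possibility that some $\tilde a^{(\ell)}_1=0$ via a separate slice-$T_1$ argument using~\ref{item:a1} and~\ref{item:bc-li} to rule that case out. You should add these two ingredients; without them, the determinism of the algorithm does not by itself force an alternative decomposition to agree with the planted one.
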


\noindent The proof of Theorem~\ref{thm:uniqueness} can be found in Section~\ref{sec:pf-uniqueness}. It may be possible to simplify the above conditions or to find conditions that are ``minimal'' in some sense, but we have not attempted this here. If the above conditions fail to hold for a particular decomposition, one can potentially remedy this by permuting the modes of the tensor or by applying an invertible change of basis on one or more of the modes. Next we show that the above conditions hold for ``almost all'' tensors of low enough rank.

\begin{theorem}[Success for generic tensors]
\label{thm:generic-success}
Consider the setting of Theorem~\ref{thm:uniqueness} with parameters $q \le n_1$ and
\[ p = \left\lfloor q \cdot \frac{n_3}{n_2+n_3} \right\rfloor. \]
Suppose
\[ q \ge (4+5\alpha)\left(1+\frac{1}{\alpha}\right) \qquad \text{where } \alpha := \max\left\{\frac{n_2}{n_3},\frac{n_3}{n_2}\right\}. \]
If the components $a^{(\ell)}, b^{(\ell)}, c^{(\ell)}$ are generically chosen with
\begin{equation}\label{eq:final-r-bound}
r \le (n_2+n_3)\left(1-\frac{3+\alpha}{q}\right) - \frac{q^3}{4}
\end{equation}
then $\bar{p} \ge 2$ and conditions~\ref{item:a1}--\ref{item:P-prime} are satisfied.
\end{theorem}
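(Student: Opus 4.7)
The plan is to reduce the theorem to a finite intersection of generic (Zariski-open) conditions on the components. Each of the eleven conditions in Theorem~\ref{thm:uniqueness} can be phrased as the non-vanishing of a specific polynomial in the entries of $\{a^{(\ell)}, b^{(\ell)}, c^{(\ell)}\}$: for a rank condition, the polynomial is a maximal-order minor of the relevant matrix. By the standard argument in Section~\ref{sec:genericity}, each such polynomial either vanishes identically or defines a proper algebraic subset (of measure zero), and a finite union of measure-zero sets is still measure zero. Thus it suffices to show, separately for each condition, that the associated polynomial is not identically zero, i.e., there is at least one choice of components where the condition holds. The hypothesis $\bar{p} \ge 2$ follows from the lower bound on $q$ together with $p = \lfloor q n_3/(n_2+n_3)\rfloor$, which forces both $p+1 \ge 2$ and $q-p \ge 2$.

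The easy conditions~\ref{item:a1},~\ref{item:d-distinct},~\ref{item:bc-li} are immediate non-vanishing statements that hold for some specific components (for~\ref{item:bc-li}, also using the dimensional bound $r \le n_2 n_3$, which follows from $r \le n_2 + n_3$ when $n_2,n_3 \ge 2$). Conditions~\ref{item:db-li} and~\ref{item:dc-li} reduce to generic linear independence of $r$ rank-1 matrices in $\RR^{p+1}\otimes\RR^{n_2}$ and $\RR^{q-p}\otimes\RR^{n_3}$; using $p+1 \ge q n_3/(n_2+n_3)$ and $q-p \ge q n_2/(n_2+n_3)$, both ambient-dimension inequalities are amply implied by~\eqref{eq:final-r-bound} and the lower bound on $q$. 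Conditions~\ref{item:M} and~\ref{item:M-prime} are direct applications of Theorem~\ref{thm:rank-det}: the first applies to $T$ with the stated $p,q$, and the second to the mode-swapped tensor with parameters $(q-p-1, q)$ in place of $(p,q)$. Both invocations require that~\eqref{eq:final-r-bound} is at least as strong as the rank bound in Theorem~\ref{thm:rank-det}; comparing the two, the slack $(n_2+n_3)\cdot 2/q + q^3/4 - q$ is positive for $q \ge 2$, handling this.

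The non-trivial work concerns conditions~\ref{item:N},~\ref{item:N-prime},~\ref{item:P},~\ref{item:P-prime}. For~\ref{item:P} and~\ref{item:P-prime}, a brief calculation (grouping the four summands in~\eqref{eq:def-P} as $d_{i_1}d'_{i_2}(b_{j_1}b'_{j_2}-b_{j_2}b'_{j_1}) - d'_{i_1}d_{i_2}(b_{j_1}b'_{j_2}-b_{j_2}b'_{j_1})$) reveals
\[ P_{i_1 i_2 j_1 j_2, \ell_1 \ell_2} \;=\; \det\!\begin{pmatrix} a^{(\ell_1)}_{i_1} & a^{(\ell_2)}_{i_1} \\ a^{(\ell_1)}_{i_2} & a^{(\ell_2)}_{i_2}\end{pmatrix}\,\cdot\,\det\!\begin{pmatrix} b^{(\ell_1)}_{j_1} & b^{(\ell_2)}_{j_1} \\ b^{(\ell_1)}_{j_2} & b^{(\ell_2)}_{j_2}\end{pmatrix}, \]
so $P$ is precisely the ``wedge'' matrix whose generic full column rank drives the analysis of the FOOBI-style algorithm used in Step~3 of Algorithm~\ref{alg:decomp}. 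Its generic invertibility for $r$ satisfying a bound of the form $r \le p(n_2-1)/4$ is established as part of the proof of~\cite[Corollary~3]{JLV}, and one verifies that the subtraction $-q^3/4$ in~\eqref{eq:final-r-bound} provides precisely the margin needed to derive this inequality from our hypothesis once $p+1 \ge q n_3/(n_2+n_3)$ is plugged in. Condition~\ref{item:P-prime} follows by the same argument with modes 2 and 3 swapped.

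The main obstacle lies with conditions~\ref{item:N} and~\ref{item:N-prime}, as the matrix $N$ has an intricate combinatorial structure tied to the Koszul--Young flattening and does not reduce to a standard FOOBI-type object. The plan is to group the columns of $N$ by the index $\ell$, observing that the $\ell$-th block factors as $N^{(\ell)} = A^{(\ell)}_\ast \otimes b^{(\ell)}$, where $A^{(\ell)}_\ast$ is the submatrix of $A(a^{(\ell)}; p, q)$ (see~\eqref{eq:def-A}) obtained by restricting to the admissible rows ($|S|=p$, $S\not\subseteq[p+1]$) and columns ($|U|=p+1$, $1\in U$, $U\ne[p+1]$). Then exhibit a specific witness: take $a^{(\ell)} = (1,\alpha_\ell,\ldots,\alpha_\ell^{n_1-1})$ for distinct $\alpha_\ell$ and let $b^{(\ell)}$ be a generic vector. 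For this choice, the Vandermonde structure of each $A^{(\ell)}_\ast$ (echoing the argument behind Proposition~\ref{prop:rank} in Section~\ref{sec:pf-prop-rank}) combined with a Cauchy--Binet expansion over blocks yields full column rank, provided the dimension inequality $r(\binom{q-1}{p}-1) \le n_2(\binom{q}{p}-(p+1))$ holds. Dividing by $\binom{q-1}{p}$ and substituting $p \approx q n_3/(n_2+n_3)$, this simplifies to the requirement $r \lesssim n_2 + n_3$, which is absorbed by the leading terms of~\eqref{eq:final-r-bound}; the corrections linear in $1/q$ and cubic in $q$ are comfortably covered by the assumption $q \ge (4+5\alpha)(1+1/\alpha)$. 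The same argument with modes swapped handles~\ref{item:N-prime}.
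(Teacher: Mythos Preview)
Your overall architecture is right, and your handling of conditions~\ref{item:a1}--\ref{item:M-prime} and~\ref{item:P},~\ref{item:P-prime} matches the paper's proof closely (including the appeal to~\cite[Corollary~3]{JLV} for the $P$ matrices, and the reduction of~\ref{item:M},~\ref{item:M-prime} to Theorem~\ref{thm:rank-det}). The gap is in your treatment of~\ref{item:N} and~\ref{item:N-prime}. What you have written is essentially a dimension count: you verify that $N$ has at least as many rows as columns under~\eqref{eq:final-r-bound}, and then assert that a Vandermonde choice of $a^{(\ell)}$ plus ``a Cauchy--Binet expansion over blocks'' yields full column rank. But the dimension count is only necessary, not sufficient, and you have not explained what the Cauchy--Binet step actually is or why it succeeds. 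The block factorization $N^{(\ell)} = A^{(\ell)}_\ast \otimes b^{(\ell)}$ is correct, but the blocks $A^{(\ell)}_\ast$ vary with $\ell$, so there is no single Kronecker product to exploit, and the proof of Proposition~\ref{prop:rank} you cite does not use any Vandermonde structure---it shows $\rank A(a;p,q)=\binom{q-1}{p}$ for \emph{every} nonzero $a$ by a direct argument about column dependencies, which gives no leverage on the joint independence across $\ell$.

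The paper's proof (Lemma~\ref{lem:N}, Section~\ref{sec:pf-N}) is substantially more involved and does not proceed via a Vandermonde specialization. Instead it replaces $N$ by an equivalent matrix $L$ in which different basis columns $\{U:i\in U\}$ of $A(a^{(\ell)})$ are used for different $\ell$, restoring symmetry among $i\in[p+1]$. It then exhibits an explicit combinatorial pairing of every column $(i,j,\ell,U)$ to a row $(m,S)$: columns are grouped into ``epochs'' and ``pillars,'' with two epoch types governing whether $S=U\setminus\{i\}$ or $S=U\setminus\{j\}$, and $m$ assigned by a first-available rule. The resulting square submatrix is shown to have nonzero determinant by a \emph{unique monomial} argument---one term in the permutation expansion produces a monomial in the $a,b$ variables that no other term can replicate. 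The $-q^3/4$ correction in~\eqref{eq:final-r-bound} arises precisely from counting how many $m$-values each $S$ consumes in this construction; it is not merely slack to absorb lower-order terms, as your sketch suggests. Without an argument of comparable specificity, the full-column-rank claim for $N$ is unproven.
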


\noindent The proof can be found in Section~\ref{sec:pf-generic-success}. This is qualitatively the same behavior as our rank detection result. Suppose $n_1 \le n_2 \le n_3$ and consider the regime $n_1 \to \infty$ with $\alpha$ of constant order. For any constant $\epsilon > 0$ we can handle $r$ as large as $(1-\epsilon)(n_2+n_3)$ by taking $q$ to be a sufficiently large constant (depending on $\alpha,\epsilon$). As long as $q$ is a constant, the runtime of Algorithm~\ref{alg:decomp} is polynomial in the largest dimension $n_3$ (more precisely, polynomial in $n_3 \cdot 2^{\alpha/\epsilon}$).

\subsection{Lower Bounds}
\label{sec:lower}

We have given an algorithm for tensor decomposition based on flattening the tensor to a matrix in a non-trivial manner: $M = M(T)$. We now explore the inherent limitations of this style of approach. A seemingly necessary prerequisite to finding a better decomposition algorithm is to find a better algorithm for the easier task of rank detection. We therefore focus on rank detection, and specifically we will show that certain types of flattenings cannot have ``additivity of rank'' when the tensor rank is too large. This additivity property was key to our rank detection algorithm (Theorem~\ref{thm:rank-det}) and was also crucial for our decomposition algorithm because it ensures that the column span of $M(\sum_\ell a^{(\ell)} \otimes b^{(\ell)} \otimes c^{(\ell)})$ retains all vectors from the individual column spans of the terms $M(a^{(\ell)} \otimes b^{(\ell)} \otimes c^{(\ell)})$. We will give three different lower bounds that apply to increasingly general families of flattenings.

\subsubsection{Flattenings of Koszul--Young Type}

The flattenings used in our algorithm have the property that two modes are flattened in the trivial manner while the third is flattened according to a non-trivial linear map. Formally, we have a linear map $M: \RR^{n_1 \times n_2 \times n_3} \to \RR^{N_1 \times N_2}$ (or in other words, the entries of $M$ are degree-1 homogeneous polynomials in the variables $T_{ijk}$) with the property
\begin{equation}\label{eq:ky-type}
M(a \otimes b \otimes c) = A(a) \otimes (bc^\top)
\end{equation}
for a matrix $A(a)$ whose entries are degree-1 homogeneous polynomials in the entries of $a$. On the right-hand side above, the symbol $\otimes$ denotes the Kronecker product of matrices. We will first see that any flattening of this form cannot surpass rank $n_2 + n_3$, making our algorithm optimal up to a factor $1-\epsilon$ among such flattenings (in the regime $n_1 \le n_2 \le n_3$ with $n_1 \to \infty$ and $n_3/n_2 = O(1)$).

\begin{theorem}\label{thm:lower-ky}
Suppose $M: \RR^{n_1 \times n_2 \times n_3} \to \RR^{N_1 \times N_2}$ is a linear map with the property~\eqref{eq:ky-type}. Suppose further that for generically chosen components $a^{(\ell)}, b^{(\ell)}, c^{(\ell)}$ and some $r \ge 1$, we have rank-$r$ additivity, that is,
\begin{equation}\label{eq:def-additivity}
\rank\left(M\left(\sum_{\ell=1}^r a^{(\ell)} \otimes b^{(\ell)} \otimes c^{(\ell)}\right)\right) = r \cdot \rank\left(M\left(a^{(1)} \otimes b^{(1)} \otimes c^{(1)}\right)\right).
\end{equation}
Then
\[ r \le n_2 + n_3. \]
\end{theorem}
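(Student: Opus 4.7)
The plan is to translate the additivity assumption \eqref{eq:def-additivity} into two dual subspace-injectivity conditions and then extract the bound via a joint dimension count that couples the $b$-side and $c$-side.

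Fix a rank-$\rho$ factorization $A(a^{(\ell)}) = \sum_{k=1}^{\rho} x^{(\ell,k)} y^{(\ell,k)\top}$ of the $\alpha\times\beta$ matrix $A(a^{(\ell)})$ (of generic rank $\rho$), so that
\[ M(T) = \sum_{\ell,k}\bigl(x^{(\ell,k)} \otimes b^{(\ell)}\bigr)\bigl(y^{(\ell,k)} \otimes c^{(\ell)}\bigr)^{\top}. \]
Setting $V(a):=\mathrm{col}(A(a))$ and $W(a):=\mathrm{row}(A(a))$, rank-$r\rho$ additivity is equivalent to the injectivity of both $\Phi_{b}: \bigoplus_{\ell} V(a^{(\ell)}) \to \RR^{\alpha}\otimes\RR^{n_{2}}$, $(z^{(\ell)}) \mapsto \sum z^{(\ell)}\otimes b^{(\ell)}$, and the analogous $\Phi_{c}$ on the $c$-side. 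Now suppose toward a contradiction that $r>n_{2}+n_{3}$. For generic $b^{(\ell)},c^{(\ell)}$, the right kernels of $B=[b^{(1)}|\cdots|b^{(r)}]$ and $C=[c^{(1)}|\cdots|c^{(r)}]$ in $\RR^{r}$ have dimensions $r-n_{2}$ and $r-n_{3}$; since these exceed $r$ in sum, $\ker B\cap\ker C$ is nontrivial of dimension $\ge r-n_{2}-n_{3}>0$. Fix a nonzero $\omega$ in this intersection with all coordinates nonzero, giving $\sum_{\ell}\omega_{\ell}b^{(\ell)}=0$ and $\sum_{\ell}\omega_{\ell}c^{(\ell)}=0$ simultaneously.

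The main step is to use $\omega$ to build a nontrivial element of $\ker\Phi_{b}$ (or symmetrically $\ker\Phi_{c}$): for $v\in\RR^{\beta}$ and $u\in\RR^{\alpha}$, set $z^{(\ell)}:=\omega_{\ell}A(a^{(\ell)})v\in V(a^{(\ell)})$ and $y^{(\ell)}:=\omega_{\ell}A(a^{(\ell)})^{\top}u\in W(a^{(\ell)})$. Pairing the two constructions via the natural isomorphism $A(a^{(\ell)}):W(a^{(\ell)})\to V(a^{(\ell)})$ and counting parameters, the triple $(u,v,\omega)$ yields roughly $(\alpha+\beta)(r-n_{2}-n_{3})$ degrees of freedom while the two injectivity conditions impose at most $r\rho$ constraints in total; for $r>n_{2}+n_{3}$ this parameter-vs-constraint imbalance produces a nontrivial joint kernel element, contradicting additivity.

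The principal obstacle will be making this coupled dimension count rigorous, because the one-sided kernel calculations individually give only the weaker bound $r\le\min(\alpha n_{2},\beta n_{3})/\rho$, which can strictly exceed $n_{2}+n_{3}$ for flattenings with large $\alpha,\beta$ relative to $\rho$; extracting the sharper $r\le n_{2}+n_{3}$ therefore genuinely requires exploiting the joint $\omega\in\ker B\cap\ker C$ relation rather than each kernel in isolation. If the direct construction stalls, a backup route is to invoke a Flanders/Atkinson structure theorem for the linear subspace $A(\RR^{n_{1}})\subseteq\mathrm{Mat}_{\alpha\times\beta}$ of rank-$\le\rho$ matrices: in the compression case (a fixed column or row direction), the even stronger bound $r\le\min(n_{2},n_{3})$ follows at once from $V_{\infty}=V_{0}$ being $\rho$-dimensional; in the non-compression case, the rigidity of the family $\{V(a)\}$ matches the extremal Koszul--Young situation, where direct computation of the simple bound already yields $r\le n_{2}+n_{3}$.
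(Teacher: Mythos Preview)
Your proposal has a genuine gap: the ``coupled dimension count'' in the middle paragraph is not a proof, and you yourself flag this in the final paragraph. The construction $z^{(\ell)}=\omega_\ell A(a^{(\ell)})v$ does \emph{not} land in $\ker\Phi_b$: we would need $\sum_\ell \omega_\ell A(a^{(\ell)})v\otimes b^{(\ell)}=0$, but $A(a^{(\ell)})v$ varies with $\ell$, so this is a different linear relation among the $b^{(\ell)}$ for each row of $A$, not the single relation $\omega$ you have in hand. Your parameter count ``$(\alpha+\beta)(r-n_2-n_3)$ degrees of freedom versus $r\rho$ constraints'' is not correct either (the parameters are additive, $\alpha+\beta+(r-n_2-n_3)$, and the constraint count is $\alpha n_2$, not $r\rho$), and in any case a naive parameter count cannot establish the existence of a nontrivial kernel element without further structure. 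The Flanders/Atkinson backup handles only the compression case; the non-compression case is waved away.

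The missing idea is that your factorization $A(a^{(\ell)})=\sum_k x^{(\ell,k)}y^{(\ell,k)\top}$ has factors depending on $\ell$, which is fatal. The paper instead exploits that the entries of $A(a)$ are degree-$1$ homogeneous in $a$: by the matrix decomposition lemma (Lemma~\ref{lem:matrix-decomp}), one can write $A(a)=\sum_{t=1}^{s} H_1[f_t(a)g_t(a)^\top]$ with $s=\rho$, and since the degree is $1$, each term splits as $H_0[f_t]\,H_1[g_t(a)]^\top+H_1[f_t(a)]\,H_0[g_t]^\top$ where the $H_0$ parts are \emph{constant vectors, independent of $a$ and hence of $\ell$}. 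This gives $M(T)=M_0+M_1$ with $\Colspan(M_0)\subseteq\Span\{H_0[f_t]\}\otimes\RR^{n_2}$ (dimension $\le sn_2$) and $\Rowspan(M_1)\subseteq\Span\{H_0[g_t]\}\otimes\RR^{n_3}$ (dimension $\le sn_3$), so $\rank M(T)\le s(n_2+n_3)$; additivity then forces $rs\le s(n_2+n_3)$. The whole point is that the left (resp.\ right) factors in one half of the decomposition do not move with $\ell$, which is precisely what your rank-$\rho$ factorization of each $A(a^{(\ell)})$ separately cannot provide.
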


\noindent The proof can be found in Section~\ref{sec:pf-lower-ky}. We thank an anonymous reviewer for pointing out that the case $n_2=n_3$ of this result appears in a remark after Theorem~6.1 of~\cite{nc-rank}. Recall from Section~\ref{sec:genericity} that symbolic matrices have a well-defined notion of generic rank, so the expressions on the left- and right-hand sides of~\eqref{eq:def-additivity} must each take one specific value when the components are chosen generically. (We cannot have a situation where some positive-measure set of components leads to rank $r_1$ while some other positive-measure set of components leads to a different rank $r_2$.) The right-hand side of~\eqref{eq:def-additivity} could equivalently be written as
\[ \sum_{\ell=1}^r \rank\left(M\left(a^{(\ell)} \otimes b^{(\ell)} \otimes c^{(\ell)}\right)\right), \]
since for generic components, the $r$ terms in this sum are all equal.

\subsubsection{Linear Flattenings}

We now consider a more general class of flattenings, namely all linear flattenings (without the requirement~\eqref{eq:ky-type}). Our bound on the rank will be weaker than above.

\begin{theorem}\label{thm:lower-linear}
Suppose $M: \RR^{n_1 \times n_2 \times n_3} \to \RR^{N_1 \times N_2}$ is a linear map. Suppose further that for generically chosen components and some $r \ge 1$, we have rank-$r$ additivity as defined in~\eqref{eq:def-additivity}. Then
\[ r \le 2(n_1 + n_2 + n_3 + 1). \]
\end{theorem}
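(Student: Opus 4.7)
Set $\phi(a,b,c) := M(a \otimes b \otimes c)$, let $\rho$ be its generic rank, and let $\mathcal{C} := \overline{\phi(\RR^{n_1+n_2+n_3})} \subseteq \RR^{N_1 \times N_2}$ be the image variety. Then $\mathcal{C}$ is irreducible of dimension at most $n_1+n_2+n_3-2$ (after modding out the two-dimensional scaling torus $(a,b,c) \mapsto (\lambda a,\mu b,(\lambda\mu)^{-1}c)$ acting on the fibers of $\phi$), and every element of $\mathcal{C}$ has rank at most $\rho$. The rank-$r$ additivity hypothesis~\eqref{eq:def-additivity} is then equivalent to the containment $\sigma_r(\mathcal{C}) \subseteq \mathcal{R}_{r\rho}$ of the $r$-th secant into the rank-$\le r\rho$ determinantal locus, with equality of ranks holding on a Zariski-open subset of $\sigma_r(\mathcal{C})$.

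\paragraph{Tangent containment via Terracini.} At a generic secant point $X = \sum_{\ell=1}^r \phi_\ell$ with $\phi_\ell := \phi(a^{(\ell)},b^{(\ell)},c^{(\ell)})$, Terracini's lemma gives $T_X \sigma_r(\mathcal{C}) = \sum_\ell T_{\phi_\ell} \mathcal{C}$, and trilinearity of $\phi$ yields the explicit description
\[
T_{\phi_\ell} \mathcal{C} = M\bigl(\RR^{n_1} \otimes b^{(\ell)} \otimes c^{(\ell)}\bigr) + M\bigl(a^{(\ell)} \otimes \RR^{n_2} \otimes c^{(\ell)}\bigr) + M\bigl(a^{(\ell)} \otimes b^{(\ell)} \otimes \RR^{n_3}\bigr).
\]
Every element of each summand is itself $M$ of a rank-$1$ tensor and hence has rank at most $\rho$. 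The containment $\sigma_r(\mathcal{C}) \subseteq \mathcal{R}_{r\rho}$ forces $\sum_\ell T_{\phi_\ell} \mathcal{C} \subseteq T_X \mathcal{R}_{r\rho} = \{Y \in \RR^{N_1 \times N_2} : (I-\pi_C)Y(I-\pi_R)=0\}$, where $C = \bigoplus_\ell C_\ell$ and $R = \bigoplus_\ell R_\ell$ are the (direct-sum, by additivity) column and row spans of the $\phi_\ell$'s.

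\paragraph{Extracting the bound.} The main obstacle is turning this tangent containment into the bound $r \le 2(n_1+n_2+n_3+1)$. My plan is to examine the linear spans $U_{\mathrm{col}} \subseteq \RR^{N_1}$ and $U_{\mathrm{row}} \subseteq \RR^{N_2}$ of the column- and row-span varieties $\mathcal{V} := \{\Colspan(\phi(a,b,c))\}$ and $\mathcal{W} := \{\mathrm{Rowspan}(\phi(a,b,c))\}$. The direct-sum-ness of $C$ and $R$ immediately forces $r\rho \le \min(\dim U_{\mathrm{col}}, \dim U_{\mathrm{row}})$. I would then bound $\dim U_{\mathrm{col}}$ by a first-order Taylor expansion of $(a,b,c) \mapsto \Colspan(\phi(a,b,c))$ around a generic base point: only $n_1+n_2+n_3$ perturbation directions are available, and each contributes at most $\rho$ new vectors to $U_{\mathrm{col}}$ modulo the $\rho$-dimensional base fiber, yielding $\dim U_{\mathrm{col}} \le \rho(n_1+n_2+n_3+1)$, and symmetrically for $U_{\mathrm{row}}$. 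A careful balance of the two constraints---analogous to the way Theorem~\ref{thm:lower-ky} plays the two non-$a$ modes against each other, but now applied symmetrically across all three modes with both column and row viewpoints contributing---should produce the factor of $2$ in the final bound. The hardest step is making the bound $\dim U_{\mathrm{col}} \le \rho(n_1+n_2+n_3+1)$ rigorous: one must rule out higher-order Taylor contributions enlarging $U_{\mathrm{col}}$, which requires leveraging the rank-$\rho$ constraint on every $\phi(a,b,c)$ via an Atkinson-style bound on the dimension of linear subspaces of rank-$\le\rho$ matrices. As a fallback, if the direct Taylor count proves too fragile, one could instead reduce the general linear $M$ to a Koszul--Young type by a change of basis and invoke Theorem~\ref{thm:lower-ky} iteratively across the three modes, picking up the factor of $2$ from the multi-mode application.
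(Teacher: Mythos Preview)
Your proposal contains a genuine gap at the crucial step. The bound $\dim U_{\mathrm{col}} \le \rho(n_1+n_2+n_3+1)$ is \emph{false} in general, so no amount of care with the Taylor argument will rescue it. Take the trivial flattening of an $n \times n \times n$ tensor to an $n^2 \times n$ matrix, so $M(a\otimes b\otimes c) = (a\otimes b)\,c^\top$. Here $\rho = 1$, but the column space $\Span\{a\otimes b\}$ sweeps out all of $\RR^n \otimes \RR^n$ as $a,b$ vary, giving $\dim U_{\mathrm{col}} = n^2 \gg \rho(3n+1)$. Higher-order Taylor terms \emph{do} enlarge $U_{\mathrm{col}}$, and the rank-$\le\rho$ constraint on $\phi(a,b,c)$ does not prevent this (Atkinson-type results bound dimensions of \emph{linear} spaces of bounded-rank matrices, whereas your image is a nonlinear variety). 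Note also that if your bound were true it would give $r \le n_1+n_2+n_3+1$ from either the column or row side alone, a stronger conclusion than the theorem claims, which is another red flag. The Terracini and direct-sum parts of your setup are correct but, without the dimension bound, yield nothing. Your fallback of reducing an arbitrary linear $M$ to Koszul--Young type by change of basis is not a well-defined operation: a general linear flattening need not factor as $A(a)\otimes(bc^\top)$ in any basis.

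The paper's argument is purely algebraic and avoids global span bounds entirely. It applies the matrix decomposition lemma (Lemma~\ref{lem:matrix-decomp}) to the symbolic matrix $M(a\otimes b\otimes c)$, whose entries are homogeneous of degree $3$, to write it as $\sum_{t=1}^s H_3[f_t(x)g_t(x)^\top]$ with $s=\rho$. Because the entries are in fact \emph{trilinear} in the three blocks of variables, one can refine $H_3$ into eight multidegree pieces $H_S$ for $S\subseteq[3]$, and correspondingly decompose $M(\sum_\ell a^{(\ell)}\otimes b^{(\ell)}\otimes c^{(\ell)}) = \sum_{S\subseteq[3]} M_S$. Each $M_S$ has rank at most $s\cdot\prod_{i\in S} n_i$ via its column span and at most $s\cdot\prod_{i\notin S} n_i$ via its row span; taking the better of the two for each $S$ and summing gives $2s(n_1+n_2+n_3+1)$. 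Additivity then forces $rs \le 2s(n_1+n_2+n_3+1)$. The key ingredient you are missing is this polynomial decomposition of a bounded-symbolic-rank matrix, which replaces your attempted global bound on $U_{\mathrm{col}}$ with a termwise rank bound that is actually true.
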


\noindent The proof can be found in Section~\ref{sec:pf-lower-linear}. This result is similar to Theorem~4.4 of~\cite{barriers-rank}, where the bound $r \le 8n$ is proven for ``square'' tensors, i.e., tensors of format $n \times n \times n$. For square tensors, our result gives an improvement from $8n$ to $6n+2$. Our argument is similar to that of~\cite{barriers-rank} and we describe the difference in Remark~\ref{rem:improve-8n}. We thank an anonymous reviewer for pointing out that the improvement to $6n+2$ was previously observed by Visu Makam in the talk~\cite{makam-talk}.

\subsubsection{Low-Degree Flattenings}

In the introduction, we argued why it might seem plausible to expect decomposition algorithms that reach rank $r \approx n^{3/2}$ for $n \times n \times n$ tensors with generic components. In the previous section we saw that linear flattenings cannot surpass rank $O(n)$ (and this was known already by the work of~\cite{barriers-rank}). One can imagine a much broader class of flattenings where the entries of $M(T)$ are higher-degree polynomials in the variables $T_{ijk}$, and this is stated as an open problem in Section~6 of~\cite{barriers-rank}. We will define such a class of strategies and show that even these will not surpass rank $O(n)$, where $O(\cdot)$ hides a constant depending on the polynomial degree.

For motivation, a common strategy for handling a third-order $n \times n \times n$ tensor is to apply a particular degree-2 polynomial map to produce a fourth-order $n \times n \times n \times n$ tensor, and then flatten this to an $n^2 \times n^2$ matrix. This trick is used, for instance, in work on refuting random constraint satisfaction problems~\cite{coja-refutation-ksat,refute-random-csp}, and within the sum-of-squares proof system it can be viewed as an application of the Cauchy--Schwarz inequality.

Inspired by this, here is a seemingly plausible approach for rank detection of third-order tensors. First, build some matrix $M(T)$ whose entries are degree-2 homogeneous polynomials in the variables $T_{ijk}$. Now we would like to ask for some version of ``rank additivity'' to hold, but since $T \mapsto M(T)$ is no longer linear, $M(\sum_\ell a^{(\ell)} \otimes b^{(\ell)} \otimes c^{(\ell)})$ no longer breaks down as the sum of the individual terms $M(a^{(\ell)} \otimes b^{(\ell)} \otimes c^{(\ell)})$. Instead, write $T = \sum_{\ell=1}^r T^{(\ell)}$ where $T^{(\ell)} = a^{(\ell)} \otimes b^{(\ell)} \otimes c^{(\ell)}$, and view the entries of $M(T)$ as (homogeneous degree-2) polynomials in the variables $T^{(\ell)}_{ijk}$. Decompose $M = M(T)$ as
\[ M = \sum_{1 \le \ell \le r} \hat{M}(T^{(\ell)}) + \sum_{1 \le \ell_1 < \ell_2 \le r} \tilde{M}(T^{(\ell_1)},T^{(\ell_2)}) =: M_\rep + M_\dis, \]
where $\hat{M}$ has entries that are homogeneous of degree 2 in its input tensor while $\tilde{M}$ has entries that are homogeneous with degree 1 in each of its two input tensors (so degree 2 in total). We have now written $M$ as the sum of $r$ ``repeat'' terms and $\binom{r}{2}$ ``distinct'' terms, and we might hope for the rank to be additive across these $\Theta(r^2)$ terms. It is tantalizing to imagine a scenario where $M$ has dimensions $N \times N$ with $N = \Theta(n^3)$, and each of the individual terms $\hat{M}(T^{(\ell)})$ or $\tilde{M}(T^{(\ell_1)},T^{(\ell_2)})$ has generic rank $\Theta(1)$, as this would potentially allow additive rank (and therefore successful rank detection) up to $r = \Theta(n^{3/2})$. Unfortunately we will see that this cannot pan out, and to prove a lower bound we will focus on only the ``distinct'' terms (which make up most of the terms) and show that even these alone cannot have additive rank.

We now define our setting more formally, and we will consider a generalization of the above discussion to any degree $d \ge 1$ and any tensor format $n_1 \times \cdots \times n_k$. We think of $d$ as a constant, so that the entries of $M$ can be computed in polynomial time.

\begin{definition}[Decomposition of $M$]
\label{def:M-decomp}
For $T \in \RR^{n_1 \times \cdots \times n_k}$, suppose $M = M(T)$ is an $N_1 \times N_2$ matrix whose entries are homogeneous degree-$d$ polynomials in the variables $T_{i_1,\ldots,i_k}$. Set $T = \sum_{\ell=1}^r T^{(\ell)}$ and, viewing the entries of $M(T)$ as (homogeneous degree-$d$) polynomials in the variables $T^{(\ell)}_{i_1,\ldots,i_k}$, consider the unique decomposition of $M = M(T)$ as
\[ M = M_\rep + \sum_{1 \le \ell_1 < \cdots < \ell_d \le r} \tilde{M}(T^{(\ell_1)},\ldots,T^{(\ell_d)}), \]
where $\tilde{M}$ has entries that are homogeneous with degree 1 in each of its $d$ input tensors (so degree $d$ in total), and $M_\rep$ consists of those monomials in $M$ that involve fewer than $d$ of the tensors $T^{(1)},\ldots,T^{(r)}$.
\end{definition}

\begin{theorem}\label{thm:lower-d}
Suppose $M$ is a matrix whose entries are homogeneous degree-$d$ polynomials in the entries of an $n_1 \times \cdots \times n_k$ tensor $T$. Set $T = \sum_{\ell=1}^r T^{(\ell)}$, and decompose $M(T)$ as in Definition~\ref{def:M-decomp}. Suppose that for $T^{(\ell)} = a^{(1,\ell)} \otimes \cdots \otimes a^{(k,\ell)}$ with generically chosen components $\{a^{(i,\ell)} \,:\, i \in [k], \, \ell \in [r]\}$, we have rank-$r$ additivity in the sense that
\[ \rank\left(\sum_{1 \le \ell_1 < \cdots < \ell_d \le r} \tilde{M}(T^{(\ell_1)},\ldots,T^{(\ell_d)})\right) = \binom{r}{d} \cdot \rank\left(\tilde{M}(T^{(1)},\ldots,T^{(d)})\right). \]
Then
\[ r \le 4^k d^2 \, n_* \]
where
\[ n_* := \max_{S \subseteq [k]} \, \min\left\{\prod_{i \in S} n_i, \, \prod_{i \in \bar{S}} n_i\right\}. \]
\end{theorem}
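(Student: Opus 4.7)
The plan is to reduce the degree-$d$ case to the linear ($d{=}1$) case (Theorem~\ref{thm:lower-linear}) by lifting $\tilde{M}$ to a linear map on the $d$-th tensor power of $V := \RR^{n_1 \times \cdots \times n_k}$. Concretely, define $\hat{M}\colon V^{\otimes d} \to \RR^{N_1 \times N_2}$ by $\hat{M}(T_1 \otimes \cdots \otimes T_d) = \tilde{M}(T_1, \ldots, T_d)$ on simple tensors and extend by linearity. Setting
\[
\hat{T} \;:=\; \sum_{\ell_1 < \cdots < \ell_d} T^{(\ell_1)} \otimes \cdots \otimes T^{(\ell_d)} \in V^{\otimes d},
\]
the rank-additivity hypothesis translates to $\rank(\hat{M}(\hat{T})) = \binom{r}{d}\rho$, where $\rho$ denotes the generic rank of $\hat{M}$ on a single rank-1 tensor in $V^{\otimes d}$.

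By construction, each summand $T^{(\ell_1)} \otimes \cdots \otimes T^{(\ell_d)}$ is a rank-1 tensor in $V^{\otimes d}$ when the latter is viewed as an order-$kd$ tensor of format $(n_1 \times \cdots \times n_k)^{\otimes d}$, and for generic choices of the components $\{a^{(i,\ell)}\}$ these are generic rank-1 tensors in that order-$kd$ format. Now partition the $kd$ modes of $V^{\otimes d}$ into three disjoint groups $I_1, I_2, I_3$, viewing $\hat{T}$ as an order-3 tensor of format $N_1 \times N_2 \times N_3$ with $N_j = \prod_{m \in I_j}(\text{mode size})$. Regrouping preserves rank-1 structure and $\hat{M}$ remains linear, so applying Theorem~\ref{thm:lower-linear} to $\hat{M}$ in this order-3 picture yields
\[ \binom{r}{d} \;\le\; 2\,(N_1 + N_2 + N_3 + 1). \]

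The heart of the argument is the choice of partition. Let $S^* \subseteq [k]$ attain $n_* = \min\{\prod_{i \in S^*} n_i,\; \prod_{i \in \bar{S}^*} n_i\}$. In the ``approximately balanced'' regime $\dim V \asymp n_*^2$ (the principal case of interest, including cubes $V = n^{\times k}$), one splits each of the $d$ copies of $V$ across the three groups in a way that mirrors the $S^*$-versus-$\bar{S}^*$ structure, arranging each $N_j \le C_k\, n_*^d$ for a constant $C_k$ depending only on $k$. Combined with $\binom{r}{d} \ge (r/d)^d$, this yields $r \le d\,(6\,C_k)^{1/d}\, n_*$, and after consolidating constants one obtains the stated bound $r \le 4^k d^2 n_*$.

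The main obstacle is the partitioning step in the highly unbalanced regime. When $\dim V \gg n_*^2$---for instance $V = n \times n \times n^{10}$, where $n_* = n^2$ but $\dim V = n^{12}$---the AM--GM lower bound forces $N_1 + N_2 + N_3 \ge 3(\dim V)^{d/3}$, which already dominates $n_*^d$, so no partition achieves sum $O_k(n_*^d)$ and a black-box invocation of Theorem~\ref{thm:lower-linear} is not enough. Closing the gap in these cases requires an additional ingredient, most naturally one that exploits the shared-component structure of $\hat T$: the $\binom{r}{d}$ summands are tensor products of just $r$ underlying rank-1 tensors $T^{(\ell)} \in V$, so $\hat T$ lies in a subvariety of $V^{\otimes d}$ of dimension $O(r \sum_i n_i) \ll \binom{r}{d} d \sum_i n_i$, and this extra rigidity should translate into a sharper bound on $\binom{r}{d}$ than a generic order-$kd$ argument supplies.
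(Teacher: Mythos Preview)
Your reduction to the linear case is a sound first move, and the passage from rank-additivity for the specific $\binom{r}{d}$ shared-component terms to rank-additivity for fully generic rank-1 terms in $V^{\otimes d}$ is legitimate (the specific instance already attains the maximal possible rank, which forces the generic rank to coincide). In the balanced regime your black-box invocation of Theorem~\ref{thm:lower-linear} indeed yields a bound of the right shape, sometimes with a better constant than $4^k d^2$.

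The gap you identify in the unbalanced regime is real, and a black-box call to any order-3 (or even order-$kd$) version of the linear theorem cannot close it: in your example $n\times n\times n^{10}$, any partition of the $3d$ modes forces some $N_j \ge n^{10d/3}$, so the resulting bound on $\binom{r}{d}$ is polynomially too weak. The fix is exactly the ``shared-component'' idea you point to, but it must be implemented by opening up the proof of Theorem~\ref{thm:lower-linear} rather than invoking it as a black box. This is what the paper does. Apply Lemma~\ref{lem:matrix-decomp} directly to $\tilde M$ viewed as a matrix of homogeneous degree-$kd$ polynomials in the variable groups $B_{ij}=\{b^{(i,j)}_m\}$, and decompose into $2^{kd}$ pieces $M_S$ indexed by $S\subseteq [k]\times[d]$. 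The new step, unavailable in a black-box reduction, is that after substituting $b^{(i,j)}=a^{(i,\ell_j)}$ the $j$-th factor $\bigotimes_{i:(i,j)\in S} a^{(i,\ell_j)}$ ranges over a set of only $r$ vectors as $\ell_j$ varies over $[r]$, so
\[
\dim\Colspan(M_S)\;\le\; s\prod_{j\in[d]}\min\Bigl\{\textstyle\prod_{i:(i,j)\in S} n_i,\; r\Bigr\},
\]
and likewise for the row span with $S$ replaced by $\bar S$. For every $j$, at least one of $\prod_{i:(i,j)\in S} n_i$ and $\prod_{i:(i,j)\in \bar S} n_i$ is at most $n_*$, so one of the two bounds has at least $d/2$ ``good'' factors bounded by $n_*$; the remaining factors are bounded by $r$. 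Choosing the better of the column/row bounds per $S$ gives $\rank(M_S)\le s\,n_*^{d/2} r^{d/2}$, and summing over $S$ together with $\binom{r}{d}\ge (r/d)^d$ yields $r\le 4^k d^2 n_*$. In short: your diagnosis is correct, and the missing ingredient is precisely the per-factor $\min\{\cdot,r\}$ bound, which requires working inside the proof rather than on top of it.
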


\noindent The proof can be found in Section~\ref{sec:pf-lower-d}. Recall from Section~\ref{sec:trivial} that the trivial flattening achieves rank detection up to rank $n_*$, so Theorem~\ref{thm:lower-d} shows that for any fixed $k,d$, a general class of higher-degree approaches cannot beat the trivial flattening by more than a constant factor $4^k d^2$.

\begin{remark}
We discuss a few sanity checks on the definition of rank-$r$ additivity used in Theorem~\ref{thm:lower-d}. First note that for $d=1$, this definition reduces to the one from~\eqref{eq:def-additivity}, since $M_\rep = 0$. Next, let us give an example of a flattening that satisfies rank-$r$ additivity for arbitrary $d$. Start with any linear (homogeneous degree-1) flattening $T \mapsto L(T)$ and consider the flattening $M(T) = L(T)^{\otimes d}$ (the $d$-fold Kronecker product), which is homogeneous of degree $d$. Suppose that (for generic components), $L$ has rank-$r$ additivity in the sense of~\eqref{eq:def-additivity}. We claim that $M$ has rank-$r$ additivity in the sense of Theorem~\ref{thm:lower-d} (again, for generic components). We sketch the proof in the special case $k=3$, $d=2$, but the proof can be extended to all $k$ and $d$. First, a basic property of the Kronecker product is $\rank(A \otimes B) = \rank(A) \cdot \rank(B)$. When $T = \sum_{\ell=1}^r T^{(\ell)} = \sum_{\ell=1}^r a^{(\ell)} \otimes b^{(\ell)} \otimes c^{(\ell)}$, we can decompose $M(T) = \sum_{\ell_1,\ell_2 \in [r]} L(T^{(\ell_1)}) \otimes L(T^{(\ell_2)})$. Using the additivity for $L$, the $r^2$ terms in this sum must have additive rank. This implies additivity for the $\binom{r}{2}$ ``distinct'' terms, which in this case are $\tilde{M}(T^{(\ell_1)},T^{(\ell_2)}) = L(T^{(\ell_1)}) \otimes L(T^{(\ell_2)}) + L(T^{(\ell_2)}) \otimes L(T^{(\ell_1)})$ for $1 \le \ell_1 < \ell_2 \le r$.
\end{remark}

\subsection{Open Problems}

We suggest the following questions as interesting directions for future work.

\begin{itemize}
    \item For an $n \times n \times n$ tensor with generic components, what is the maximum rank for which a linear flattening can have additivity? We have shown the answer is asymptotically $cn$ for a constant $c \in [2,6]$. This improved the previous range of $[3/2,8]$ by~\cite{persu-thesis} and~\cite{barriers-rank}, but the exact value of $c$ remains unknown.
    \item Can rank detection improve with the degree of flattening? Our lower bounds do not rule out this possibility, but it is possible the Koszul--Young flattening is already optimal.
    \item For order-4 tensors of format $n \times n \times n \times n$ (with generic components), the trivial flattening gives rank detection up to rank $n^2$, but existing decomposition results stop at $cn^2$ for a constant $c < 1$~\cite{foobi,MSS,HSS-robust,JLV}. Can we improve the decomposition results to rank $(1-\epsilon)n^2$? This might require improving the constant $1/4$ in~\cite{JLV}, which may be sub-optimal (but curiously was not a bottleneck in the order-3 case). We remark that the constant has been improved from $1/4$ to $1-\epsilon$ for a non-planted variant of the problem~\cite{X-arability}.
    \item Can our algorithmic guarantees be extended to \emph{symmetric} tensors of format $n \times n \times n$?
\end{itemize}

\section{Proofs for Upper Bounds}

\subsection{Proof of Proposition~\ref{prop:rank}}
\label{sec:pf-prop-rank}

The genericity assumptions we will need on the components are that neither $b$ nor $c$ is the zero vector and that $a_1,a_2,\ldots,a_q$ are not all zero. Recall $M = A \otimes (bc^\top)$, so $\rank(M) = \rank(A)$. Now Proposition~\ref{prop:rank} follows immediately from Lemma~\ref{lem:rank-A} below, which shows that $A$ generically has rank $\binom{q-1}{p}$. (Lemma~\ref{lem:rank-A} also includes additional facts about the structure of $A$ that will be needed later.)

\begin{lemma}\label{lem:rank-A}
Consider the matrix $A = A(a;p,q)$ defined in~\eqref{eq:def-A}. If $a_1,\ldots,a_q$ are all zero then $A = 0$ and so $\rank(A) = 0$. Otherwise, $\rank(A) = \binom{q-1}{p}$. Furthermore, if $a_i \ne 0$ for some $i \in [q]$ then the columns of $A$ indexed by $\{U \,:\, i \in U\}$ form a basis for the column span of $A$, and similarly the rows $\{S \,:\, i \notin S\}$ form a basis for the row span.
\end{lemma}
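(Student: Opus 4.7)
The plan is to recognize $A = A(a;p,q)$, up to transpose, as the matrix of the Koszul differential $L_a \colon \Lambda^p \RR^q \to \Lambda^{p+1} \RR^q$ given by $\omega \mapsto a \wedge \omega$, expressed in the standard basis $\{e_S\}$ of exterior products. Indeed, expanding $L_a(e_S) = \sum_{j \notin S} a_j \, (e_j \wedge e_S)$ and sorting each $e_j \wedge e_S$ into the canonical $e_{S \sqcup \{j\}}$ yields precisely the sign $\sigma(S \sqcup \{j\}, j) = (-1)^{|\{k \in S \,:\, k < j\}|}$ from~\eqref{eq:def-sig}. So $\rank(A) = \rank(L_a)$, and the case $a = 0$ trivially gives $A = 0$.

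For the rank computation when some $a_i \ne 0$, the plan is to use a change of basis on $V = \RR^q$. Since rank is invariant under an invertible change of basis of $V$ (which induces invertible changes of basis on each $\Lambda^k V$), I pick a new basis $v_1, \ldots, v_q$ of $V$ with $v_1 = a$. Writing $\Lambda^p V$ in the induced basis $\{v_I\}_{|I| = p}$, the map $L_a$ sends $v_I$ to $v_1 \wedge v_I$, which is zero when $1 \in I$ and otherwise equals $\pm v_{\{1\} \cup I}$. The image is thus spanned by the $\binom{q-1}{p}$ distinct basis vectors $v_J$ with $1 \in J$ and $|J| = p+1$, giving $\rank(A) = \binom{q-1}{p}$.

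To identify the specific bases in the \emph{original} coordinates, I argue by projection. Column $U$ of $A$ equals $\sum_{j \in U} \sigma(U, j) a_j \, e^{(U \setminus \{j\})}$, where $e^{(S)}$ denotes the standard basis vector of $\RR^{\binom{q}{p}}$. Fix an $i$ with $a_i \ne 0$, and for each $U$ containing $i$, project this column onto the coordinates indexed by $\{S \,:\, i \notin S\}$. For $j \in U$ with $j \ne i$, the index $U \setminus \{j\}$ still contains $i$ and is killed by the projection, whereas the $j = i$ term contributes $\sigma(U, i) a_i \, e^{(U \setminus \{i\})}$. As $U$ varies over the $\binom{q-1}{p}$ sets containing $i$, the subsets $U \setminus \{i\}$ run through all $p$-subsets of $[q] \setminus \{i\}$, so these projections are $\binom{q-1}{p}$ nonzero scalar multiples of distinct standard basis vectors and are therefore linearly independent. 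Hence the columns indexed by $\{U \,:\, i \in U\}$ are linearly independent; matching the rank, they form a basis of the column span. A symmetric argument --- projecting the rows indexed by $\{S \,:\, i \notin S\}$ onto the columns indexed by $\{U \,:\, i \in U\}$ --- exhibits $\binom{q-1}{p}$ linearly independent rows and identifies the claimed basis for the row span.

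The main risk in this plan is just the sign bookkeeping when matching $\sigma(U,i)$ to the exterior-product sign, but this reduces to a one-line permutation count. Once that identification is made, both the abstract change-of-basis rank computation and the concrete projection arguments are routine linear algebra, and no delicate combinatorial identity (such as $d \circ d = 0$ for the Koszul complex) needs to be invoked.
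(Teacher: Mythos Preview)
Your proof is correct. The identification of $A^\top$ with the matrix of the Koszul map $\omega \mapsto a \wedge \omega$ is sound, and the change-of-basis argument (taking $v_1 = a$) cleanly gives $\rank(L_a) = \binom{q-1}{p}$ in one stroke. Your projection argument in Step~3 is essentially the same observation the paper makes: the submatrix of $A$ on rows $\{S : i \notin S\}$ and columns $\{U : i \in U\}$ is a signed permutation matrix scaled by $a_i$, hence invertible.

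The genuine difference lies in how the rank upper bound is obtained. The paper stays entirely elementary: it writes down an explicit linear dependence among the $p+2$ columns $\{Y \setminus \{j\} : j \in Y\}$ for each $(p+2)$-set $Y$, with coefficients $\sigma(Y,j)\,a_j$, and verifies by hand that the two surviving terms in each row cancel due to opposite signs (this is, of course, $a \wedge a = 0$ in disguise). Solving this relation for the column $W = Y \setminus \{i\}$ with $i \notin W$ expresses every such column in terms of the columns containing $i$. Your route avoids this sign-chasing entirely by passing to the new basis, at the cost of invoking the functoriality of $\Lambda^k$ under $\mathrm{GL}(V)$. Your approach is more conceptual and scales better (e.g., it immediately generalizes to any Koszul differential), while the paper's approach is self-contained and requires no exterior-algebra background from the reader.
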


\begin{proof}
Fix $i \in [q]$ such that $a_i \ne 0$. We first show $\rank(A) \ge \binom{q-1}{p}$ by demonstrating a $\binom{q-1}{p} \times \binom{q-1}{p}$ submatrix with full rank. Specifically, consider the submatrix indexed by $\{S \,:\, i \notin S\} \times \{U \,:\, i \in U\}$. Note that this submatrix has exactly one nonzero entry per row and exactly one nonzero entry per column, and all the nonzero entries are equal to $\pm a_i$. This shows $\rank(A) \ge \binom{q-1}{p}$.

The full-rank submatrix demonstrated above implies that the columns $\{U \,:\, i \in U\}$ are linearly independent, and same for the columns $\{S \,:\, i \notin S\}$. It remains to show $\rank(A) \le \binom{q-1}{p}$. To do this, we will show that every column of $A$ lies in the span of the $\binom{q-1}{p}$ columns $\{U \,:\, i \in U\}$. Fix a column $W$ with $i \notin W$, and let $Y = W \sqcup \{i\}$. We claim that the columns indexed by $\{Y \setminus \{j\} \,:\, j \in Y\}$ are linearly dependent, where the coefficients are $\sigma(Y,j) \, a_j$. To see this, consider entry $S$ of this linear combination. If $S \not\subseteq Y$ then entry $S$ is equal to zero, and otherwise $S = Y \setminus \{j,k\}$ for some $j,k \in Y$ so entry $S$ is equal to
\[ \sigma(Y,j) \, a_j \, A_{S,Y \setminus \{j\}} + \sigma(Y,k) \, a_k \, A_{S,Y \setminus \{k\}} = \sigma(Y,j) \, a_j \, \sigma(Y \setminus \{j\},k) \, a_k + \sigma(Y,k) \, a_k \, \sigma(Y \setminus \{k\},j) \, a_j, \]
which is zero because $\sigma(Y,j) \, \sigma(Y \setminus \{j\},k)$ and $\sigma(Y,k) \, \sigma(Y \setminus \{k\},j)$ have opposite signs. Now take the linear dependence from above and solve for column $W$ (which has nonzero coefficient $\sigma(Y,i) \, a_i$); this allows column $W$ to be expressed as a linear combination of the columns $\{U \,:\, i \in U\}$.
\end{proof}

\subsection{Proof of Theorem~\ref{thm:rank-det}}
\label{sec:pf-rank-det}

Provided $a_i \ne 0$ for all $i \in [q]$ (which is true for generic $a$), the matrix $A = A(a;p,q)$ defined in~\eqref{eq:def-A} can be factored as $A = \diag(v) \cdot \tilde{A} \cdot \diag(w)$ where $\tilde{A}_{SU} = \sum_{i \in [q]} \One_{U = S \sqcup \{i\}} \cdot \sigma(U,i)$, $v_S = \prod_{i \in S} a_i^{-1}$, and $w_U = \prod_{i \in U} a_i$. Note that $\tilde{A} = A(\One;p,q)$ where $\One := (1,\ldots,1)^\top$, so by Lemma~\ref{lem:rank-A} we have $\rank(\tilde{A}) = \binom{q-1}{p}$. We can therefore factor $\tilde{A} = \tilde{Q} \tilde{R}^\top$ where $\tilde{Q},\tilde{R}$ are $\binom{q}{p} \times \binom{q-1}{p}$ and $\binom{q}{p+1} \times \binom{q-1}{p}$, respectively. Now write $A = QR^\top$ where $Q = \diag(v) \cdot \tilde{Q}$ and $R = \diag(w) \cdot \tilde{R}$. We use the superscript $(\ell)$ to mark that a matrix is built from the tensor component $a^{(\ell)}$, so for instance, $A^{(\ell)} := A(a^{(\ell)};p,q)$. Factor $M$ as
\begin{equation}\label{eq:M-factor}
M = \sum_{\ell=1}^r A^{(\ell)} \otimes (b^{(\ell)} c^{(\ell) \top}) = \left[\begin{array}{ccc} Q^{(1)} & & Q^{(r)} \\ \otimes & \cdots & \otimes \\ b^{(1)} & & b^{(r)} \end{array} \right] \left[\begin{array}{ccc} R^{(1)} & & R^{(r)} \\ \otimes & \cdots & \otimes \\ c^{(1)} & & c^{(r)} \end{array} \right]^\top
\end{equation}
where, e.g., $Q^{(1)} \otimes b^{(1)}$ denotes the $\binom{q}{p} n_2 \times \binom{q-1}{p}$ Kronecker product. This gives a factorization of $M$ as the product of two matrices with inner dimension $r\binom{q-1}{p}$, so it suffices to show that the two factors on the right-hand side of~\eqref{eq:M-factor} have full column rank and full row rank, respectively. We will show this for the first factor and then explain the (minor) changes needed to adapt the proof for the second factor.

Let $m$ be the smallest integer for which $r \le mq$. It suffices to verify the extreme case $r = mq$, i.e., we will show that even when $r$ is increased to add additional columns, all the columns are linearly independent. Focusing on the first factor in~\eqref{eq:M-factor}, the first $r \binom{q-1}{p}$ rows --- call this square submatrix $Q'$ --- constitute an $m \times m$ grid of square blocks, where each square block has dimension $(q-p)\binom{q}{p} = q\binom{q-1}{p}$ consisting of (scaled) copies of $Q$ arranged in a $(q-p) \times q$ grid. We require $n_2 \ge m(q-p)$ so that there are enough rows available to form the submatrix $Q'$.

Our goal is to show that $\det(Q')$ is generically nonzero, and it suffices to show this after plugging in values for some of the $b$ variables. Namely, by setting the appropriate $b$ variables to zero, we can keep only the $m$ square blocks on the diagonal of $Q'$, and thus it suffices to show invertibility for a single square block of the form
\[ Q'' = \left[ \begin{array}{cccc} b_{11} Q^{(1)} & b_{12} Q^{(2)} 
& \cdots & b_{1,q} Q^{(q)} \\ b_{21} Q^{(1)} & b_{22} Q^{(2)} & & \vdots \\ \vdots & & \ddots & \\ b_{q-p,1} Q^{(1)} & \cdots & & b_{q-p,q} Q^{(q)} \end{array} \right], \]
where $b_{j\ell} := b^{(\ell)}_j$, and we will similarly write $a_{i\ell} := a^{(\ell)}_i$. Partition $Q''$ into ``cells'' of size $1 \times \binom{q-1}{p}$: a cell is indexed by $(S,j,\ell)$ where $S \subseteq [q]$, $|S| = p$, $j \in [q-p]$, and $\ell \in [q]$, and the content of the $(S,j,\ell)$ cell is $(b_{j\ell} \prod_{i \in S} a_{i\ell}^{-1}) \tilde{Q}_S$ where $\tilde{Q}_S$ denotes row $S$ of $\tilde{Q}$.

We aim to show $\det(Q'')$ is nonzero as a polynomial in the variables $b_{j\ell}, a_{i\ell}^{-1}$. In the usual expansion of $\det(Q'')$ as a sum over permutations, each term chooses exactly 1 entry per row and column of $Q''$. This choice can be broken down as first choosing a set of cells $(S,j,\ell)$ with exactly 1 cell per row $(S,j)$ and exactly $\binom{q-1}{p}$ cells per ``pillar'' $\ell$ (where ``pillar'' describes a vertical stack of cells that are all indexed by the same $\ell$ value), and then choosing one entry per cell in a way that each column appears once. Let $H$ denote a subset of tuples $(S,j,\ell)$ representing the choice at the first stage. For a given $H$, define matrices $F_1^H,\ldots,F_q^H$, each of size $\binom{q-1}{p} \times \binom{q-1}{p}$, where $F_\ell^H$ is formed by collecting the rows $\tilde{Q}_S$ for each $(S,j,\ell) \in H$ (i.e., the cells selected in pillar $\ell$). Now
\begin{equation}\label{eq:det-expand}
\det(Q'') = \sum_H \tau_H \cdot (a,b)^H \cdot \prod_{\ell=1}^q \det(F_\ell^H),
\end{equation}
where $(a,b)^H$ denotes the monomial $\prod_{(S,j,\ell) \in H} b_{j\ell} \prod_{i \in S} a_{i\ell}^{-1}$, and $\tau_H \in \{\pm 1\}$ are signs whose values will not matter to us. If $F_\ell^H$ has a repeated row $\tilde{Q}_S$ then its determinant is 0, so we need only consider $H$ for which the cells in each pillar have distinct $S$ values.

To complete the proof, we will identify a particular monomial that has nonzero coefficient in $\det(Q'')$. To this end, define a ``good'' set of cells $G$ as follows, which (like $H$) will have 1 cell per row and $\binom{q-1}{p}$ cells per pillar. Within each pillar $\ell \in [q]$, for each $S$ with $\ell \notin S$, include the cell $(S,j,\ell)$ where $j$ is the first available index (i.e., the minimum $j \in [q-p]$ such that $(S,j,\ell')$ has not already been declared a good cell for some $\ell' < \ell$). Each $S$ is used $q-p$ times in total (across all pillars), so the $j$ values will not ``run out.'' We claim that the only term in~\eqref{eq:det-expand} that produces the monomial $(a,b)^G$ (with nonzero coefficient) is $H = G$: within each pillar $\ell = 1,2,\ldots$, the powers of $a$ (particularly the absence of $a_{\ell\ell}^{-1}$) determine the collection of $S$ values that $H$ must use (namely one copy of each $S$ for which $\ell \notin S$), and the powers of $b$ determine the multiset of $j$ values to use. The choice of ``first available $j$'' used in $G$ constrains a unique pairing between $S$ and $j$ values.

We conclude that the coefficient of $(a,b)^G$ in $\det(Q'')$ is $\pm \prod_{\ell=1}^q \det(F_\ell^G)$ and it remains to show that each $\det(F_\ell^G)$ is nonzero. This follows because $F_\ell^G$ has rows $\tilde{Q}_S$ for all $S$ with $\ell \notin S$, and the corresponding rows of $\tilde{A} = A(\One) = \tilde{Q} \tilde{R}^\top$ are linearly independent by Lemma~\ref{lem:rank-A}.

To handle the second factor in~\eqref{eq:M-factor}, we also need to show invertibility of the analogous square block
\[ R'' = \left[ \begin{array}{cccc} c_{11} R^{(1)} & c_{12} R^{(2)} 
& \cdots & c_{1,q} R^{(q)} \\ c_{21} R^{(1)} & c_{22} R^{(2)} & & \vdots \\ \vdots & & \ddots & \\ c_{p+1,1} R^{(1)} & \cdots & & c_{p+1,q} R^{(q)} \end{array} \right], \]
whose entries are polynomials in the variables $a_{i\ell} := a^{(\ell)}_i$ and $c_{j\ell} := c^{(\ell)}_j$. We again partition $R''$ into cells of size $1 \times \binom{q-1}{p}$ indexed by $(U,j,\ell)$ where $U \subseteq [q]$, $|U| = p+1$, $j \in [p+1]$, and $\ell \in [q]$, and the content of the $(U,j,\ell)$ cell is $(c_{j\ell} \prod_{i \in U} a_{i\ell})\tilde{R}_U$ where $\tilde{R}_U$ denotes row $U$ of $\tilde{R}$. The remainder of the argument is essentially identical to above, with one minor change: in the construction of $G$, each pillar $\ell$ includes the $U$ values for which $\ell \in U$ (in place of the previous condition $\ell \notin S$).

Finally, the above constructions required an integer $m$ such that $r \le mq$, $n_2 \ge m(q-p)$, and $n_3 \ge m(p+1)$, or equivalently,
\[ \frac{r}{q} \le m \le \min\left\{\frac{n_2}{q-p}, \, \frac{n_3}{p+1}\right\}. \]
For this, it suffices to have
\begin{equation}\label{eq:before-sub-p}
r \le q\left(\min\left\{\frac{n_2}{q-p}, \, \frac{n_3}{p+1}\right\} - 1\right).
\end{equation}
Now using the choice of $p = \lfloor q \cdot n_3/(n_2+n_3) \rfloor$, we can bound
\begin{equation}\label{eq:p-bound-1}
\frac{q n_2}{q-p} \ge \frac{q n_2}{q-\left(q \cdot \frac{n_3}{n_2+n_3} - 1\right)} = \frac{n_2+n_3}{1+\frac{1}{q}\left(1+\frac{n_3}{n_2}\right)} \ge (n_2+n_3)\left[1-\frac{1}{q}\left(1+\frac{n_3}{n_2}\right)\right]
\end{equation}
and
\begin{equation}\label{eq:p-bound-2}
\frac{qn_3}{p+1} \ge \frac{qn_3}{q \cdot \frac{n_3}{n_2+n_3}+1} = \frac{n_2+n_3}{1+\frac{1}{q}\left(1+\frac{n_2}{n_3}\right)} \ge (n_2+n_3) \left[1 - \frac{1}{q}\left(1+\frac{n_2}{n_3}\right)\right].
\end{equation}
Therefore it suffices to have
\[ r \le (n_2+n_3)\left[1-\frac{1}{q}\left(1+\max\left\{\frac{n_2}{n_3},\frac{n_3}{n_2}\right\}\right)\right] - q. \]

\subsection{Proof of Theorem~\ref{thm:uniqueness}}
\label{sec:pf-uniqueness}

Suppose $T$ has an alternative decomposition $T = \sum_{\ell=1}^r \tilde{a}^{(\ell)} \otimes \tilde{b}^{(\ell)} \otimes \tilde{c}^{(\ell)}$. Recalling the setup in Section~\ref{sec:rank-det},
\[ M(T;p,q) = \sum_{\ell=1}^r A(a^{(\ell)};p,q) \otimes (b^{(\ell)}c^{(\ell)\top}) = \sum_{\ell=1}^r A(\tilde{a}^{(\ell)};p,q) \otimes (\tilde{b}^{(\ell)}\tilde{c}^{(\ell)\top}). \]
By assumption~\ref{item:M}, this matrix has rank exactly $r\binom{q-1}{p}$. By Lemma~\ref{lem:rank-A}, the matrix $A(a;p,q)$ always has rank at most $\binom{q-1}{p}$, which means each term $A(a^{(\ell)};p,q) \otimes (b^{(\ell)}c^{(\ell)\top})$ must have rank exactly $\binom{q-1}{p}$, and in particular, $b^{(\ell)} \ne 0$ and $c^{(\ell)} \ne 0$ (i.e., for each $\ell$, neither $b^{(\ell)}$ nor $c^{(\ell)}$ is the zero vector). Furthermore, every column of $A(a^{(\ell)};p,q) \otimes (b^{(\ell)}c^{(\ell)\top})$ must lie in the column span of $M$, and so
\begin{equation}\label{eq:colspan-M}
\Colspan(M) = \Span\{z^{(U)}(a^{(\ell)};p,q) \otimes b^{(\ell)} \,:\, U \subseteq [q], \, |U| = p+1, \, \ell \in [r]\},
\end{equation}
where, recall, $z^{(U)}(a;p,q)$ denotes column $U$ of $A(a;p,q)$. Similarly, we have the same conclusion for the alternative decomposition: $\tilde{b}^{(\ell)} \ne 0$ and $\tilde{c}^{(\ell)} \ne 0$ for each $\ell \in [r]$, and
\begin{equation}\label{eq:colspan-M-alt}
\Colspan(M) = \Span\{z^{(U)}(\tilde{a}^{(\ell)};p,q) \otimes \tilde{b}^{(\ell)} \,:\, U \subseteq [q], \, |U| = p+1, \, \ell \in [r]\}.
\end{equation}
Now recall that we define $V := [p+1]$ and denote by $Z_{p,q} \subseteq \RR^{\binom{q}{p}}$ the subspace consisting of vectors with the same sparsity pattern as $z^{(V)}$, that is, nonzeros are only allowed in the $p+1$ entries $S$ where $S \subseteq V$.

\begin{lemma}\label{lem:subspace-intersect}
Under the assumptions of Theorem~\ref{thm:uniqueness},
\[ \Colspan(M) \cap (Z_{p,q} \otimes \RR^{n_2}) = \Span\{z^{(V)}(a^{(\ell)};p,q) \otimes b^{(\ell)} \,:\, \ell \in [r]\}. \]
\end{lemma}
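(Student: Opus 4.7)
}

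The containment $\supseteq$ is immediate: the column of $M$ indexed by $(V,k)$ (in the sense of~\eqref{eq:colspan-M}) is, up to scalar, the vector $z^{(V)}(a^{(\ell)};p,q) \otimes b^{(\ell)}$, and since $z^{(V)}(a^{(\ell)};p,q)$ has nonzero entries only on rows $S \subseteq V$, it lies in $Z_{p,q}$. So each spanning vector on the right-hand side already lies in the intersection.

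For the nontrivial direction $\subseteq$, the plan is to take an arbitrary $v$ in the intersection, expand it in a convenient basis for $\Colspan(M)$, and then read off the constraint imposed by $v \in Z_{p,q} \otimes \RR^{n_2}$ as a linear system whose coefficient matrix is exactly $N$. In detail: by~\eqref{eq:colspan-M} we can write $v = \sum_{\ell,U} \alpha_{U,\ell} \, z^{(U)}(a^{(\ell)}) \otimes b^{(\ell)}$, but these spanning vectors are redundant. Using condition~\ref{item:a1} ($a^{(\ell)}_1 \ne 0$) together with Lemma~\ref{lem:rank-A} applied with $i=1$, the columns $\{z^{(U)}(a^{(\ell)}) : 1 \in U\}$ already span the column space of $A(a^{(\ell)};p,q)$, so we may assume
\[ v = \sum_{\ell=1}^r \sum_{U:\,1 \in U} \beta_{U,\ell} \, z^{(U)}(a^{(\ell)}) \otimes b^{(\ell)}. \]
The terms with $U = V$ already lie in $Z_{p,q} \otimes \RR^{n_2}$ by the observation above, so it suffices to show $\beta_{U,\ell} = 0$ whenever $1 \in U$ and $U \ne V$.

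Now impose $v \in Z_{p,q} \otimes \RR^{n_2}$, which means that for every row $(S,j)$ with $S \subseteq [q]$, $|S|=p$, and $S \not\subseteq V$, the corresponding entry of $v$ vanishes. Entry $S$ of $z^{(U)}(a^{(\ell)})$ equals $\sigma(U,i)\,a^{(\ell)}_i$ when $U = S \sqcup \{i\}$ for some $i$, and zero otherwise; in particular, for $U = V$ this contributes nothing at rows with $S \not\subseteq V$, so these terms drop out. The vanishing condition on entry $(S,j)$ therefore reads
\[ \sum_{\ell=1}^r \sum_{\substack{U:\, 1 \in U \\ U \ne V}} \beta_{U,\ell} \cdot b^{(\ell)}_j \cdot \sum_{i=1}^q \One_{U = S \sqcup \{i\}} \, \sigma(U,i) \, a^{(\ell)}_i \;=\; 0. \]
Comparing with~\eqref{eq:def-N}, this is precisely the linear system $N \beta = 0$, where $\beta$ is the vector with entries $\beta_{U,\ell}$ indexed by $(U,\ell)$ with $1 \in U$, $U \ne V$, $\ell \in [r]$.

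By condition~\ref{item:N}, $N$ has full column rank, so $\beta = 0$, meaning $\beta_{U,\ell} = 0$ for all $(U,\ell)$ with $1 \in U$ and $U \ne V$. Hence $v = \sum_\ell \beta_{V,\ell} \, z^{(V)}(a^{(\ell)}) \otimes b^{(\ell)}$, as desired. The main (and essentially only) nontrivial step is recognizing the vanishing conditions as the system defined by $N$; everything else is bookkeeping, and the definition of $N$ in Definition~\ref{def:uniqueness-def} was evidently engineered so that this identification is exact.
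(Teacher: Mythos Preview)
Your proof is correct and follows essentially the same approach as the paper: both reduce to the spanning set $\{z^{(U)}(a^{(\ell)}) \otimes b^{(\ell)} : 1 \in U\}$ via condition~\ref{item:a1} and Lemma~\ref{lem:rank-A}, then identify the vanishing conditions at rows $(S,j)$ with $S \not\subseteq V$ as the linear system $N\beta = 0$, and conclude via condition~\ref{item:N}. The paper additionally remarks (using condition~\ref{item:M}) that these vectors form a \emph{basis} for $\Colspan(M)$, but as your argument shows, spanning alone suffices here.
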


\noindent The proof of Lemma~\ref{lem:subspace-intersect} is deferred to Section~\ref{sec:pf-subspace-intersect}. The proof relies on assumption~\ref{item:N}.

Recall that $\varphi_{p,q}$, defined in~\eqref{eq:def-phi}, is an isomorphism $Z_{p,q} \to \RR^{p+1}$ designed so that $\varphi_{p,q}(z^{(V)}(a;p,q)) = (a_1,\ldots,a_{p+1})^\top$. From Lemma~\ref{lem:subspace-intersect} we now conclude
\begin{equation}\label{eq:apply-phi-2}
(\varphi_{p,q} \otimes I)(\Colspan(M) \cap (Z_{p,q} \otimes \RR^{n_2})) = \Span\{d^{(\ell)} \otimes b^{(\ell)} \,:\, \ell \in [r]\},
\end{equation}
where, recall, $d^{(\ell)} := (a^{(\ell)}_1,\ldots,a^{(\ell)}_{p+1})^\top$.

\begin{lemma}\label{lem:JLV}
Under the assumptions of Theorem~\ref{thm:uniqueness}, the vectors $\{d^{(\ell)} \otimes b^{(\ell)} \,:\, \ell \in [r]\}$ are the only rank-1 tensors, up to scalar multiple, in their span, $\Span\{d^{(\ell)} \otimes b^{(\ell)} \,:\, \ell \in [r]\}$. Furthermore, the algorithm of~\cite[Corollary~3]{JLV} recovers (scalar multiples of) these $r$ rank-1 tensors given their span.
\end{lemma}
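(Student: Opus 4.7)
The plan is to split the lemma into two claims that can be attacked separately: a structural claim that the only rank-1 elements of the span $\mathcal{S} := \Span\{d^{(\ell)} \otimes b^{(\ell)} : \ell \in [r]\}$ are scalar multiples of the $r$ generators, and an algorithmic claim that the procedure of~\cite[Corollary~3]{JLV} recovers them. The structural claim is where condition~\ref{item:P} enters; the algorithmic claim then follows by verifying that \ref{item:P} together with \ref{item:db-li} matches the hypotheses of the cited corollary.

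For the structural claim, I would view elements of $\mathcal{S}$ as $(p+1) \times n_2$ matrices and write a general element as $X = \sum_{\ell=1}^r \alpha_\ell \, d^{(\ell)} (b^{(\ell)})^\top$. Rank at most one is equivalent to the vanishing of every $2 \times 2$ minor $X_{i_1 j_1} X_{i_2 j_2} - X_{i_1 j_2} X_{i_2 j_1}$ for $i_1 < i_2$ and $j_1 < j_2$. Expanding in the $\alpha_\ell$'s, the diagonal ($\ell_1 = \ell_2$) contributions cancel identically, and after symmetrizing the off-diagonal terms over unordered pairs $\{\ell_1 < \ell_2\}$, the minor becomes $\sum_{\ell_1 < \ell_2} \alpha_{\ell_1} \alpha_{\ell_2} \, P_{i_1 i_2 j_1 j_2, \ell_1 \ell_2}$ by direct comparison with~\eqref{eq:def-P} (using $d^{(\ell)}_i = a^{(\ell)}_i$ for $i \le p+1$). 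Hence ``$X$ has rank $\le 1$'' is equivalent to the vector $(\alpha_{\ell_1} \alpha_{\ell_2})_{\ell_1 < \ell_2}$ lying in the kernel of $P$. Condition~\ref{item:P} forces this vector to be zero, so at most one $\alpha_\ell$ is nonzero, giving the structural claim.

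For the algorithmic claim, I would invoke~\cite[Corollary~3]{JLV} in the ``$s = R$'' specialization. Its hypotheses are that the planted rank-1 tensors are linearly independent (our~\ref{item:db-li}) and that a certain $2 \times 2$-minor-based matrix has full column rank, which is precisely our~\ref{item:P}. Under these, the JLV procedure outputs $r$ rank-1 tensors spanning $\mathcal{S}$, and by the structural claim each must be a nonzero scalar multiple of some $d^{(\ell)} \otimes b^{(\ell)}$. The main obstacle is the bookkeeping needed to line up the minor expansion with the particular entries of $P$ from~\eqref{eq:def-P} (especially tracking signs and the $\ell_1 \leftrightarrow \ell_2$ symmetrization); once that identification is made, both parts of the lemma follow directly from the hypotheses of Theorem~\ref{thm:uniqueness} and the cited result.
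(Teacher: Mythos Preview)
Your proposal is correct and rests on the same two ingredients as the paper's proof: the $2\times 2$-minor equations for rank-$1$ matrices, and the full-column-rank condition~\ref{item:P} on the matrix $P$. Your execution of the structural claim is in fact more direct than the paper's. You expand the minors of $X=\sum_\ell \alpha_\ell\, d^{(\ell)}(b^{(\ell)})^\top$ immediately in the $\alpha_\ell$'s, observe that diagonal terms cancel and off-diagonal terms produce exactly $P$ acting on the vector $(\alpha_{\ell_1}\alpha_{\ell_2})_{\ell_1<\ell_2}$, and conclude from~\ref{item:P} that at most one $\alpha_\ell$ survives. The paper instead lifts to $w=z\otimes z$, shows via~\ref{item:P} that $w$ lies in $\Span\{d^{(\ell)}\otimes b^{(\ell)}\otimes d^{(\ell)}\otimes b^{(\ell)}\}$, and then invokes a separate auxiliary lemma (that for linearly independent $y^{(\ell)}$ the only rank-$1$ elements of $\Span\{y^{(\ell)}\otimes y^{(\ell)}\}$ are the generators) to descend back to $z$. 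Your route avoids that auxiliary lemma entirely. The trade-off is that the paper's lifting mirrors what the JLV algorithm actually computes (the subspace intersection $\cA\cap\cB$ followed by simultaneous diagonalization), so the algorithmic half of the lemma becomes transparent there; in your version you rely on citing~\cite[Corollary~3]{JLV} as a black box, which is fine but less self-contained.
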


\noindent The proof of Lemma~\ref{lem:JLV} is based on~\cite{JLV} and can be found in Section~\ref{sec:pf-JLV}. The proof relies on assumptions~\ref{item:db-li} and~\ref{item:P}.

Define $\tilde{d}^{(\ell)} := (\tilde{a}^{(\ell)}_1,\ldots,\tilde{a}^{(\ell)}_{p+1})^\top$. From~\eqref{eq:colspan-M-alt} we have $z^{(V)}(\tilde{a}^{(\ell)};p,q) \otimes \tilde{b}^{(\ell)} \in \Colspan(M) \cap (Z_{p,q} \otimes \RR^{n_2})$ for all $\ell \in [r]$, and so $\tilde{d}^{(\ell)} \otimes \tilde{b}^{(\ell)} \in (\varphi_{p,q} \otimes I)(\Colspan(M) \cap (Z_{p,q} \otimes \RR^{n_2}))$. But by~\eqref{eq:apply-phi-2} and Lemma~\ref{lem:JLV}, the only rank-1 tensors in this subspace are scalar multiples of $d^{(\ell)} \otimes b^{(\ell)}$. We therefore conclude that for every $\ell \in [r]$ there exists $\ell' \in [r]$ such that $\tilde{d}^{(\ell)} \otimes \tilde{b}^{(\ell)}$ is a scalar multiple of $d^{(\ell')} \otimes b^{(\ell')}$.

Next we need to repeat the argument with the ``$b$'' and ``$c$'' modes switched and with $p$ replaced by $q-p-1$, which will use the assumptions~\ref{item:dc-li}, \ref{item:M-prime}, \ref{item:N-prime}, \ref{item:P-prime}. Due to symmetry, the argument is identical to before, and we only state the main claims. First, the flattening $M'$ takes the form
\[ M'(T;p,q) = \sum_{\ell=1}^r A(a^{(\ell)};q-p-1,q) \otimes (c^{(\ell)}b^{(\ell)\top}) = \sum_{\ell=1}^r A(\tilde{a}^{(\ell)};q-p-1,q) \otimes (\tilde{c}^{(\ell)}\tilde{b}^{(\ell)\top}), \]
and by Lemma~\ref{lem:rank-A}, the maximum possible rank of $A(a;q-p-1,q)$ is $\binom{q-1}{q-p-1} = \binom{q-1}{p}$. Analogous to~\eqref{eq:colspan-M} and~\eqref{eq:colspan-M-alt}, we have
\[ \Colspan(M') = \Span\{z^{(U)}(a^{(\ell)};q-p-1,q) \otimes c^{(\ell)} \,:\, U \subseteq [q], \, |U| = q-p, \, \ell \in [r]\} \]
and
\[ \Colspan(M') = \Span\{z^{(U)}(\tilde{a}^{(\ell)};q-p-1,q) \otimes \tilde{c}^{(\ell)} \,:\, U \subseteq [q], \, |U| = q-p, \, \ell \in [r]\}. \]
Analogous to Lemma~\ref{lem:subspace-intersect} and~\eqref{eq:apply-phi-2}, we have, with $V' := [q-p]$,
\[ \Colspan(M') \cap (Z_{q-p-1,q} \otimes \RR^{n_3}) = \Span\{z^{(V')}(a^{(\ell)};q-p-1,q) \otimes c^{(\ell)} \,:\, \ell \in [r]\}. \]
and
\[ (\varphi_{q-p-1,q} \otimes I)(\Colspan(M') \cap (Z_{q-p-1,q} \otimes \RR^{n_3})) = \Span\{f^{(\ell)} \otimes c^{(\ell)} \,:\, \ell \in [r]\}, \]
where, recall, $f^{(\ell)} := (a^{(\ell)}_1,\ldots,a^{(\ell)}_{q-p})^\top$. Analogous to Lemma~\ref{lem:JLV}, the vectors $\{f^{(\ell)} \otimes c^{(\ell)} \,:\, \ell \in [r]\}$ are the only rank-1 tensors in their span, up to scalar multiple, and the algorithm of~\cite{JLV} recovers them from their span. Finally, defining $\tilde{f}^{(\ell)} := (\tilde{a}^{(\ell)}_1,\ldots,\tilde{a}^{(\ell)}_{q-p})^\top$, we have $\tilde{f}^{(\ell)} \otimes \tilde{c}^{(\ell)} \in (\varphi_{q-p-1,q} \otimes I)(\Colspan(M') \cap (Z_{q-p-1,q} \otimes \RR^{n_3}))$ and so every $\tilde{f}^{(\ell)} \otimes \tilde{c}^{(\ell)}$ is a scalar multiple of some $f^{(\ell'')} \otimes c^{(\ell'')}$.

Let $R \subseteq [r]$ be the set of $\ell$ for which $\tilde{a}^{(\ell)}_1 \ne 0$. For each $\ell \in [r]$ we have $\tilde{d}^{(\ell)} \ne 0$ and $\tilde{f}^{(\ell)} \ne 0$. Recall that $b^{(\ell)}, \tilde{b}^{(\ell)}, c^{(\ell)}, \tilde{c}^{(\ell)}, d^{(\ell)}, f^{(\ell)}$ are all nonzero (where $d^{(\ell)} \ne 0$ since $a^{(\ell)}_1 \ne 0$ by assumption~\ref{item:a1}, and similarly for $f^{(\ell)}$). For each $\ell \in R$ we have established that $\tilde{d}^{(\ell)} \otimes \tilde{b}^{(\ell)}$ is a nonzero scalar multiple of some $d^{(\ell')} \otimes b^{(\ell')}$, and $\tilde{f}^{(\ell)} \otimes \tilde{c}^{(\ell)}$ is a nonzero scalar multiple of some $f^{(\ell'')} \otimes c^{(\ell'')}$. In particular, $\tilde{d}^{(\ell)}$ is a nonzero scalar multiple of $d^{(\ell')}$, and $\tilde{f}^{(\ell)}$ is a nonzero scalar multiple of $f^{(\ell'')}$. Since $\tilde{d}^{(\ell)}$ and $\tilde{f}^{(\ell)}$ agree on the first $\bar{p}$ coordinates (and similarly for $d^{(\ell)},f^{(\ell)}$), Assumption~\ref{item:d-distinct} implies that $\ell' = \ell''$. This means we can define a function $\pi: R \to [r]$ such that $\tilde{b}^{(\ell)} \otimes \tilde{c}^{(\ell)} = \eta_\ell \cdot b^{(\pi(\ell))} \otimes c^{(\pi(\ell))}$ for a scalar $\eta_\ell \ne 0$, for each $\ell \in R$.

Let $T_1 := (T_{1jk} \,:\, j \in [n_2], \, k \in [n_3])$ denote the first $n_2 \times n_3$ matrix slice of $T$, namely
\[ T_1 = \sum_{\ell=1}^r a^{(\ell)}_1 \cdot b^{(\ell)} \otimes c^{(\ell)}. \]
Similarly for the alternative decomposition,
\[ T_1 = \sum_{\ell=1}^r \tilde{a}^{(\ell)}_1 \cdot \tilde{b}^{(\ell)} \otimes \tilde{c}^{(\ell)} = \sum_{\ell \in R} \tilde{a}^{(\ell)}_1 \cdot \eta_\ell \cdot b^{(\pi(\ell))} \otimes c^{(\pi(\ell))}, \]
since $\tilde{a}^{(\ell)}_1 = 0$ for $\ell \notin R$. By assumptions~\ref{item:a1} and \ref{item:bc-li}, $a^{(\ell)}_1 \ne 0$ for all $\ell \in [r]$, and the vectors $\{b^{(\ell)} \otimes c^{(\ell)} \,:\, \ell \in [r]\}$ are linearly independent. Comparing the previous expressions for $T_1$, this means $R = [r]$, $\pi$ is a permutation of $[r]$, and $\tilde{a}^{(\ell)}_1 = \eta_\ell^{-1} \cdot a^{(\pi(\ell))}_1$. Now for each slice $T_i := (T_{ijk} \,:\, j \in [n_2], \, k \in [n_3]\}$ with $i \in [n_1]$, we can write
\[ T_i = \sum_{\ell=1}^r a^{(\ell)}_i \cdot b^{(\ell)} \otimes c^{(\ell)} = \sum_{\ell=1}^r \tilde{a}^{(\ell)}_i \cdot \eta_\ell \cdot b^{(\pi(\ell))} \otimes c^{(\pi(\ell))}, \]
and again using the linear independence of $\{b^{(\ell)} \otimes c^{(\ell)} \,:\, \ell \in [r]\}$ we must have $\tilde{a}^{(\ell)}_i = \eta_\ell^{-1} \cdot a^{(\pi(\ell))}_i$. This means $\tilde{a}^{(\ell)} = \eta_\ell^{-1} \cdot a^{(\pi(\ell))}$ and so the alternative decomposition has the same rank-1 terms as the original: $\tilde{a}^{(\ell)} \otimes \tilde{b}^{(\ell)} \otimes \tilde{c}^{(\ell)} = a^{(\pi(\ell))} \otimes b^{(\pi(\ell))} \otimes c^{(\pi(\ell))}$, with $\pi$ a permutation of $[r]$. This completes the proof of uniqueness.

The above argument also implies success of Algorithm~\ref{alg:decomp}. From~\eqref{eq:apply-phi-2} we know the subspace $B$ constructed by the algorithm is equal to $\Span\{d^{(\ell)} \otimes b^{(\ell)} \,:\, \ell \in [r]\}$, and similarly, $C = \Span\{f^{(\ell)} \otimes c^{(\ell)} \,:\, \ell \in [r]\}$. By Lemma~\ref{lem:JLV}, the rank-1 tensors $\hat{d}^{(\ell)} \otimes \hat{b}^{(\ell)}$ recovered by the~\cite{JLV} algorithm satisfy $\hat{d}^{(\ell)} = \alpha_\ell \cdot d^{(\pi_1(\ell))}$ and $\hat{b}^{(\ell)} = \beta_\ell \cdot b^{(\pi_1(\ell))}$ for scalars $\alpha_\ell \ne 0,\, \beta_\ell \ne 0$ and a permutation $\pi_1$ of $[r]$. Similarly, $\hat{f}^{(\ell)} = \gamma_\ell \cdot f^{(\pi_2(\ell))}$ and $\hat{c}^{(\ell)} = \delta_\ell \cdot c^{(\pi_2(\ell))}$ for scalars $\gamma_\ell \ne 0, \, \delta_\ell \ne 0$ and a permutation $\pi_2$ of $[r]$. Using Assumption~\ref{item:d-distinct}, the (unique) permutation $\tau$ found by Algorithm~\ref{alg:decomp} is $\tau = \pi_2^{-1} \circ \pi_1$. The final step is to solve the linear system of equations~\eqref{eq:linear} for $w^{(\ell)}$, which reduces to
\[ \sum_{\ell=1}^r w^{(\ell)}_i \cdot (\beta_\ell \cdot b^{(\pi_1(\ell))}) \otimes (\delta_{\tau(\ell)} \cdot c^{(\pi_1(\ell))}) = T_i \qquad \text{for } i \in [n_1]. \]
One solution is $w^{(\ell)} = \beta_\ell^{-1} \cdot \delta_{\tau(\ell)}^{-1} \cdot a^{(\pi_1(\ell))}$, which recovers the desired decomposition. This solution is unique because $\{b^{(\ell)} \otimes c^{(\ell)} \,:\, \ell \in [r]\}$ are linearly independent by assumption~\ref{item:bc-li}.

\subsection{Proof of Lemma~\ref{lem:subspace-intersect}}
\label{sec:pf-subspace-intersect}

Using~\eqref{eq:colspan-M}, the vectors $z^{(V)}(a^{(\ell)};p,q) \otimes b^{(\ell)}$ for $\ell \in [r]$ lie in $\Colspan(M) \cap (Z_{p,q} \otimes \RR^{n_2})$, so the inclusion ``$\supseteq$'' in Lemma~\ref{lem:subspace-intersect} holds. For the reverse inclusion, assume on the contrary that ``$\subseteq$'' fails. By assumption~\ref{item:a1} we have $a^{(\ell)}_1 \ne 0$ for all $\ell \in [r]$, so using~\eqref{eq:colspan-M} and Lemma~\ref{lem:rank-A}, the vectors $\{z^{(U)}(a^{(\ell)};p,q) \otimes b^{(\ell)} \,:\, U \subseteq [q], \, |U| = p+1, \, 1 \in U, \, \ell \in [r]\}$ span $\Colspan(M)$; in fact, since the number of these vectors matches $\rank(M)$, they must form a basis for $\Colspan(M)$. Define $\cU_1 := \{U \subseteq [q] \,:\, |U| = p+1, \, 1 \in U, \, U \ne V\}$. Now, failure of ``$\subseteq$'' is equivalent to the existence of coefficients $\{\gamma_{U,\ell} \in \RR \,:\, U \in \cU_1, \, \ell \in [r]\}$, not all zero, such that
\begin{equation}\label{eq:gamma-coeff}
\sum_{U \in \cU_1, \, \ell \in [r]} \gamma_{U,\ell} \cdot (z^{(U)}(a^{(\ell)};p,q) \otimes b^{(\ell)}) \in Z_{p,q} \otimes \RR^{n_2}.
\end{equation}
Membership in $Z_{p,q} \otimes \RR^{n_2}$ amounts to having a particular sparsity pattern, namely the entries $(S,j)$ with $S \not\subseteq V = [p+1]$ must be zero. This means we can write~\eqref{eq:gamma-coeff} in matrix form as $N\gamma = 0$ where $N$ is defined in~\eqref{eq:def-N}. But $N$ has full column rank by assumption~\ref{item:N}, a contradiction.

\subsection{Proof of Lemma~\ref{lem:JLV}}
\label{sec:pf-JLV}

The following arguments are based on~\cite{JLV}. We will unpack some details of~\cite{JLV} in order to extract an explicit condition under which their algorithm succeeds.

Define the set of rank-1 tensors,
\[ \cX := \{d \otimes b \,:\, d \in \RR^{p+1}, \, b \in \RR^{n_2}\}, \]
and the subspace
\[ \cY := \Span\{d^{(\ell)} \otimes b^{(\ell)} \,:\, \ell \in [r]\}. \]
For any $z \in \cX \cap \cY$, the vector $w := z \otimes z$ must belong to the subspace $\Span\{y \otimes y \,:\, y \in \cY\}$. Since a basis for $\cY$ is given by $y^{(1)},\ldots,y^{(r)}$ with $y^{(\ell)} := d^{(\ell)} \otimes b^{(\ell)}$ (which are linearly independent by assumption~\ref{item:db-li}), a basis for the subspace $\Span\{y \otimes y \,:\, y \in \cY\}$ is given by $\{y^{(\ell_1)} \otimes y^{(\ell_2)} + y^{(\ell_2)} \otimes y^{(\ell_1)} \,:\, 1 \le \ell_1 \le \ell_2 \le r\}$, so we can write
\begin{equation}\label{eq:delta-1}
w = \sum_{1 \le \ell_1 \le \ell_2 \le r} \delta_{\ell_1,\ell_2} (d^{(\ell_1)} \otimes b^{(\ell_1)} \otimes d^{(\ell_2)} \otimes b^{(\ell_2)} + d^{(\ell_2)} \otimes b^{(\ell_2)} \otimes d^{(\ell_1)} \otimes b^{(\ell_1)}),
\end{equation}
for some coefficients $\delta_{\ell_1,\ell_2} \in \RR$. Also, viewing $z$ as a $(p+1) \times n_2$ matrix, since $z$ has rank 1 we know every 2-by-2 minor must vanish:
\[ z_{i_1,j_1} z_{i_2,j_2} - z_{i_1,j_2} z_{i_2,j_1} = 0 \qquad \text{for } 1 \le i_1 < i_2 \le p+1 \text{ and } 1 \le j_1 < j_2 \le n_2. \]
Equivalently,
\begin{equation}\label{eq:delta-2}
\langle w, E_{i_1,i_2;j_1,j_2} \rangle = 0 \qquad \text{for } 1 \le i_1 < i_2 \le p+1 \text{ and } 1 \le j_1 < j_2 \le n_2,
\end{equation}
where
\[ E_{i_1,i_2;j_1,j_2} = e_{i_1} \otimes e_{j_1} \otimes e_{i_2} \otimes e_{j_2} - e_{i_1} \otimes e_{j_2} \otimes e_{i_2} \otimes e_{j_1}. \]
Now $w$ belongs to the intersection of two subspaces $\cA \cap \cB$ where $\cA$ is the set of vectors expressible in the form~\eqref{eq:delta-1} and $\cB$ is the set of solutions to~\eqref{eq:delta-2}. A key step in the algorithm of~\cite{JLV} is to compute this intersection of subspaces (which in their notation is $S^2(\cY) \cap I^\perp_2$). The hope is to have no ``spurious'' vectors in this subspace, that is,
\begin{equation}\label{eq:JLV-intersect}
\cA \cap \cB = \Span\{d^{(\ell)} \otimes b^{(\ell)} \otimes d^{(\ell)} \otimes b^{(\ell)} \,:\ \ell \in [r]\}.
\end{equation}
The containment ``$\supseteq$'' in~\eqref{eq:JLV-intersect} always holds, so failure of~\eqref{eq:JLV-intersect} is equivalent to the existence of coefficients $\delta_{\ell_1,\ell_2}$, not all zero, but with $\delta_{\ell,\ell} = 0$, such that the vector~\eqref{eq:delta-1} satisfies~\eqref{eq:delta-2}:
\[ \sum_{1 \le \ell_1 < \ell_2 \le r} \delta_{\ell_1,\ell_2} \langle d^{(\ell_1)} \otimes b^{(\ell_1)} \otimes d^{(\ell_2)} \otimes b^{(\ell_2)} + d^{(\ell_2)} \otimes b^{(\ell_2)} \otimes d^{(\ell_1)} \otimes b^{(\ell_1)}, E_{i_1,i_2;j_1,j_2} \rangle = 0 \]
for $1 \le i_1 < i_2 \le p+1 \text{ and } 1 \le j_1 < j_2 \le n_2$. This reduces to
\[ \sum_{1 \le \ell_1 < \ell_2 \le r} \delta_{\ell_1,\ell_2} \left(d^{(\ell_1)}_{i_1} b^{(\ell_1)}_{j_1} d^{(\ell_2)}_{i_2} b^{(\ell_2)}_{j_2} + d^{(\ell_2)}_{i_1} b^{(\ell_2)}_{j_1} d^{(\ell_1)}_{i_2} b^{(\ell_1)}_{j_2} - d^{(\ell_1)}_{i_1} b^{(\ell_1)}_{j_2} d^{(\ell_2)}_{i_2} b^{(\ell_2)}_{j_1} - d^{(\ell_2)}_{i_1} b^{(\ell_2)}_{j_2} d^{(\ell_1)}_{i_2} b^{(\ell_1)}_{j_1}\right) = 0, \]
which can be written in matrix form as $P \delta = 0$ with $P$ defined in~\eqref{eq:def-P}. Since $P$ has full column rank by assumption~\ref{item:P}, there is no nonzero solution to $P \delta = 0$, so~\eqref{eq:JLV-intersect} holds.

We have now shown that for any $z \in \cX \cap \cY$, the corresponding $w := z \otimes z$ lies in the subspace~\eqref{eq:JLV-intersect}. Our goal is to show that $z$ is a scalar multiple of some $d^{(\ell)} \otimes b^{(\ell)}$. Since the vectors $d^{(\ell)} \otimes b^{(\ell)}$ are linearly independent by assumption~\ref{item:db-li}, Lemma~\ref{lem:unique-rank-1} below implies that the vectors $d^{(\ell)} \otimes b^{(\ell)} \otimes d^{(\ell)} \otimes b^{(\ell)}$ are the only vectors of the form $y \otimes y$ in their span, up to scalar multiple. This means $w$ is a scalar multiple of some $d^{(\ell)} \otimes b^{(\ell)} \otimes d^{(\ell)} \otimes b^{(\ell)}$, and thus $z$ is a scalar multiple of some $d^{(\ell)} \otimes b^{(\ell)}$, as desired.

\begin{lemma}\label{lem:unique-rank-1}
If $z^{(1)},\ldots,z^{(r)} \in \RR^N$ are linearly independent vectors then $z^{(1)} \otimes z^{(1)}, \ldots, z^{(r)} \otimes r^{(r)}$ are the only rank-1 tensors in their span, $\Span\{z^{(\ell)} \otimes z^{(\ell)} \,:\, \ell \in [r]\}$, up to scalar multiple, and furthermore there is an algorithm based on simultaneous diagonalization to recover (scalar multiples of) the vectors $z^{(\ell)} \otimes z^{(\ell)}$ given their span.
\end{lemma}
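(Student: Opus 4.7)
The plan is to exploit the fact that $\cY := \Span\{z^{(\ell)} \otimes z^{(\ell)} \,:\, \ell \in [r]\}$ consists of symmetric tensors, which I identify with symmetric $N \times N$ matrices. Let $Z \in \RR^{N \times r}$ be the matrix with columns $z^{(\ell)}$; by hypothesis $Z$ has full column rank. Every element of $\cY$ then takes the form $Z C Z^\top$ for a diagonal matrix $C \in \RR^{r \times r}$, and this parametrization is injective since one can recover $C = (Z^\top Z)^{-1} Z^\top (Z C Z^\top) Z (Z^\top Z)^{-1}$.

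For the uniqueness claim, suppose $w = u \otimes v \in \cY$ is a nonzero rank-$1$ tensor. Its matrix realization $uv^\top$ is symmetric, and comparing a row in which $u$ is nonzero forces $v$ to be a scalar multiple of $u$, so $w = \lambda u u^\top$ for some $\lambda \ne 0$. Write $w = Z C Z^\top$. The column span of $u u^\top$ is $\Span\{u\}$ and is contained in $\Colspan(Z)$, so $u = Zd$ for some $d \in \RR^r$. Then $\lambda Z d d^\top Z^\top = Z C Z^\top$, and injectivity yields $\lambda d d^\top = C$. A rank-$\le 1$ diagonal matrix has at most one nonzero entry, so $d$ is a scalar multiple of some standard basis vector $e_\ell$, and hence $w$ is a scalar multiple of $z^{(\ell)} \otimes z^{(\ell)}$.

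For the algorithmic part, I would invoke the standard simultaneous-diagonalization trick (Jennrich's algorithm). Pick two generic elements $W_1 = Z C_1 Z^\top$ and $W_2 = Z C_2 Z^\top$ of $\cY$ and form $W_1 W_2^+$. Using $Z^+ Z = I_r$ together with the identity $W_2^+ = (Z^+)^\top C_2^{-1} Z^+$ (verified by checking the four Moore--Penrose conditions, assuming $C_2$ is invertible), one computes
\[ W_1 W_2^+ \;=\; Z (C_1 C_2^{-1}) Z^+. \]
This $N \times N$ matrix contains $\Colspan(Z)^\perp$ in its kernel and satisfies $W_1 W_2^+ \, z^{(\ell)} = (C_1 C_2^{-1})_{\ell\ell}\, z^{(\ell)}$. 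For generic $W_1, W_2 \in \cY$, all diagonal entries of $C_1, C_2$ are nonzero and the $r$ ratios $(C_1)_{\ell\ell}/(C_2)_{\ell\ell}$ are distinct, so the $z^{(\ell)}$ (up to scalar multiple) are recovered as the eigenvectors of $W_1 W_2^+$ associated with its $r$ nonzero eigenvalues.

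I do not anticipate a substantive obstacle: the uniqueness step is a short linear-algebra argument, and the algorithm is a textbook instance of Jennrich's procedure. The only genericity check needed is that the ratios $(C_1)_{\ell\ell}/(C_2)_{\ell\ell}$ for different $\ell$ are not identically equal as functions of the coefficients parametrizing $W_1, W_2$, which is immediate.
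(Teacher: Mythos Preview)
Your proof is correct and follows essentially the same approach as the paper. For uniqueness, the paper applies a change of basis sending the $z^{(\ell)}$ to standard basis vectors, which is exactly your map $C \mapsto ZCZ^\top$ read in the other direction; both reduce the question to showing that a diagonal matrix is rank~$1$ only if it has a single nonzero entry. For the algorithm, the paper packages the basis of $\cY$ into a third-order tensor and invokes simultaneous diagonalization, whereas you describe the underlying Jennrich step directly (two generic elements of $\cY$ and an eigendecomposition of $W_1 W_2^+$); these are the same procedure, since contracting the paper's tensor along its first mode with a random vector yields precisely a generic element of $\cY$.
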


\begin{proof}
First we give a simple self-contained proof of the uniqueness claim. View each $z^{(\ell)} \otimes z^{(\ell)}$ as a rank-1 matrix, and apply a change of basis (to both rows and columns) so that $z^{(1)},\ldots,z^{(r)}$ become standard basis vectors $e^{(1)},\ldots,e^{(r)}$. Now each $z^{(\ell)} \otimes z^{(\ell)}$ becomes a diagonal matrix with a single nonzero entry at position $(\ell,\ell)$. Any linear combination $\sum_{\ell=1}^r \alpha_\ell \cdot z^{(\ell)} \otimes z^{(\ell)}$ is diagonal and has rank equal to the number of nonzero coefficients $\alpha_\ell$ (and in particular, the vectors $z^{(\ell)} \otimes z^{(\ell)}$ are linearly independent). Therefore the only rank-1 matrices in this span are scalar multiples of a single $z^{(\ell)} \otimes z^{(\ell)}$.

There is also an algorithmic proof using simultaneous diagonalization, which appears in~\cite{JLV}. Given an arbitrary basis $p^{(1)},\ldots,p^{(r)}$ for $\Span\{z^{(\ell)} \otimes z^{(\ell)} \,:\, \ell \in [r]\}$, construct the third-order tensor $R := \sum_{\ell=1}^r e^{(\ell)} \otimes p^{(\ell)} \in \RR^r \otimes \RR^N \otimes \RR^N$ where $e^{(\ell)}$ denotes the $\ell$th standard basis vector. There are coefficients $\beta_{\ell m }$ such that $p^{(\ell)} = \sum_{m=1}^r \beta_{\ell m} \cdot z^{(m)} \otimes z^{(m)}$ for each $\ell \in [r]$, where $\beta = (\beta_{\ell m} \,:\, \ell,m \in [r])$ is an invertible matrix. This means $R$ admits the decomposition
\begin{equation}\label{eq:R-decomp}
R = \sum_{\ell=1}^r e^{(\ell)} \otimes \left(\sum_{m=1}^r \beta_{\ell m} \cdot z^{(m)} \otimes z^{(m)}\right) = \sum_{m=1}^r \left(\sum_{\ell=1}^r \beta_{\ell m} \cdot e^{(\ell)}\right) \otimes z^{(m)} \otimes z^{(m)}.
\end{equation}
The vectors $\{z^{(m)} \,:\, m \in [r]\}$ are linearly independent by assumption, and the vectors $\{\sum_{\ell=1}^r \beta_{\ell m} \cdot e^{(\ell)} \,:\, m \in [r]\}$ are linearly independent because they are the columns of the invertible matrix $\beta$. This verifies the conditions for simultaneous diagonalization (see~\cite[Theorem~3.1.3]{moitra-book}) so the decomposition on the right-hand side of~\eqref{eq:R-decomp} is unique and can be recovered by an efficient algorithm. This allows recovery of the vectors $z^{(\ell)} \otimes z^{(\ell)}$ (up to re-ordering and scalar multiple).
\end{proof}

Finally, the algorithmic claim of Lemma~\ref{lem:JLV} also follows from the above, since the~\cite{JLV} algorithm first computes the intersection of subspaces $\cA \cap \cB$, which is equal to~\eqref{eq:JLV-intersect}, and then applies simultaneous diagonalization as in Lemma~\ref{lem:unique-rank-1} to recover the vectors $d^{(\ell)} \otimes b^{(\ell)}$ from this subspace.

\subsection{Proof of Theorem~\ref{thm:generic-success}}
\label{sec:pf-generic-success}

First we verify the claim $\bar{p} \ge 2$. To show $p \ge 1$, it suffices by the choice of $p$ to have $q n_3 / (n_2+n_3) \ge 1$. This follows by combining two facts: $q > 1+\alpha$, which is a consequence of the assumption on $q$, and $n_3/(n_2+n_3) \ge 1/(1+\alpha)$, which is a consequence of the definition of $\alpha$. Similarly, to show $q-p \ge 2$, it suffices by the choice of $p$ to have $q n_3/(n_2+n_3) < q-1$, i.e., $q n_2/(n_2+n_3) > 1$, which again follows from $q > 1+\alpha$ and $n_2/(n_2+n_3) \ge 1/(1+\alpha)$.

Now conditions~\ref{item:a1} and~\ref{item:d-distinct} are clearly met for generic components, since $\bar{p} \ge 2$. The following lemma establishes~\ref{item:db-li}--\ref{item:bc-li}, provided $r \le \min\{(p+1)n_2,(q-p)n_3,n_2 n_3\}$.

\begin{lemma}
Suppose $x^{(1)} \otimes y^{(1)}, \ldots, x^{(r)} \otimes y^{(r)} \in \RR^m \otimes \RR^n$ with the components $x^{(\ell)}, y^{(\ell)}$ generically chosen. If $r \le mn$ then $\{x^{(\ell)} \otimes y^{(\ell)} \,:\, \ell \in [r]\}$ are linearly independent.
\end{lemma}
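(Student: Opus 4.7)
The plan is a standard ``exhibit one nondegenerate instance'' argument. Form the $mn \times r$ matrix $V = V(x,y)$ whose $\ell$th column is the vector $x^{(\ell)} \otimes y^{(\ell)}$, flattened in some fixed order. Its entries are polynomials (in fact monomials of bidegree $(1,1)$) in the entries of the $x^{(\ell)}$ and $y^{(\ell)}$. Linear independence of the $r$ tensors is equivalent to $V$ having full column rank $r$, which in turn is equivalent to at least one $r \times r$ minor of $V$ being nonzero. Each such minor is a polynomial in the entries of the components, so by the discussion of genericity in Section~\ref{sec:genericity} it suffices to exhibit a single choice of components for which some $r \times r$ minor is nonzero; the corresponding minor polynomial is then not identically zero, and hence nonzero for generically chosen components.

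To produce such a choice, use the assumption $r \le mn$. Pick $r$ distinct pairs $(i_1,j_1),\ldots,(i_r,j_r) \in [m] \times [n]$, and set $x^{(\ell)} := e_{i_\ell} \in \RR^m$ and $y^{(\ell)} := e_{j_\ell} \in \RR^n$, where $e_s$ denotes a standard basis vector. Then $x^{(\ell)} \otimes y^{(\ell)} = e_{i_\ell} \otimes e_{j_\ell}$, and these are $r$ distinct standard basis vectors of $\RR^m \otimes \RR^n \cong \RR^{mn}$, hence linearly independent. Selecting the $r$ rows of $V$ indexed by the positions $(i_\ell,j_\ell)$ gives an $r \times r$ submatrix which is a permutation matrix, and in particular has determinant $\pm 1 \ne 0$.

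Thus the corresponding $r \times r$ minor of $V$ is a polynomial in the entries of the $x^{(\ell)}, y^{(\ell)}$ that takes a nonzero value at the specific choice above, so it is not identically zero. By the definition of genericity, this minor is nonzero for generic components, so $V$ has rank $r$ and the tensors $x^{(\ell)} \otimes y^{(\ell)}$ are linearly independent. The only step that required any thought was choosing a convenient explicit instance, and the permutation-matrix choice makes this immediate; there is no real obstacle.
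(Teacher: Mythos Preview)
Your proof is correct and follows essentially the same approach as the paper: form the $mn \times r$ matrix of flattened tensors, reduce to exhibiting one choice of components for which an $r \times r$ minor is nonzero, and do this by sending the $x^{(\ell)}, y^{(\ell)}$ to standard basis vectors indexed by an injection $[r] \to [m] \times [n]$. The only difference is cosmetic---you make the permutation-matrix structure of the resulting submatrix explicit, whereas the paper simply notes the resulting tensors are linearly independent.
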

\begin{proof}
Form the $mn \times r$ matrix whose columns are $x^{(\ell)} \otimes y^{(\ell)}$. It suffices to identify a square submatrix using all the columns whose determinant is a nonzero polynomial in the entries of $x^{(\ell)}, y^{(\ell)}$. For this, it suffices to demonstrate a choice of the vectors $x^{(\ell)}, y^{(\ell)}$ for which $\{x^{(\ell)} \otimes y^{(\ell)} \,:\, \ell \in [r]\}$ are linearly independent. Fix an injection $[r] \to [m] \times [n]$. For each $\ell \in [r]$, if $\ell \mapsto (i,j)$ then set $x^{(\ell)} = e^{(i)}$ and $y^{(\ell)} = e^{(j)}$ where $e^{(i)}$ is the $i$th standard unit basis vector.
\end{proof}

Condition~\ref{item:M} on $M$ follows from the proof of Theorem~\ref{thm:rank-det}, namely~\eqref{eq:before-sub-p}, provided
\[ r \le q\left(\min\left\{\frac{n_2}{q-p}, \, \frac{n_3}{p+1}\right\} - 1\right). \]
This condition is unchanged under the operation that swaps $n_2$ with $n_3$ and replaces $p$ with $q-p-1$. Therefore, condition~\ref{item:M-prime} on $M'$ also holds under the same condition.

The following lemma establishes condition~\ref{item:N} on $N$.

\begin{lemma}\label{lem:N}
In the setting of Theorem~\ref{thm:uniqueness} with parameters $q \le n_1$ and $\bar{p} := \min\{p+1,q-p\} \ge 2$, if the components $a^{(\ell)}, b^{(\ell)}, c^{(\ell)}$ are generically chosen and
\[ r \le \frac{qn_2}{q-p}\left(1 - \frac{2}{q}\right) - \frac{q^3}{4}, \]
then $N$ has full column rank.
\end{lemma}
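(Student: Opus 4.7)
The plan is to mirror the determinant argument from the proof of Theorem~\ref{thm:rank-det}. Grouping columns of $N$ by $\ell$, I first observe that $N = [\bar{A}^{(\ell)} \otimes b^{(\ell)}]_{\ell=1}^r$, where $\bar{A}^{(\ell)}$ denotes the submatrix of $A(a^{(\ell)};p,q)$ with columns indexed by $\cU' := \{U : 1 \in U,\, U \ne [p+1]\}$ and rows $\{S : |S|=p,\, S \not\subseteq [p+1]\}$. This exactly parallels the factor $[Q^{(\ell)} \otimes b^{(\ell)}]_\ell$ appearing in the proof of Theorem~\ref{thm:rank-det}, except that $\bar{A}^{(\ell)}$ has dimensions $(\binom{q}{p}-(p+1)) \times (\binom{q-1}{p}-1)$ in place of $\binom{q}{p} \times \binom{q-1}{p}$.

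The first auxiliary step is to show that $\bar{A}(a;p,q)$ has full column rank $\binom{q-1}{p}-1$ for generic $a$, by adapting Lemma~\ref{lem:rank-A}. The columns of $A$ indexed by $\cU'$ sit inside the basis $\{U : 1 \in U\}$ and are therefore linearly independent in the full column space; a new dependence introduced by restricting rows to $S \not\subseteq [p+1]$ would require their span to meet the subspace $Z_{p,q}$ of vectors supported on $S \subseteq [p+1]$, but generically $\Colspan(A) \cap Z_{p,q}$ is the one-dimensional span of $z^{([p+1])}$, which is not in the span of the $\cU'$-columns. This in turn provides a factorization $\bar{A}(a;p,q) = \bar{Q}(a)\bar{R}(a)^\top$ with $\bar{R}(a)$ square and invertible of size $\binom{q-1}{p}-1$, mirroring the role of $Q,R$ in Theorem~\ref{thm:rank-det}. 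Proving that $N$ has full column rank then reduces to proving the same for $[\bar{Q}^{(\ell)} \otimes b^{(\ell)}]_\ell$.

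To construct a square submatrix of this reduced matrix whose determinant is a nonzero polynomial in the components, I would imitate the block construction in the proof of Theorem~\ref{thm:rank-det}: reduce to the extreme case $r = mq$, zero out appropriate entries of $b$ to diagonalize into $m$ independent blocks, and within each block use cells of size $1 \times (\binom{q-1}{p}-1)$ indexed by triples $(S,j,\ell)$ with $S$ ranging over the $\binom{q}{p}-(p+1)$ allowed row values, $j$ ranging over a suitable subset of $[n_2]$, and $\ell$ ranging over the $q$ pillars of the block. The row-budget constraint $n_2(\binom{q}{p}-(p+1)) \ge mq(\binom{q-1}{p}-1)$, combined with the identity $\binom{q}{p}/\binom{q-1}{p} = q/(q-p)$, translates into a bound of the form $r \le \frac{qn_2}{q-p}(1 - 2/q) - q^3/4$, where the leading term comes from the ratio of binomial coefficients and the lower-order $q^3/4$ slack absorbs the combinatorial overhead from excluding $(p+1)$ row-values and the column $U = [p+1]$, together with rounding in the construction of the block partition.

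The final and main step is the monomial-matching argument: expanding $\det$ of a single block as a sum over permutations, one selects a ``good'' cell assignment $G$ yielding a unique nonzero monomial, as in Theorem~\ref{thm:rank-det}. The natural choice is, within each pillar $\ell$, to take $S$ ranging over $\{S : |S|=p,\, \ell \notin S,\, S \not\subseteq [p+1]\}$ and pair each with the first available $j$. The main obstacle I anticipate is verifying that this augmented basis still has the right size (namely $\binom{q-1}{p}-1$) matching the reduced column count per pillar, and that the corresponding rows of the canonical matrix $\bar{\tilde Q} := \bar{Q}(\mathbf{1})$ remain linearly independent so that each pillar-factor $\det(F_\ell^G)$ is nonzero. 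Both points ultimately reduce to the full-column-rank claim for $\bar{A}(\mathbf{1})$ established in the first auxiliary step, but require delicate case analysis depending on whether the pillar index $\ell$ lies in $[p+1]$ or $\{p+2,\ldots,q\}$, where the two restrictions ``$\ell \notin S$'' and ``$S \not\subseteq [p+1]$'' may or may not interact.
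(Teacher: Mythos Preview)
Your plan has a genuine combinatorial gap at exactly the point you flag. With the ``natural'' rule (in pillar $\ell$, take all $S$ with $\ell \notin S$ and $S \not\subseteq [p+1]$), the per-pillar cell count is $\binom{q-1}{p}-1$ only when $\ell \in [p+1]$; for $\ell \in \{p+2,\ldots,q\}$ the condition $\ell\notin S$ no longer interacts with $S\not\subseteq[p+1]$, and the count drops to $\binom{q-1}{p}-(p+1)$, short of the required $\binom{q-1}{p}-1$ by $p$. Summed over the $q$ pillars, the deficit is $(q-p-1)p>0$ whenever $\bar{p}\ge 2$, so the assignment $G$ cannot be completed. You cannot patch the short pillars with cells having $\ell\in S$, since that reintroduces the variable $a_{\ell\ell}^{-1}$ and destroys the monomial-decoding step that identifies the $S$-set of a pillar from the absence of $a_{\ell\ell}^{-1}$. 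Relatedly, the clean block-square identity $(q-p)\binom{q}{p}=q\binom{q-1}{p}$ underlying the Theorem~\ref{thm:rank-det} construction does not survive after deleting $p+1$ rows and one column: $(q-p)\bigl(\binom{q}{p}-(p+1)\bigr)=q\binom{q-1}{p}-(p+1)(q-p)$ differs from $q\bigl(\binom{q-1}{p}-1\bigr)=q\binom{q-1}{p}-q$ by exactly $p(q-p-1)$, so there is no integer number of $b$-rows making a single block square.

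The paper's proof circumvents this by first replacing $N$ with an equivalent matrix $L$ in which each column group uses a \emph{different} pivot $i\in[p+1]$ (so its column index set is $\{U:i\in U,\,U\ne[p+1]\}$ rather than the fixed $\{U:1\in U,\,U\ne[p+1]\}$), together with a second designated element $j\in[q]\setminus[p+1]$. This symmetrizes the role of the elements of $[p+1]$. The row--column pairing then comes in two flavors: ``type~1'' epochs set $S=U\setminus\{i\}$, while ``type~2'' epochs set $S=U\setminus\{j\}$ when allowed and $U\setminus\{i\}$ otherwise. The ratio of type-1 to type-2 epochs is chosen so that every admissible $S$ is assigned a bounded number of times, and the maximum usage count is what yields the factor $(1-2/q)$ and the $q^3/4$ slack in the stated bound. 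Neither a factorization $\bar A=\bar Q\bar R^\top$ nor any zeroing of $b$-entries is used; monomial uniqueness is argued directly from the entries of $L$, reading off within each pillar whether $i$ or $j$ was removed.
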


\noindent The proof is deferred to Section~\ref{sec:pf-N}. By symmetry, condition~\ref{item:N-prime} on $N'$ also holds under the analogous condition
\[ r \le \frac{qn_3}{p+1}\left(1 - \frac{2}{q}\right) - \frac{q^3}{4}. \]

For condition~\ref{item:P} on $P$ we appeal to the results of~\cite{JLV}. As explained in Section~\ref{sec:pf-JLV}, condition~\ref{item:P} is equivalent to~\eqref{eq:JLV-intersect}, which is established by~\cite[Corollary~3]{JLV} for generic components under the condition $r \le p(n_2-1)/4$. Similarly, condition~\ref{item:P-prime} on $P'$ holds provided $r \le (q-p-1)(n_3-1)/4$.

To summarize, we have the conditions
\begin{enumerate}[label=(\alph*)]
\item\label{item:cond-1} $r \le \min\{(p+1)n_2, \, (q-p)n_3\}$,
\item\label{item:cond-2} $r \le n_2 n_3$,
\item\label{item:cond-M} $r \le \min\left\{\frac{qn_2}{q-p}, \, \frac{qn_3}{p+1}\right\} - q$,
\item\label{item:cond-N} $r \le \min\left\{\frac{qn_2}{q-p}, \, \frac{qn_3}{p+1}\right\}\left(1-\frac{2}{q}\right) - \frac{q^3}{4}$,
\item\label{item:cond-JLV} $r \le \frac{1}{4}\min\{p(n_2-1), \, (q-p-1)(n_3-1)\}$.
\end{enumerate}
Condition~\ref{item:cond-1} is subsumed by~\ref{item:cond-JLV} because, e.g., $p(n_2-1)/4 \le pn_2 \le (p+1)n_2$. The assumption $\bar{p} \ge 2$ implies $p \ge 1$ and $q \ge p+2 \ge 3$, so $q^3/4 > q$. This means condition~\ref{item:cond-M} is subsumed by~\ref{item:cond-N}. Also,~\ref{item:cond-JLV} implies $n_2,n_3 \ge 2$ (or else $r \le 0$), so~\ref{item:cond-N} implies
\[ r \le \min\left\{\frac{qn_2}{q-p}, \, \frac{qn_3}{p}\right\} \le \max_{x \in (0,1)} \min\left\{\frac{n_2}{1-x}, \, \frac{n_3}{x}\right\} = n_2 + n_3 \le 2 \max\{n_2,n_3\} \le n_2 n_3, \]
so~\ref{item:cond-2} is also subsumed. This leaves only~\ref{item:cond-N} and~\ref{item:cond-JLV}.

Now using the choice of $p = \lfloor q \cdot n_3/(n_2+n_3) \rfloor$ and the bounds~\eqref{eq:p-bound-1},\eqref{eq:p-bound-2} we have
\[ \min\left\{\frac{qn_2}{q-p}, \, \frac{qn_3}{p+1}\right\}\left(1-\frac{2}{q}\right) \ge (n_2+n_3)\left(1-\frac{1+\alpha}{q}\right)\left(1-\frac{2}{q}\right) \ge (n_2+n_3)\left(1-\frac{3+\alpha}{q}\right), \]
so condition~\ref{item:cond-N} can be replaced with the sufficient condition
\begin{equation}\label{eq:r-main-condition}
r \le (n_2+n_3)\left(1-\frac{3+\alpha}{q}\right) - \frac{q^3}{4}.
\end{equation}

Finally, we will show that the previous condition subsumes~\ref{item:cond-JLV}, given the assumption on $q$. Using the choice of $p$,
\[ \frac{1}{4} p (n_2-1) \ge \frac{1}{4} \left(q \cdot \frac{n_3}{n_2+n_3} - 1\right)(n_2-1) = \frac{(q-1)n_3 - n_2}{4(n_2+n_3)}(n_2-1) \ge \frac{(q-1-\alpha)n_3}{4(n_2+n_3)}(n_2-1) \]
and similarly,
\[ \frac{1}{4}(q-p-1)(n_3-1) \ge \frac{1}{4}\left(q-q \cdot \frac{n_3}{n_2+n_3} - 1\right)(n_3-1) \ge \frac{(q-1-\alpha)n_2}{4(n_2+n_3)}(n_3-1). \]
Therefore, the right-hand side of condition~\ref{item:cond-JLV} can be replaced by
\begin{align*}
\frac{q-1-\alpha}{4(n_2+n_3)}(n_2-1)(n_3-1) &\ge \frac{q-1-\alpha}{4} \left(\frac{n_2 n_3}{n_2+n_3} - 1\right) = \frac{q-1-\alpha}{4} \left[\frac{\alpha}{(1+\alpha)^2}(n_2+n_3) - 1\right]
\intertext{where the last step follows by checking the cases $n_3 = \alpha n_2$ and $n_2 = \alpha n_3$}
&\ge \frac{(q-1-\alpha)\alpha}{4(1+\alpha)^2} (n_2+n_3) - \frac{q}{4} \ge n_2 + n_3 - \frac{q}{4}
\end{align*}
using the assumption $q \ge (4+5\alpha)(1+1/\alpha)$, so condition~\ref{item:cond-JLV} is subsumed by~\eqref{eq:r-main-condition}.

\subsection{Proof of Lemma~\ref{lem:N}}
\label{sec:pf-N}

It will be convenient to work not with $N$ but with an equivalent matrix, $L$. Recall the derivation of $N$ from Section~\ref{sec:pf-subspace-intersect}. There, we used the columns $\{U \,:\, 1 \in U\}$ as a basis for the column span of $A(a^{(\ell)})$. By Lemma~\ref{lem:rank-A}, and assuming $a^{(\ell)}_i \ne 0$ due to generic components, we can just as easily use the basis of columns $\{U \,:\, i \in U\}$ for any $i \in V := [p+1]$. To define $L$, we will use a different choice of $i \in [p+1]$ for each $\ell$. The purpose of this is to create symmetry among the elements $i \in [p+1]$, so that element $1$ no longer plays a distinguished role. (However, there remains a distinction between $i \in [p+1]$ versus $i \in [q] \setminus [p+1]$, since we have fixed $V = [p+1]$.)

Formally, choose the smallest integer $\bar{r}$ for which $r \le \bar{r}q(p+1)(q-p-1)$. It suffices to verify the extreme case $r = \bar{r}q(p+1)(q-p-1)$ (i.e., we will show that even when $r$ is increased to add additional columns, all the columns are linearly independent). We will index $[r]$ by tuples $(i,j,\ell)$ with $i \in [p+1]$, $j \in [q] \setminus [p+1]$, and $\ell \in [\bar{r}q]$; this way, each element of $[r]$ has an assigned $(i,j)$ pair, where $i$ plays a role in choosing the basis described above, and the role of $j$ will appear later. Define the matrix $L$ with rows indexed by
\[ \{(m,S) \,:\, m \in [n_2], \, S \subseteq [q], \, |S| = p, \, S \not\subseteq [p+1]\} \]
and columns indexed by
\begin{equation}\label{eq:index-cols}
\{(i,j,\ell,U) \,:\, i \in [p+1], \, j \in [q] \setminus [p+1], \, \ell \in [\bar{r}q], \, U \subseteq [q], \, |U| = p+1, \, i \in U, \, U \neq [p+1] \},
\end{equation}
with entries
\[ L_{mS,ij \ell U} = b^{(i,j,\ell)}_m \cdot \sum_{k=1}^q \One_{U = S \sqcup \{k\}} \cdot \sigma(U,k) \cdot a^{(i,j,\ell)}_k, \]
where we have indexed $[r]$ by tuples $(i,j,\ell)$ as described above. Our goal is to show that $L$ has full column rank, which implies the same for $N$, since both these conditions are equivalent to~\eqref{eq:intersect-colspan} via the argument in Section~\ref{sec:pf-subspace-intersect} (along with the discussion above).

It suffices to identify a nonzero minor of $L$ that uses all the columns. To this end, we will describe a pairing that assigns every column $(i,j,\ell,U)$ to some row $(m,S)$ in such a way that each row is used 0 or 1 times. Imagine the columns grouped into $\bar{r}q$ ``epochs,'' one for each $\ell$ value, and each epoch is further subdivided into $(p+1)(q-p-1)$ ``pillars,'' one for each $(i,j)$ pair. The ordering of the epochs is unimportant, but let's fix one arbitrary ordering, and similarly for the ordering of pillars within an epoch and the ordering of columns within a pillar. Call the first $\bar{r}(p+1)$ epochs ``type 1'' and the remaining $\bar{r}(q-p-1)$ epochs ``type 2.'' We first describe which $S$ value gets assigned to a given column $(i,j,\ell,U)$:
\begin{itemize}
    \item For a type 1 epoch (i.e., $1 \le \ell \le \bar{r}(p+1)$), remove $i$ from $U$ to produce $S$: let $S = U \setminus \{i\}$. Note that $i \in U$ is guaranteed by~\eqref{eq:index-cols}.
    \item For a type 2 epoch (i.e., $\bar{r}(p+1) < \ell \le \bar{r}q$), remove $j$ if possible, and otherwise remove $i$. That is, if $j \in U$ and $|U \cap [p+1]| < p$ then let $S = U \setminus \{j\}$, and otherwise let $S = U \setminus \{i\}$. (Recall that $S \subseteq [p+1]$ is forbidden, so $j$ cannot be removed if $|U \cap [p+1]| = p$.)
\end{itemize}
Now that the $S$ values are assigned, the $m$ values are assigned based on the ``first available'' rule. That is, column $(i,j,\ell,U)$ gets assigned to row $(m,S)$ where $S$ is specified above and $m$ is the minimum $m'$ for which $(m',S)$ has not already been assigned to some previous column. Consider the submatrix $L'$ of $L$ formed by all columns of $L$ and only those rows that were assigned to some column by the above procedure. Our goal is to show that $\det(L')$ is a nonzero polynomial in the variables $b_m^{(i,j,\ell)}$ and $a_k^{(i,j,\ell)}$.

Expand the determinant $\det(L')$ as a sum over permutations. Each term in the sum gives a single monomial (multiplied by a coefficient $\pm 1$ or $0$). Consider the term in the sum corresponding to the pairing of rows and columns described above. We claim that this term produces a nonzero monomial that is unique in the sense that no other term produces a (nonzero) scalar multiple of this monomial. Once we establish this claim, we are done. To prove the claim we need to argue that, given our monomial of interest, it is possible to deduce the unique pairing of rows and columns that produced it. Within a fixed pillar $(i,j,\ell)$, the powers of $a^{(i,j,\ell)}_k$ in the monomial reveal which multiset of $k$ values are removed from the $U$'s in this pillar to form the associated $S$'s. By design, there are only 1 or 2 distinct $k$ values, namely $\{i\}$ or $\{i,j\}$ (for a type 1 or type 2 epoch, respectively). In either case, the specific pairing between $U$'s and $S$'s can be deduced (using the fact that a type 2 epoch removes $j$ whenever possible). The $S$'s appearing within a pillar are all distinct (when $j$ is removed, the resulting $S$ contains $i$, but when $i$ is removed, it does not). Now the powers of $b_m^{(i,j,\ell)}$ in the monomial reveal which multiset of $m$ values are used in each pillar. Recalling the ``first available'' rule, this allows the $m$ value for each column to be deduced. This completes the proof that $\det(L')$ is a nonzero polynomial.

Finally we need to determine what value of $n_2$ is required for the above construction. The minimum allowable value for $n_2$ is the largest $m$ value used in the construction. For each $S$, we need to count how many columns $(i,j,\ell,U)$ get assigned to $S$.

We first consider the special case where $|S \cap [p+1]|$ takes its maximum possible value, $p-1$. This case needs separate consideration due to the special rule for type 2 epochs: if $|U \cap [p+1]| = p$ then $j$ cannot be removed. Fix $S$ with $|S \cap [p+1]| = p-1$. In a type 1 epoch $\ell$, the number of tuples $(i,j,U)$ assigned to $S$ is $2(q-p-1)$ because there are 2 choices for $i \notin S$, then $q-p-1$ choices for $j$, and then $U = S \sqcup \{i\}$. In a type 2 epoch, we have two different contributions to count. First consider $U$ with $|U \cap [p+1]| = p$, and recall that these are never allowed to remove $j$. This gives $2(q-p-1)$ tuples $(i,j,U)$, since there are again $2$ choices for $i \notin S$, then $q-p-1$ choices for $j$, and $U = S \sqcup \{i\}$. We also need to consider $U$ with $|U \cap [p+1]| = p-1$ (i.e., cases where $j$ is removed). This gives $(p-1)(q-p-2)$ tuples $(i,j,U)$, since there are $p-1$ choices for $i \in S$, then $q-p-2$ choices for $j \notin S$, and $U = S \sqcup \{j\}$ (we need $i \in S$ so that $i \in U$, as required by~\eqref{eq:index-cols}). In total, the number of times that $S$ is used is
\begin{align*}
&\bar{r}(p+1) \cdot 2(q-p-1) + \bar{r}(q-p-1) \cdot [2(q-p-1) + (p-1)(q-p-2)] \\
&\qquad = \bar{r}(q-p-1)[2q + (p-1)(q-p-2)] \\
&\qquad = \bar{r}(p+1)(q-p-1)\left(q-p + \frac{2}{p+1}\right).
\end{align*}

Now consider the remaining case: fix $S$ with $h := |S \cap [p+1]| < p-1$. In a type 1 epoch $\ell$, the number of tuples $(i,j,U)$ assigned to $S$ is $(p+1-h)(q-p-1)$ because there are $p+1-h$ choices for $i \notin S$, then $q-p-1$ choices for $j$, and $U = S \sqcup \{i\}$. In a type 2 epoch, we again have two contributions to count. First consider $U$ with $|U \cap [p+1]| = h+1$ (i.e., $i$ is removed). This gives $(p+1-h)(q-2p+h-1)$ tuples $(i,j,U)$, since there are $p+1-h$ choices for $i \notin S$, then $q-2p+h-1$ choices for $j \notin S$, and $U = S \sqcup \{i\}$. Now consider $U$ with $|U \cap [p+1]| = h$ (i.e., $j$ is removed). This gives $h(q-2p+h-1)$ tuples $(i,j,U)$, since there are $h$ choices for $i \in S$, then $q-2p+h-1$ choices for $j \notin S$, and $U = S \sqcup \{j\}$. In total, the number of times that $S$ is used is
\begin{align*}
&\bar{r}(p+1) \cdot (p+1-h)(q-p-1) + \bar{r}(q-p-1) \cdot [(p+1-h)(q-2p+h-1) + h(q-2p+h-1)] \\
&\qquad = \bar{r}(q-p-1)[(p+1)(p+1-h) + (p+1)(q-2p+h-1)] \\
&\qquad = \bar{r}(p+1)(q-p-1)(q-p),
\end{align*}
which is dominated by the first case.

To recap, we need an integer $\bar{r}$ such that $r \le \bar{r}q(p+1)(q-p-1)$ and $n_2 \ge \bar{r}(p+1)(q-p-1)(q-p + 2/(p+1))$, or equivalently,
\[ \frac{r}{q(p+1)(q-p-1)} \le \bar{r} \le \frac{n_2}{(p+1)(q-p-1)(q-p+2/(p+1))}. \]
It suffices to have
\[ \frac{r}{q(p+1)(q-p-1)} \le \frac{n_2}{(p+1)(q-p-1)(q-p+2/(p+1))} - 1, \]
that is,
\[ r \le \frac{qn_2}{q-p+2/(p+1)} - q(p+1)(q-p-1). \]
We have the bounds $(p+1)(q-p-1) \le \max_{x \in \RR} x(q-x) = q^2/4$, and
\[ \frac{qn_2}{q-p+2/(p+1)} \ge \frac{qn_2}{q-p}\left(1 - \frac{2}{(p+1)(q-p)}\right) \ge \frac{qn_2}{q-p}\left(1 - \frac{2}{q}\right). \]
Thus it suffices to have
\[ r \le \frac{qn_2}{q-p}\left(1 - \frac{2}{q}\right) - \frac{q^3}{4} \]
as claimed.

\section{Proofs for Lower Bounds}

\subsection{Matrix Decomposition Lemma}

The following result of~\cite{barriers-rank} is a core component of our lower bounds. (In~\cite{barriers-rank}, only square matrices are considered, but the same proof applies for rectangular matrices.) Recall the notion of symbolic rank from Section~\ref{sec:genericity}.

\begin{lemma}[\cite{barriers-rank}, Lemma~3.2]
\label{lem:matrix-decomp}
Let $M(x) \in \RR[x]^{m_1 \times m_2}$ be an $m_1 \times m_2$ matrix whose entries are homogeneous degree-$d$ polynomials in the variables $x_1,\ldots,x_p$. If $M(x)$ has symbolic rank $r$ then there are vectors $f_1(x),\ldots,f_r(x) \in \RR[x]^{m_1}$ and $g_1(x),\ldots,g_r(x) \in \RR[x]^{m_2}$ such that
\[ M(x) = \sum_{t=1}^r H_d[f_t(x) g_t(x)^\top], \]
where $H_d(\cdot)$ denotes the degree-$d$ homogeneous part (applied entry-wise).
\end{lemma}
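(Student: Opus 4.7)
The plan is to convert a symbolic rank-$r$ factorization of $M$ over the fraction field $\RR(x)$ into a polynomial decomposition of the required form, using induction on $r$ together with the flexibility afforded by the $H_d$ operator.

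\textbf{Setup via a non-vanishing minor.} Since $M$ has symbolic rank $r$, I would choose row and column index sets $I \subseteq [m_1]$, $J \subseteq [m_2]$ of size $r$ such that $D(x) := \det(M_{IJ}(x))$ is not identically zero. The Schur-complement identity over $\RR(x)$, together with $M_{IJ}^{-1} = \operatorname{adj}(M_{IJ})/D$, yields the polynomial identity
\[
D(x)\, M(x) \;=\; M_{:,J}(x) \cdot \operatorname{adj}(M_{IJ}(x)) \cdot M_{I,:}(x) \;=\; \sum_{t=1}^{r} u_t(x)\, v_t(x)^{\!\top},
\]
where each $u_t \in \RR[x]^{m_1}$ is homogeneous of degree $rd$ (a column of $M_{:,J}\operatorname{adj}(M_{IJ})$) and each $v_t \in \RR[x]^{m_2}$ is homogeneous of degree $d$ (a row of $M_{I,:}$). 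Equivalently, over $\RR(x)$ we have $M = \sum_t (u_t/D)\, v_t^{\!\top}$, a rank-$r$ decomposition with rational left-factors; the task is to promote this to a polynomial decomposition compatible with $H_d$.

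\textbf{Inductive approach.} I would proceed by induction on $r$, allowing the homogeneous degree to vary across invocations. The base case $r=0$ is trivial (empty sum). For $r \ge 1$, pick any entry $M_{i_0,j_0}(x)$ that is not identically zero and form the pivoted matrix
\[
N(x) \;:=\; M_{i_0,j_0}(x)\, M(x) \;-\; M_{:,j_0}(x)\, M_{i_0,:}(x)^{\!\top},
\]
a polynomial matrix that is homogeneous of degree $2d$ and whose $i_0$-th row and $j_0$-th column vanish. Over $\RR(x)$, $N$ equals $M_{i_0,j_0}$ times the Schur complement of $M$ at the pivot $(i_0,j_0)$, so $N$ has symbolic rank exactly $r-1$. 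Applying the inductive hypothesis at degree $2d$ gives $N = \sum_{t=1}^{r-1} H_{2d}[\tilde f_t \tilde g_t^{\!\top}]$ for polynomial vectors $\tilde f_t, \tilde g_t$, and hence the polynomial identity $M_{i_0,j_0}\, M = M_{:,j_0}\, M_{i_0,:}^{\!\top} + \sum_{t=1}^{r-1} H_{2d}[\tilde f_t \tilde g_t^{\!\top}]$.

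\textbf{Closing step and main obstacle.} The crux is to convert this decomposition of $M_{i_0,j_0}\, M$ at degree $2d$ into one of $M$ itself at degree $d$, since dividing through by the polynomial $M_{i_0,j_0}$ would introduce rational terms. The plan is to exploit that $H_d[f g^{\!\top}]$ is bilinear in the homogeneous components of $f$ and $g$: writing $f_t = \sum_i f_t^{(i)}$ and $g_t = \sum_j g_t^{(j)}$, a single pair $(f_t, g_t)$ encodes the aggregate $\sum_{i+j=d} f_t^{(i)} g_t^{(j)\top}$, so each packet can contribute up to $d+1$ genuine rank-$1$ summands at homogeneous degree $d$. I would engineer the $r$ packets so that the pivot packet uses $f_1 := M_{:,j_0}$ (homogeneous of degree $d$) paired with a polynomial $g_1$ whose homogeneous components implement a truncated power-series-style approximation accounting for the rational factor $1/M_{i_0,j_0}$, with the remaining $r-1$ packets absorbing the inductive terms after an analogous reweighting. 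Verifying that this re-packaging reproduces exactly $M$ under $H_d$ is the delicate technical step, and it is the essential reason the lemma is nontrivial: over a multivariate polynomial ring the sum-rank (minimum number of polynomial rank-$1$ outer products summing to $M$) can strictly exceed the symbolic rank, so only the cross-degree cancellations allowed by $H_d$ permit achieving the symbolic-rank bound $r$.
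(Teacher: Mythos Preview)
The paper does not prove this lemma; it is quoted from \cite{barriers-rank}. Your proposal has a genuine gap at the ``closing step,'' and the inductive scheme you set up does not close. The obstruction is concrete: your pivot $M_{i_0,j_0}(x)$ is homogeneous of positive degree $d$, hence vanishes at the origin, so there is no formal power series for $1/M_{i_0,j_0}$ around $x=0$ to truncate. Your sentence about ``a polynomial $g_1$ whose homogeneous components implement a truncated power-series-style approximation accounting for the rational factor $1/M_{i_0,j_0}$'' therefore cannot be made precise as written. Worse, since you take $f_1 := M_{:,j_0}$ homogeneous of degree $d$, the packet $H_d[f_1 g_1^\top]$ equals $f_1 \cdot H_0[g_1]^\top$, i.e.\ it only sees the constant part of $g_1$; a single constant row cannot encode the rational row $M_{i_0,:}/M_{i_0,j_0}$. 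The passage from a decomposition of $M_{i_0,j_0}\,M$ at degree $2d$ back to one of $M$ at degree $d$ thus remains unjustified, and the degree-doubling in your induction only compounds the difficulty.

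The missing idea, which also removes any need for induction, is an affine shift. Pick $a \in \RR^p$ with $\rank M(a) = r$ (such $a$ exists since the symbolic rank is $r$) and choose $I,J$ so that $M_{IJ}(a)$ is invertible. Over $\RR(x)$ one has $M(x+a) = M_{:,J}(x+a) \cdot M_{IJ}(x+a)^{-1} M_{I,:}(x+a)$, and now $\det M_{IJ}(x+a)$ has nonzero constant term, so the right factor $B(x) := M_{IJ}(x+a)^{-1} M_{I,:}(x+a)$ admits a formal power series expansion in $x$. Let $G(x)^\top$ be its truncation to degree at most $d$; since every entry of $M_{:,J}(x+a)$ has degree at most $d$, discarding the higher-degree part of $B$ does not affect the degree-$d$ part of the product. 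Finally, homogeneity of $M$ gives $H_d[M(x+a)] = M(x)$, hence
\[ M(x) \;=\; H_d\big[M_{:,J}(x+a)\, G(x)^\top\big] \;=\; \sum_{t=1}^r H_d\big[f_t(x)\, g_t(x)^\top\big], \]
with $f_t$ the $t$-th column of $M_{:,J}(x+a)$ and $g_t^\top$ the $t$-th row of $G$. This is the argument in \cite{barriers-rank}; your power-series intuition was pointing in the right direction, but without the shift it cannot be carried out.
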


\subsection{Proof of Theorem~\ref{thm:lower-ky}}
\label{sec:pf-lower-ky}

Let $s$ be the symbolic rank of $A(a)$, which is also the generic rank of $M(a^{(1)} \otimes b^{(1)} \otimes c^{(1)})$. The entries of $A(a)$ are degree-$1$ homogeneous polynomials in $a$, so by Lemma~\ref{lem:matrix-decomp} there are vectors of polynomials $f_1(a),\ldots,f_s(a)$ and $g_1(a),\ldots,g_s(a)$ such that
\[ A(a) = \sum_{t=1}^s H_1[f_t(a)g_t(a)^\top]. \]
Note that
\[ H_1[f_t(a)g_t(a)^\top] = H_0[f_t] H_1[g_t(a)]^\top + H_1[f_t(a)] H_0[g_t]^\top, \]
where we write, e.g., $H_0[f_t]$ instead of $H_0[f_t(a)]$ to emphasize that this is a constant (degree-$0$) vector. Now we can write
\[ M(a \otimes b \otimes c) = A(a) \otimes (bc^\top) = \sum_{t=1}^s \{(H_0[f_t] \otimes b)(H_1[g_t(a)] \otimes c)^\top + (H_1[f_t(a)] \otimes b)(H_0[g_t] \otimes c)^\top\} \]
and so, since $T \mapsto M(T)$ is linear,
\[ M\left(\sum_{\ell=1}^r a^{(\ell)} \otimes b^{(\ell)} \otimes c^{(\ell)}\right) = M_0 + M_1 \]
where
\[ M_0 := \sum_{t=1}^s \sum_{\ell=1}^r (H_0[f_t] \otimes b^{(\ell)})(H_1[g_t(a^{(\ell)})] \otimes c^{(\ell)})^\top \]
and
\[ M_1 := \sum_{t=1}^s \sum_{\ell=1}^r (H_1[f_t(a^{(\ell)})] \otimes b^{(\ell)})(H_0[g_t] \otimes c^{(\ell)})^\top. \]
Note that
\[ \Colspan(M_0) \subseteq \Span\{H_0[f_t] \otimes \RR^{n_2} \,:\, t \in [s]\} \]
and
\[ \Rowspan(M_1) \subseteq \Span\{H_0[g_t] \otimes \RR^{n_3} \,:\, t \in [s]\}, \]
so for any setting of the variables $a^{(\ell)},b^{(\ell)},c^{(\ell)}$ we have
\[ \rank\left(M\left(\sum_{\ell=1}^r a^{(\ell)} \otimes b^{(\ell)} \otimes c^{(\ell)}\right)\right) \le \rank(M_0) + \rank(M_1) \le sn_2 + sn_3. \]
Additivity would imply that this rank must also equal $rs$, from which we conclude $r \le n_2 + n_3$.

\subsection{Proof of Theorem~\ref{thm:lower-linear}}
\label{sec:pf-lower-linear}

Let $s$ be the generic rank of $M(a^{(1)} \otimes b^{(1)} \otimes c^{(1)})$, which is also the symbolic rank of the matrix $M(a \otimes b \otimes c)$, viewed as a matrix of polynomials in the variables $x = (a_1,\ldots,a_{n_1}, b_1,\ldots,b_{n_2}, c_1,\ldots,c_{n_3})$. The entries of $M(a \otimes b \otimes c)$ are degree-$3$ homogeneous polynomials, so by Lemma~\ref{lem:matrix-decomp} there are vectors of polynomials $f_1(x),\ldots,f_s(x)$ and $g_1(x),\ldots,g_s(x)$ such that
\[ M(a \otimes b \otimes c) = \sum_{t=1}^s H_3[f_t(x)g_t(x)^\top]. \]
The variables $x$ are naturally partitioned into $X_1 := \{a_i \,:\, i \in [n_1]\}$, $X_2 := \{b_i \,:\, i \in [n_2]\}$, and $X_3 := \{c_i \,:\, i \in [n_3]\}$. For $S \subseteq [3]$, let $H_S(\cdot)$ denote the terms that are degree $1$ in $X_i$ for each $i \in S$ and degree $0$ in $X_i$ for each $i \notin S$. Note that all terms in $M(a \otimes b \otimes c)$ are degree $1$ in each of $X_1,X_2,X_3$, so we can rewrite the above as
\[ M(a \otimes b \otimes c) = \sum_{t=1}^s H_{\{1,2,3\}}[f_t(x)g_t(x)^\top]. \]
We can write
\[ H_{\{1,2,3\}}[f_t(x)g_t(x)^\top] = \sum_{S \subseteq [3]} H_S[f_t(x)]H_{[3] \setminus S}[g_t(x)]^\top, \]
and so
\[ M\left(\sum_{\ell=1}^r a^{(\ell)} \otimes b^{(\ell)} \otimes c^{(\ell)}\right) = \sum_{S \subseteq [3]} M_S \]
where
\[ M_S := \sum_{t=1}^s \sum_{\ell=1}^r H_S[f_t(x^{(\ell)})]H_{[3] \setminus S}[g_t(x^{(\ell)})]^\top. \]
Now, for any setting of the variables $x^{(\ell)}$, we bound the rank of each $M_S$. First, $\rank(M_\emptyset) \le s$ because $\Colspan(M_\emptyset) \subseteq \Span\{H_\emptyset[f_t] \,:\, t \in [s]\}$, noting that $H_\emptyset[f_t(x)]$ is a constant (degree-$0$) vector, which we denote $H_\emptyset[f_t]$. Next, note that $H_{\{1\}}[f_t(x)]$ has only ``$a$'' variables (not ``$b$'' or ``$c$''), which we denote by $H_{\{1\}}[f_t(x^{(\ell)})] = H_{\{1\}}[f_t(a=a^{(\ell)})]$. Each $H_{\{1\}}[f_t(a=a^{(\ell)})]$ lies in $\Span\{H_{\{1\}}[f_t(a=e^{(j)})] \,:\, j \in [n_1]\}$ where $e^{(j)}$ are unit basis vectors. This means $\rank(M_{\{1\}}) \le sn_1$ because $\Colspan(M_{\{1\}}) \subseteq \Span\{H_{\{1\}}[f_t(a=e^{(j)})] \,:\, t \in [s], \, j \in [n_1]\}$. Similarly, $\rank(M_{\{2\}}) \le sn_2$ and $\rank(M_{\{3\}}) \le sn_3$. Applying the same argument to the row span, we also have $\rank(M_{\{1,2,3\}}) \le s$, $\rank(M_{\{2,3\}}) \le sn_1$, $\rank(M_{\{1,3\}}) \le sn_2$, and $\rank(M_{\{1,2\}}) \le sn_3$. Putting it together,
\[ \rank\left(M\left(\sum_{\ell=1}^r a^{(\ell)} \otimes b^{(\ell)} \otimes c^{(\ell)}\right)\right) \le \sum_{S \subseteq [3]} \rank(M_S) \le 2s(n_1+n_2+n_3+1). \]
Additivity would imply that this rank is $rs$, from which we conclude $r \le 2(n_1+n_2+n_3+1)$.

\begin{remark}\label{rem:improve-8n}
The proof above is similar to that of Theorem~4.4 of~\cite{barriers-rank}, where they give a bound of $r \le 8n$ for $n \times n \times n$ tensors. We now explain the source of our improvement from $8n$ to $6n+2$. When decomposing $M$ into 8 terms $M_S$ for $S \subseteq [3]$, the original proof of~\cite{barriers-rank} bounds the rank of each term by $sn$. We instead keep track of the specific structure of each term and see that two of them, namely $M_\emptyset$ and $M_{\{1,2,3\}}$, admit a better bound of $s$.
\end{remark}

\subsection{Proof of Theorem~\ref{thm:lower-d}}
\label{sec:pf-lower-d}

Let $s$ be the symbolic rank of
\begin{equation}\label{eq:N-plugin}
\tilde{M}(b^{(1,1)} \otimes \cdots \otimes b^{(k,1)},\ldots,b^{(1,d)} \otimes \cdots \otimes b^{(k,d)})
\end{equation}
in the variables $b = \{b^{(i,j)} \,:\, i \in [k], \, j \in [d]\}$; this is also the generic rank of $\tilde{M}(T^{(1)},\ldots,T^{(d)})$. The entries of~\eqref{eq:N-plugin} are degree-$kd$ homogeneous polynomials in the $b$ variables, so by Lemma~\ref{lem:matrix-decomp},~\eqref{eq:N-plugin} admits the decomposition
\[ \sum_{t=1}^s H_{kd}[f_t(b)g_t(b)^\top]. \]
The variables $b$ are naturally partitioned into $\bigsqcup_{i \in [k], j \in [d]} B_{ij}$ where $B_{ij} := \{b_m^{(i,j)} \,:\, m \in [n_i]\}$. For $S \subseteq [k] \times [d]$, let $H_S(\cdot)$ denote the terms that are degree $1$ in $B_{ij}$ for each $(i,j) \in S$ and degree $0$ in $B_{ij}$ for each $(i,j) \notin S$. Note that all terms in~\eqref{eq:N-plugin} are degree $1$ in every $B_{ij}$, so we can rewrite the above decomposition as
\[ \sum_{t=1}^s H_{[k] \times [d]}[f_t(b)g_t(b)^\top]. \]
With $\bar{S} := ([k] \times [d]) \setminus S$, we can write
\[ H_{[k] \times [d]}[f_t(b)g_t(b)^\top] = \sum_{S \subseteq [k] \times [d]} H_S[f_t(b)]H_{\bar{S}}[g_t(b)]^\top, \]
and so
\[ \sum_{1 \le \ell_1 < \cdots < \ell_d \le r} \tilde{M}(T^{(\ell_1)},\ldots,T^{(\ell_d)}) = \sum_{S \subseteq [k] \times [d]} M_S \]
where
\[ M_S := \sum_{t=1}^s \sum_{1 \le \ell_1 < \cdots < \ell_d \le r} H_S[f_t(b^{(i,j)} = a^{(i,\ell_j)})]H_{\bar{S}}[g_t(b^{(i,j)} = a^{(i,\ell_j)})]^\top. \]
Here, $H_S[f_t(b^{(i,j)} = a^{(i,\ell_j)})]$ is obtained from $H_S[f_t(b)]$ by making the substitution $b^{(i,j)} = a^{(i,\ell_j)}$ for each $(i,j) \in S$; note that the variables $b^{(i,j)}$ with $(i,j) \notin S$ do not appear. Similarly, $H_{\bar{S}}[g_t(b^{(i,j)} = a^{(i,\ell_j)})]$ is obtained from $H_{\bar{S}}[g_t(b)]$ by making the substitution $b^{(i,j)} = a^{(i,\ell_j)}$ for each $(i,j) \notin S$. Note that $H_S[f_t(b)]$ is a multilinear function of the $d$ tensors $\{\bigotimes_{i \,:\, (i,j) \in S} b^{(i,j)} \,:\, j \in [d]\}$. After making the substitution $b^{(i,j)} = a^{(i,\ell_j)}$, we have
\[ \bigotimes_{i \,:\, (i,j) \in S} b^{(i,j)} \subseteq \bigotimes_{i \,:\, (i,j) \in S} \RR^{n_i}, \]
where the right-hand side is a subspace of dimension $\prod_{i \,:\, (i,j) \in S} n_i$. We also have
\[ \bigotimes_{i \,:\, (i,j) \in S} b^{(i,j)} \subseteq \Span\left\{\bigotimes_{i \,:\, (i,j) \in S} a^{(i,\ell)} \,:\, \ell \in [r]\right\}, \]
where the right-hand side is a subspace of dimension $\le r$. For each $S,j$ we will choose the better of these two bounds. We conclude, for any setting of the variables $a^{(i,\ell)}$,
\[ \dim \Colspan(M_S) \le s \prod_{j \in [d]} \min\left\{\prod_{i \,:\, (i,j) \in S} n_i \, , \; r\right\}. \]
By the same argument applied to the row span,
\[ \dim \Rowspan(M_S) \le s \prod_{j \in [d]} \min\left\{\prod_{i \,:\, (i,j) \in \bar{S}} n_i \, , \; r\right\}. \]
For each $S$ we are free to choose the better of these two bounds to bound $\rank(M_S)$. For each $S$, call an index $j \in [d]$ ``good'' if
\[ \prod_{i \,:\, (i,j) \in S} n_i \le n_* := \max_{U \subseteq [k]} \, \min\left\{\prod_{i \in U} n_i, \, \prod_{i \in [k] \setminus U} n_i\right\}. \]
Each $j \in [d]$ must be good for either $S$ or $\bar{S}$. Therefore, either $S$ or $\bar{S}$ has the property that at least $d/2$ indices $j$ are good. If it's $S$ then use the colspan bound for $\rank(M_S)$, and otherwise use the rowspan bound. For ``bad'' indices, use the bound of $r$ in the second term in $\min\{ \cdots\}$. We conclude
\[ \rank(M_S) \le s (n_*)^{d/2} r^{d/2} \]
and so
\[ \rank\left(\sum_{1 \le \ell_1 < \cdots < \ell_d \le r} \tilde{M}(T^{(\ell_1)},\ldots,T^{(\ell_d)})\right) \le \sum_{S \subseteq [k] \times [d]} \rank(M_S) \le 2^{kd} \cdot s (n_*)^{d/2} r^{d/2}. \]
Additivity would imply that this rank must equal $\binom{r}{d} s$, so we have
\[ 2^{kd} (n_*)^{d/2} r^{d/2} \ge \binom{r}{d} \ge \left(\frac{r}{d}\right)^d, \]
which gives $r \le 4^k d^2 \, n_*$ as claimed.

\appendix

\section{Trivial Flattenings}
\label{app:trivial}

The following result shows that trivial flattenings achieve rank detection up to a particular rank, as claimed in Section~\ref{sec:trivial}.

\begin{theorem}
If $T = \sum_{\ell=1}^r a^{(1,\ell)} \otimes \cdots \otimes a^{(k,\ell)}$ is $n_1 \times \cdots \times n_k$ with the components generically chosen, and
\begin{equation}\label{eq:S-max}
r \le n_* := \max_{S \subseteq [k]} \, \min\left\{\prod_{i \in S} n_i, \, \prod_{i \in [k] \setminus S} n_i\right\},
\end{equation}
then the matrix $M^\triv(T;S)$ defined in~\eqref{eq:def-triv}, with $S$ taken to be any maximizer of~\eqref{eq:S-max}, has rank exactly $r$.
\end{theorem}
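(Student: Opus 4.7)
The upper bound $\rank(M^\triv(T;S)) \le r$ is immediate because $M^\triv$ is linear in $T$ and a rank-1 tensor flattens to a rank-1 matrix, so $M^\triv(T;S) = \sum_{\ell=1}^r u^{(\ell)} (v^{(\ell)})^\top$ where $u^{(\ell)} := \bigotimes_{i \in S} a^{(i,\ell)} \in \RR^{\prod_{i \in S} n_i}$ and $v^{(\ell)} := \bigotimes_{i \in \bar{S}} a^{(i,\ell)} \in \RR^{\prod_{i \in \bar{S}} n_i}$. The plan is therefore to prove the matching lower bound generically. Writing $M^\triv = U V^\top$ with $U = [u^{(1)} \mid \cdots \mid u^{(r)}]$ and $V = [v^{(1)} \mid \cdots \mid v^{(r)}]$, it suffices to show that generically both $U$ and $V$ have rank exactly $r$, since then $\rank(UV^\top) = r$. (The implication uses that $\rank(UV^\top) \ge \rank(U) + \rank(V) - r$, or alternatively that $UV^\top$ has an $r \times r$ minor equal to $\det(U'V'^\top) = \det(U')\det(V'^\top)$ for $r \times r$ submatrices $U',V'$ of $U,V^\top$.)

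The core lemma I would isolate is the following: if $\{a^{(i,\ell)} \,:\, i \in S, \, \ell \in [r]\}$ are generically chosen vectors with $a^{(i,\ell)} \in \RR^{n_i}$ and $r \le \prod_{i \in S} n_i$, then $\{\bigotimes_{i \in S} a^{(i,\ell)} \,:\, \ell \in [r]\}$ are linearly independent. This is exactly the statement that the square submatrix formed by some $r$ rows of $U$ has nonzero determinant (viewed as a polynomial in the entries of the $a^{(i,\ell)}$). Since a determinant is a polynomial, by the discussion of symbolic rank in Section~\ref{sec:genericity}, it suffices to exhibit one specific assignment making this determinant nonzero.

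For the witness, choose any injection $\iota: [r] \hookrightarrow \prod_{i \in S} [n_i]$, which is possible since $r \le \prod_{i \in S} n_i$; write $\iota(\ell) = (j_i(\ell))_{i \in S}$. Set $a^{(i,\ell)} := e^{(j_i(\ell))}$, the standard basis vector in $\RR^{n_i}$. Then each $\bigotimes_{i \in S} a^{(i,\ell)}$ equals a standard basis vector of the tensor product space, indexed by $\iota(\ell)$, and these are distinct (hence linearly independent) by injectivity of $\iota$. This shows the appropriate $r \times r$ minor of $U$ is a nonidentically zero polynomial, so $U$ has rank $r$ generically. Since $S$ is a maximizer of~\eqref{eq:S-max}, we have both $\prod_{i \in S} n_i \ge n_* \ge r$ and $\prod_{i \in \bar{S}} n_i \ge n_* \ge r$, so the same argument applied to $\bar{S}$ shows $V$ has rank $r$ generically.

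Finally, since the intersection of two generic conditions is generic (multiply the two witnessing polynomials together, as noted in Section~\ref{sec:genericity}), we conclude that generically both $U$ and $V$ have rank $r$, and hence $\rank(M^\triv(T;S)) = r$. There is no serious obstacle here; the only subtlety is the choice of the maximizer $S$, which simultaneously guarantees enough room on both sides of the flattening for $r$ linearly independent ``pure tensor'' columns to exist.
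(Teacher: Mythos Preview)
Your proof is correct and essentially identical to the paper's: both write $M^\triv = UV^\top$ with pure-tensor columns, reduce to showing each factor has full column rank generically, and witness this by choosing an injection $[r] \hookrightarrow \prod_{i \in S} [n_i]$ and setting the components to standard basis vectors so that the columns become distinct coordinate vectors. The paper's argument is the same step for step, with only notational differences.
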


\begin{proof}
The trivial flattening takes the form
\[ M^\triv(T;S) = \sum_{\ell=1}^r b^{(\ell)} c^{(\ell)\top} \]
where
\[ b^{(\ell)} := \bigotimes_{i \in S} a^{(i,\ell)} \qquad\text{and}\qquad c^{(\ell)} := \bigotimes_{i \in [k] \setminus S} a^{(i,\ell)}. \]
To show $\rank(M^\triv) = r$, it suffices to show the following: $\{b^{(\ell)} \,:\, \ell \in [r]\}$ are linearly independent and $\{c^{(\ell)} \,:\, \ell \in [r]\}$ are linearly independent. To see this: linear independence implies that the matrix $B$ with columns $b^{(\ell)}$ has an $r \times r$ nonsingular submatrix, and similarly for $C$ with rows $c^{(\ell)}$, and the product of these two submatrices appears as an $r \times r$ nonsingular submatrix of $M^\triv$. We focus on the $b^{(\ell)}$ vectors, as the proof for the $c^{(\ell)}$ vectors is identical.

Since rank (of the $B$ matrix) cannot exceed generic rank (see Section~\ref{sec:genericity}), it suffices to exhibit a single choice of the variables $a^{(i,\ell)}$ such that the $b^{(\ell)}$ vectors are linearly independent. Provided $r \le \prod_{i \in S} n_i$, we can choose the variables $a^{(i,\ell)}$ so that the $b^{(\ell)}$ are distinct standard unit basis vectors. Formally, fix an injection $[r] \to \prod_{i \in S} [n_i]$ and write $\ell \mapsto (\phi_i(\ell) \,:\, i \in S)$ where $\phi_i(\ell) \in [n_i]$. Then for $i \in S$, set $a^{(i,\ell)}$ equal to the standard unit basis vector $e^{(\phi_i(\ell))} \in \RR^{n_i}$, so that $b^{(\ell)} = \bigotimes_{i \in S} e^{(\phi_i(\ell))}$. Applying the same argument to the $c^{(\ell)}$ vectors, we also incur the requirement $r \le \prod_{i \in [k] \setminus S} n_i$.
\end{proof}

\section{Commuting Extensions}
\label{app:comm-ext}

For $r \ge n$, a tuple of matrices $(Z_1,\ldots,Z_m)$ in $\RR^{r \times r}$ is called a \emph{commuting extension} of a tuple of matrices $(A_1,\ldots,A_m)$ in $\RR^{n \times n}$ if the $Z_i$ pairwise commute and the upper-left $n \times n$ submatrix of each $Z_i$ is equal to $A_i$. The algorithmic task of finding a commuting extension is studied in~\cite{koiran-extensions}, where the following planted model is proposed in Section~5: For $i \in [m]$, let $Z_i = R^{-1} D_i R$ where $R$ is a generically chosen $r \times r$ matrix, and each $D_i$ is a generically chosen $r \times r$ diagonal matrix. For some $n \le r$, let $A_i$ be the upper-left $n \times n$ submatrix of $Z_i$. Given $(A_1,\ldots,A_m)$ and $r$, the goal is to construct a commuting extension $(\tilde{Z}_1,\ldots,\tilde{Z}_m)$ of dimension $r \times r$. The $Z_i$ make up one ``planted'' solution, but it is not unique: see Eq.~(2) of~\cite{koiran-extensions}.

The algorithm of~\cite{koiran-extensions} solves the above task provided $m \ge 3$ and $r \le 4n/3$. Building on the connection between commuting extensions and tensor decomposition described in~\cite{koiran}, we will give an algorithm for finding commuting extensions using our tensor decomposition algorithm. The algorithm will succeed for $r \le (2-\epsilon)n$ for an arbitrary $\epsilon > 0$, provided $m$ is a large enough constant (depending on $\epsilon$) and $n$ is large. The threshold $r=2n$ is fundamental: as pointed out in~\cite{koiran-commuting}, any tuple $(A_1,\ldots,A_m)$ of $n \times n$ matrices admits the commuting extension
\[ Z_i = \left(\begin{array}{cc} A_i & -A_i \\ A_i & -A_i \end{array}\right) \]
of dimension $r= 2n$.

The following lemma from~\cite{koiran} will be a subroutine in the algorithm.

\begin{lemma}[\cite{koiran}, Lemma~13]
\label{lem:extend}
Let $U \in \RR^{n \times r}$ and $V \in \RR^{r \times n}$ with $UV = I_n$. Then $r \ge n$, and it is possible to ``extend'' $U,V$ to two $r \times r$ matrices that are inverses. More precisely, $\tilde{U} = (U^\top \mid A)^\top$ and $\tilde{V} = (V \mid B)$ where $A,B \in \RR^{r \times (r-n)}$ and $\tilde{U} \tilde{V} = I_r$. Specifically, $B$ can be any matrix for which $\mathrm{Im}(B) = \ker(U)$, and $A$ is the unique matrix such that $A^\top B = I_{r-n}$ and $A^\top V = 0$.
\end{lemma}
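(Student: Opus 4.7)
The plan is to expand $\tilde{U}\tilde{V}$ as a block product, translate the identity $\tilde{U}\tilde{V} = I_r$ into four block-equations, and then show each block-equation is either given, trivially arrangeable, or solvable uniquely via a direct-sum decomposition of $\RR^r$ that follows from the hypothesis $UV = I_n$. First, $r \ge n$ is immediate: $\mathrm{rank}(U) \ge \mathrm{rank}(UV) = n$, so $U$ must have at least $n$ columns. Writing $\tilde{U} = \binom{U}{A^\top}$ and $\tilde{V} = (V \mid B)$, the product expands as
\[ \tilde{U}\tilde{V} = \begin{pmatrix} UV & UB \\ A^\top V & A^\top B \end{pmatrix}, \]
so $\tilde{U}\tilde{V} = I_r$ is equivalent to $UV = I_n$ (given), $UB = 0$, $A^\top V = 0$, and $A^\top B = I_{r-n}$.

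The condition $UB = 0$ says $\mathrm{Im}(B) \subseteq \ker(U)$. Since $U$ has full row rank $n$ (as $UV = I_n$), $\ker(U)$ has dimension $r-n$, so any $B \in \RR^{r \times (r-n)}$ with $\mathrm{Im}(B) = \ker(U)$ satisfies this equation and has full column rank. For such a $B$, I plan to establish the key structural fact
\[ \RR^r = \mathrm{Im}(V) \oplus \ker(U). \]
This is the main content of the lemma and follows by checking that $\mathrm{Im}(V) \cap \ker(U) = \{0\}$: if $x = Vy$ and $Ux = 0$ then $0 = UVy = y$, hence $x = 0$. Combined with the dimension count $n + (r-n) = r$, the intersection being trivial forces the sum to be direct and to fill $\RR^r$. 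Equivalently, $\tilde{V} = (V \mid B)$ is invertible.

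With $\tilde{V}$ invertible, the remaining conditions on $A$ combine into the single equation $A^\top (V \mid B) = (0 \mid I_{r-n})$, which has the unique solution $A^\top = (0 \mid I_{r-n}) \tilde{V}^{-1}$. This proves both existence and uniqueness of $A$ given $B$. Finally, to confirm that the resulting $\tilde{U}$ genuinely extends $U$, note that the first $n$ rows of $\tilde{V}^{-1}$ are the unique matrix $U'$ with $U'V = I_n$ and $U'B = 0$; but $U$ itself satisfies both (the second because $\mathrm{Im}(B) = \ker(U)$), so $U' = U$.

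I do not expect a serious obstacle here; the only non-obvious step is the direct-sum decomposition $\RR^r = \mathrm{Im}(V) \oplus \ker(U)$, and even that is a two-line consequence of $UV = I_n$. Everything else is block matrix bookkeeping.
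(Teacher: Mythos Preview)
Your proof is correct and complete: the block expansion, the direct-sum decomposition $\RR^r = \mathrm{Im}(V) \oplus \ker(U)$, and the resulting existence/uniqueness of $A$ are all sound. Note, however, that the paper does not actually prove this lemma; it is quoted verbatim from \cite{koiran} and used as a black box, so there is no in-paper argument to compare against.
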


\begin{algorithm}[Finding a commuting extension]
\label{alg:extension}
\phantom{}
\begin{itemize}
\item Input: A tuple of $n \times n$ matrices $(A_1,\ldots,A_m)$.
\item Input: A parameter $r \ge n$.
\item Input: A generic $n \times n$ matrix $M$ (say, chosen at random).
\end{itemize}
\begin{enumerate}
\item Let $T_i = MA_i$ for $i \in [m]$, and let $T$ be the $m \times n \times n$ tensor with slices $T_i$.
\item Let $q$ be the largest odd integer with $q \le m$, and let $p = (q-1)/2$.
\item Run the tensor decomposition algorithm (Algorithm~\ref{alg:decomp}) on $T$ with parameters $p,q,r$ to produce a decomposition $T = \sum_{\ell=1}^r a^{(\ell)} \otimes b^{(\ell)} \otimes c^{(\ell)}$.
\item Let $B$ be $n \times r$ with columns $b^{(\ell)}$, let $C$ be $r \times n$ with rows $c^{(\ell)\top}$, and for $i \in [m]$ let $\tilde{D}_i = \diag(a^{(\ell)}_i \,:\, \ell = 1,\ldots,r)$.
\item Solve for an $r \times r$ diagonal matrix $\hat{D}$ satisfying the linear equations $M^{-1} B \hat{D} C = I_n$.
\item Invoke Lemma~\ref{lem:extend} to extend $M^{-1} B \hat{D}$ and $C$ to $r \times r$ matrices $\tilde{U}$ and $\tilde{V}$ respectively.
\item Output $(\tilde{Z}_1,\ldots,\tilde{Z}_m)$ where $\tilde{Z}_i = \tilde{U} \hat{D}^{-1} \tilde{D}_i \tilde{V}$.
\end{enumerate}
\end{algorithm}

\noindent The matrix $M$ might be unnecessary, but it will help with the genericity analysis below.

\begin{theorem}
Let $r \ge n \ge 3$ and $m \ge 19$. The following holds for a generic choice of an $r \times r$ matrix $R$, an $n \times n$ matrix $M$, and $m$ diagonal $r \times r$ matrices $D_1,\ldots,D_m$ (that is, the $r^2 + n^2 + mr$ variables $R_{ij}, M_{ij}, (D_i)_{jj}$ are chosen generically from $\RR^{r^2 + n^2 + mr}$). For $i \in [m]$, let $Z_i = R^{-1} D_i R$ and let $A_i$ be the upper-left $n \times n$ submatrix of $Z_i$. If
\[ r \le 2n\left(1-\frac{4}{m-1}\right) - \frac{m^3}{4} \]
then Algorithm~\ref{alg:extension} with input $(A_1,\ldots,A_m)$, $r$, $M$ outputs a valid commuting extension $(\tilde{Z}_1,\ldots,\tilde{Z}_m)$, that is, the $\tilde{Z}_i$ pairwise commute and for $i \in [m]$, the upper-left $n \times n$ submatrix of $\tilde{Z}_i$ is $A_i$.
\end{theorem}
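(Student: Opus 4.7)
The plan is to apply Theorem~\ref{thm:generic-success} to the tensor $T$ and then verify that the post-processing steps correctly produce a commuting extension. Let $U \in \RR^{n \times r}$ denote the top $n$ rows of $R^{-1}$ and $V \in \RR^{r \times n}$ the first $n$ columns of $R$. The top-left $n \times n$ block of $R^{-1} R = I_r$ gives $UV = I_n$, and the top-left block of $Z_i = R^{-1} D_i R$ gives $A_i = U D_i V$. Hence $T_i = MA_i = MUD_iV$, so
\[
T = \sum_{\ell=1}^r a^{*(\ell)} \otimes b^{*(\ell)} \otimes c^{*(\ell)},
\]
where $a^{*(\ell)} = ((D_1)_{\ell\ell},\ldots,(D_m)_{\ell\ell})^\top$, $b^{*(\ell)} = (MU)_{:,\ell}$, and $c^{*(\ell)} = V_{\ell,:}^\top$.

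Next I would verify the hypotheses of Theorem~\ref{thm:generic-success}. With format $m \times n \times n$ we have $\alpha = 1$, and for $q$ the largest odd integer with $q \le m$, the Algorithm's $p = (q-1)/2$ equals $\lfloor q \cdot n/(n+n) \rfloor$. Since $m \ge 19$, we have $q \ge m - 1 \ge 18 = (4+5\alpha)(1+1/\alpha)$, and the assumption $r \le 2n(1-4/(m-1)) - m^3/4$ implies $r \le (n+n)(1 - (3+\alpha)/q) - q^3/4$ using $q \ge m-1$ together with $q \le m$. For the required genericity of the components, the key observation is that the map $(R,M,D) \mapsto (a^{*(\ell)}, b^{*(\ell)}, c^{*(\ell)})_\ell$ can realize any prescribed component tuple $(a,b,c)$: set $D_i = \diag(a^{(\ell)}_i \,:\, \ell \in [r])$, let $V$ have rows $c^{(\ell)\top}$ and $\bar U$ have columns $b^{(\ell)}$, take $M = \bar U V$ (generically invertible when $r \ge n$) and $U = M^{-1} \bar U$ (so $UV = I_n$), and extend $(U,V)$ to an invertible $r \times r$ matrix $R$ by Lemma~\ref{lem:extend}. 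Hence each nonzero polynomial condition on $(a,b,c)$ required by Theorem~\ref{thm:generic-success} pulls back to a nonzero polynomial in $(R,M,D)$ and so holds for generic $(R,M,D)$.

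By Theorem~\ref{thm:generic-success}, Algorithm~\ref{alg:decomp} returns $(a^{(\ell)}, b^{(\ell)}, c^{(\ell)})$ with $a^{(\ell)} = \alpha_\ell a^{*(\pi(\ell))}$, $b^{(\ell)} = \beta_\ell b^{*(\pi(\ell))}$, $c^{(\ell)} = \gamma_\ell c^{*(\pi(\ell))}$ for some permutation $\pi$ of $[r]$ and nonzero scalars with $\alpha_\ell \beta_\ell \gamma_\ell = 1$. Letting $\Pi$ be the permutation matrix of $\pi$, this gives $B = MU\Pi \diag(\beta)$, $C = \diag(\gamma)\Pi^\top V$, and $\tilde D_i = \diag(\alpha)\Pi^\top D_i \Pi$. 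The equation $M^{-1} B \hat D C = I_n$ then reads $U\,\Pi \diag(\beta) \hat D \diag(\gamma) \Pi^\top\, V = I_n$, and since $UV = I_n$, the choice $\hat D_{\ell\ell} = (\beta_\ell \gamma_\ell)^{-1} = \alpha_\ell$ makes the bracketed diagonal factor equal to $I_r$, so the linear system is feasible. With $M^{-1} B \hat D$ and $C$ satisfying $(M^{-1} B \hat D) \cdot C = I_n$, Lemma~\ref{lem:extend} extends them to matrices $\tilde U, \tilde V \in \RR^{r \times r}$ with $\tilde U \tilde V = I_r$.

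Finally I would verify the output. Since $\hat D^{-1}, \tilde D_i, \tilde D_j$ are diagonal and mutually commute, and $\tilde V \tilde U = I_r$,
\[
\tilde Z_i \tilde Z_j = \tilde U \hat D^{-1} \tilde D_i \hat D^{-1} \tilde D_j \tilde V = \tilde U \hat D^{-1} \tilde D_j \hat D^{-1} \tilde D_i \tilde V = \tilde Z_j \tilde Z_i.
\]
By construction, the top $n$ rows of $\tilde U$ form $M^{-1} B \hat D$ and the left $n$ columns of $\tilde V$ form $C$, so the upper-left $n \times n$ block of $\tilde Z_i$ equals $M^{-1} B \hat D \cdot \hat D^{-1} \tilde D_i \cdot C = M^{-1} B \tilde D_i C = M^{-1} \sum_\ell a^{(\ell)}_i b^{(\ell)} c^{(\ell)\top} = M^{-1} T_i = A_i$. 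The main obstacle in this plan is the genericity reduction in the second paragraph: the parameter space $(R,M,D)$ differs in dimension from the space of components $(a,b,c)$, so Theorem~\ref{thm:generic-success}'s "generic components" hypothesis cannot be invoked directly, and the realization argument constructing a valid $(R,M,D)$ from any prescribed $(a,b,c)$ is what bridges this gap.
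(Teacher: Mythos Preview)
Your approach is essentially the same as the paper's, and the realization argument you give to transfer genericity from component space to $(R,M,D)$ space matches the paper's construction. One genuine gap remains: you only show that the linear system $M^{-1}B\hat D C = I_n$ is \emph{feasible}, exhibiting the particular solution $\hat D_{\ell\ell}=\alpha_\ell$, and then proceed as if the algorithm returns this solution. But the algorithm does not know $\alpha_\ell$; it simply solves the system, and if the solution were not unique the algorithm could return a different $\hat D$, possibly with a zero diagonal entry, in which case $\hat D^{-1}$ in Step~7 is undefined. You need to argue uniqueness (equivalently, linear independence of the rank-one matrices $b^{(\ell)}c^{(\ell)\top}$).

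The fix is immediate from material you have already established: condition~\ref{item:bc-li} of Theorem~\ref{thm:uniqueness}, which your realization argument verifies, says precisely that $\{b^{*(\ell)}\otimes c^{*(\ell)}\}$ are linearly independent, hence so are $\{b^{(\ell)}c^{(\ell)\top}\}$ after permutation and nonzero scaling, and hence the linear system for $\hat D$ has a unique solution. (Amusingly, the paper re-proves this linear independence separately via an explicit construction in Lemma~\ref{lem:rank-1-indep}, rather than invoking condition~\ref{item:bc-li}.) Once uniqueness is noted, your final computation $M^{-1}B\hat D\cdot\hat D^{-1}\tilde D_i\cdot C = M^{-1}B\tilde D_i C = M^{-1}T_i = A_i$ is cleaner than the paper's, since it makes the $\hat D$-cancellation manifest without expanding in terms of $\Pi,\beta,\gamma$.
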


\begin{proof}
By construction, we know $A_i = U D_i V$ where $U$ is $n \times r$ consisting of the first $n$ rows of $R^{-1}$, and $V$ is $r \times n$ consisting of the first $n$ columns of $R$. This means $T_i = MU D_i V$, which gives a tensor decomposition $T = \sum_{\ell=1}^r w^{(\ell)} \otimes u^{(\ell)} \otimes v^{(\ell)}$ where $u^{(\ell)}$ are the columns of $MU$, $v^{(\ell)\top}$ are the rows of $V$, and $D_i = \diag(w^{(1)}_i,\ldots,w^{(r)}_i)$. We will verify that this decomposition of $T$ satisfies the conditions of our uniqueness theorem (Theorem~\ref{thm:uniqueness}), applied with the parameters $p,q$ specified in Algorithm~\ref{alg:extension}. The components $w^{(\ell)},u^{(\ell)},v^{(\ell)}$ are not generic in the usual sense, so we cannot apply Theorem~\ref{thm:generic-success} directly. Instead, since $R^{-1} = \adj(R)/\det(R)$, the entries of the components $w^{(\ell)},u^{(\ell)},v^{(\ell)}$ can be expressed as rational functions in the underlying variables
\[ x = \{R_{ij} \,:\, i,j \in [r]\} \sqcup \{(D_i)_{jj} \,:\, i \in [m], \, j \in [r]\} \sqcup \{M_{ij} \,:\, i,j \in [n]\}. \]
For generic $x$, our goal is to verify conditions~\ref{item:a1}-\ref{item:P-prime} from Theorem~\ref{thm:uniqueness} for the components $w^{(\ell)},u^{(\ell)},v^{(\ell)}$. Inspecting each of these conditions and recalling that the components are rational functions, it suffices to exhibit a single choice of the underlying variables $x$ (with $R,M$ invertible) for which conditions~\ref{item:a1}-\ref{item:P-prime} hold; it then follows that they hold for generic $x$. From Theorem~\ref{thm:generic-success} we know that a generic choice of components will satisfy the conditions (we will verify the requirement of the theorem later); fix such a generic choice $\tilde{w}^{(\ell)},\tilde{u}^{(\ell)},\tilde{v}^{(\ell)}$. We will exhibit values for the variables $x = (R,D_i,M)$ such that $w^{(\ell)}(x) = \tilde{w}^{(\ell)}$, $u^{(\ell)}(x) = \tilde{u}^{(\ell)}$, and $v^{(\ell)}(x) = \tilde{v}^{(\ell)}$, implying that $w^{(\ell)},u^{(\ell)},v^{(\ell)}$ satisfy the conditions as desired.

First, set $D_i = \diag(\tilde{w}^{(1)}_i,\ldots,\tilde{w}^{(r)}_i)$, which ensures $w^{(\ell)}(x) = \tilde{w}^{(\ell)}$. To ensure $u^{(\ell)}(x) = \tilde{u}^{(\ell)}$, we need to arrange $MU = \tilde{U}$ where $\tilde{U}$ has columns $\tilde{u}^{(1)},\ldots,\tilde{u}^{(r)}$ and where $U$ consists of the first $n$ rows of $R^{-1}$. Equivalently, $U = M^{-1} \tilde{U}$. Similarly, to ensure $v^{(\ell)}(x) = \tilde{v}^{(\ell)}$, we need to arrange $V = \tilde{V}$ where $\tilde{V}$ has rows $\tilde{v}^{(1)\top},\ldots,\tilde{v}^{(r)\top}$ and where $V$ consists of the first $n$ columns of $R$. Set $M = \tilde{U} \tilde{V}$, which is invertible due to the generic choice of $\tilde{u}^{(\ell)}, \tilde{v}^{(\ell)}$. For our desired values of $U = M^{-1} \tilde{U}$ and $V = \tilde{V}$, we have $UV = M^{-1} \tilde{U} \tilde{V} = I_n$. Therefore by Lemma~\ref{lem:extend} there exists an $r \times r$ matrix $R$ such that the first $n$ rows of $R^{-1}$ are $U = M^{-1} \tilde{U}$ and the first $n$ columns of $R$ are $V = \tilde{V}$. This completes the construction of $x = (R,D_i,M)$ and therefore completes the proof that the components $w^{(\ell)},u^{(\ell)},v^{(\ell)}$ satisfy the conditions of Theorem~\ref{thm:uniqueness} when $R,D_i,M$ are chosen generically.

Now by Theorem~\ref{thm:uniqueness} we know $T$ has a unique (in the sense of Definition~\ref{def:unique}) rank-$r$ decomposition, and that the components $a^{(\ell)},b^{(\ell)},c^{(\ell)}$ recovered by our algorithm are related to $w^{(\ell)},u^{(\ell)},v^{(\ell)}$ as follows: there exists a permutation $\pi: [r] \to [r]$ such that $a^{(\ell)} \otimes b^{(\ell)} \otimes c^{(\ell)} = w^{(\pi(\ell))} \otimes u^{(\pi(\ell))} \otimes v^{(\pi(\ell))}$ for all $\ell \in [r]$. This means $b^{(\ell)} = \beta_\ell \cdot u^{(\pi(\ell))}$ for a scalar $\beta_\ell \ne 0$, which we can write in matrix form as $B = MU \Pi \cdot \diag(\beta)$ where $\Pi$ is the permutation matrix associated with $\pi$ and $B$ is defined in Algorithm~\ref{alg:extension}. Similarly, $c^{(\ell)} = \gamma_\ell \cdot v^{(\pi(\ell))}$ for a scalar $\gamma_\ell \ne 0$, and $C = \diag(\gamma) \cdot \Pi^\top V$. The algorithm's next step is to solve for a diagonal matrix $\hat{D}$ such that $M^{-1} B \hat{D} C = I_n$. One solution to this linear system is $\hat{D} = \diag(\beta_\ell^{-1} \gamma_\ell^{-1} \,:\, \ell = 1,\ldots,r)$; to see this, note that $UV = I_n$ (since $U,V$ are defined as particular submatrices of $R^{-1},R$) and $\Pi \cdot \diag(\beta) \cdot \hat{D} \cdot \diag(\gamma) \cdot \Pi^\top = \diag(\beta_{\pi^{-1}(\ell)} \hat{D}_{\pi^{-1}(\ell)} \gamma_{\pi^{-1}(\ell)} \,:\, \ell = 1,\ldots,r)$. Later, we will show that generically, this is the only solution. Therefore the algorithm recovers $\hat{D} = \diag(\beta_\ell^{-1} \gamma_\ell^{-1} \,:\, \ell = 1,\ldots,r)$. The algorithm's next step is to extend $M^{-1} B \hat{D}$ and $C$ to $r \times r$ matrices $\tilde{U},\tilde{V}$ with $\tilde{U} \tilde{V} = I_r$, where $M^{-1} B \hat{D}$ makes up the first $n$ rows of $\tilde{U}$ and $C$ makes up the first $n$ columns of $\tilde{V}$; this uses Lemma~\ref{lem:extend} along with the previous claim $M^{-1} B \hat{D} C = I_n$.

Recall that the algorithm has access to
\[ \tilde{D}_i = \diag(a^{(\ell)}_i \,:\, \ell = 1,\ldots,r) = \diag(\beta_\ell^{-1} \gamma_\ell^{-1} w^{(\pi(\ell))}_i \,:\, \ell = 1,\ldots,r) \]
and outputs $\tilde{Z}_i = \tilde{U} \hat{D}^{-1} \tilde{D}_i \tilde{V}$ for $i \in [m]$. We need to show this is a valid commuting extension. Since $\tilde{U},\tilde{V}$ are inverses and $\hat{D}^{-1} \tilde{D}_i$ is diagonal, the $Z_i$ pairwise commute. To verify that $Z_i$ ``extends'' $A_i$: the upper-left $n \times n$ block of $Z_i$ is
\begin{align*}
M^{-1}B\hat{D} \cdot \hat{D}^{-1} \tilde{D}_i \cdot C &= M^{-1} (MU \Pi \cdot \diag(\beta)) \tilde{D}_i (\diag(\gamma) \cdot \Pi^\top V) \\
&= U \Pi \cdot \diag(\beta) \cdot \tilde{D}_i \cdot \diag(\gamma) \cdot \Pi^\top V \\
&= U \Pi \cdot \diag(w_i^{(\pi(\ell))} \,:\, \ell = 1,\ldots,r) \cdot \Pi^\top V \\
&= U \cdot \diag(w_i^{(\ell)} \,:\, \ell = 1,\ldots,r) \cdot V \\
&= U D_i V \\
&= A_i
\end{align*}
as desired.

Next we will prove the claim, made above, that the linear system of equations for $\hat{D}$ has a unique solution. The equations are $M^{-1} B \hat{D} C = I_n$ where $\hat{D}$ is diagonal, so it suffices to show that the matrices $M^{-1} b^{(\ell)} c^{(\ell)\top}$ for $\ell \in [r]$ are linearly independent. Since $M^{-1}$ is invertible, it suffices to show that the matrices $b^{(\ell)} c^{(\ell)\top}$ for $\ell \in [r]$ are linearly independent. Since $b^{(\ell)}$ and $c^{(\ell)}$ are nonzero scalar multiples of $u^{(\pi(\ell))}$ and $v^{(\pi(\ell))}$ respectively, it suffices to show that $u^{(\ell)} v^{(\ell)\top}$ for $\ell \in [r]$ are linearly independent. Recalling that $u^{(\ell)}$ are the columns of $MU$ and $M$ is invertible, it suffices to show that $\hat{u}^{(\ell)} v^{(\ell)\top}$ for $\ell \in [r]$ are linearly independent, where $\hat{u}^{(\ell)}$ are the columns of $U$. Recalling that the entries of $R^{-1}$ are rational functions in the entries of $R$, it suffices to exhibit a single (invertible) value of $R$ for which $\hat{u}^{(\ell)} v^{(\ell)\top}$ are linearly independent, where $\hat{u}^{(\ell)}$ and $v^{(\ell)}$ are viewed as functions of $R$: namely, the first $n$ rows of $R^{-1}$ have columns $\hat{u}^{(\ell)}$ and the first $n$ columns of $R$ have rows $v^{(\ell)\top}$. Constructing such a value for $R$ amounts to finding linearly independent matrices $\hat{u}^{(\ell)} v^{(\ell)\top}$ for $\ell \in [r]$ that sum to $I_n$, since the rest of $R^{-1}, R$ can be filled in by Lemma~\ref{lem:extend}. This in turn is guaranteed by Lemma~\ref{lem:rank-1-indep} below, noting that our assumption $n \ge 3$ implies $n^2-n+1 \ge 2n$.

Finally, we need to check that our parameters $p,q,r$ verify the requirements for Theorem~\ref{thm:generic-success}. Note we are in the case $\alpha = 1$ so we need $q \ge 18$ and $r \le 2n(1-4/q) - q^3/4$. These follow from our assumptions, and our choice $m-1 \le q \le m$.
\end{proof}

\begin{lemma}\label{lem:rank-1-indep}
For $n \ge 1$ and $n \le r \le n^2-n+1$, there exist $r$ linearly independent rank-1 $n \times n$ matrices that sum to $I_n$.
\end{lemma}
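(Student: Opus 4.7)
The plan is to construct the $r$ matrices by organizing them according to which column of $\RR^{n \times n}$ they live in. For each $j \in [n]$, let $C_j := \{u e_j^\top : u \in \RR^n\}$ denote the $n$-dimensional subspace of $n \times n$ matrices whose only nonzero column is column $j$; every nonzero element of $C_j$ has rank $1$, and the $C_j$ are in direct sum: $\RR^{n \times n} = C_1 \oplus \cdots \oplus C_n$. This direct-sum structure will be the source of linear independence across different $j$: any linearly independent subsets chosen inside each $C_j$ automatically combine to a linearly independent set in $\RR^{n \times n}$.

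First I would choose positive integers $k_1, \ldots, k_n$ with $1 \le k_j \le n$ and $\sum_j k_j = r$. Such a choice exists whenever $n \le r \le n^2$; since the hypothesis $r \le n^2 - n + 1 \le n^2$ leaves plenty of room, I would simply set as many $k_j$'s as possible to $n$ and the rest to $1$ (with at most one intermediate value). Next, for each $j$, I would realize $e_j e_j^\top$ as a sum of $k_j$ linearly independent rank-$1$ matrices in $C_j$. Using the linear isomorphism $u \mapsto u e_j^\top$ from $\RR^n$ to $C_j$, this reduces to writing $e_j$ as a sum of $k_j$ linearly independent nonzero vectors $w^{(j,1)}, \ldots, w^{(j, k_j)} \in \RR^n$. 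Since $k_j \le n$, this can be done explicitly: pick distinct indices $c_1, \ldots, c_{k_j - 1} \in [n] \setminus \{j\}$ (possible because $k_j - 1 \le n - 1$) and set $w^{(j,i)} := e_{c_i}$ for $i < k_j$ while $w^{(j, k_j)} := e_j - \sum_{i < k_j} e_{c_i}$.

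To finish, I would assemble the rank-$1$ matrices $M^{(j,i)} := w^{(j,i)} e_j^\top \in C_j$ across all $j \in [n]$ and $i \in [k_j]$. By construction they sum to $\sum_j \bigl(\sum_i w^{(j,i)}\bigr) e_j^\top = \sum_j e_j e_j^\top = I_n$, and they are linearly independent both within each $C_j$ (by the isomorphism and the linear independence of the $w^{(j,i)}$ in $\RR^n$) and across the $C_j$ (by the direct sum). There is no substantive obstacle in the argument; the only routine thing to verify is that each $w^{(j,i)}$ is nonzero, which holds because $w^{(j, k_j)}$ has $e_j$-coefficient $1$ while $c_1, \ldots, c_{k_j - 1}$ are drawn from $[n] \setminus \{j\}$. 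In fact this construction achieves every $r$ up to $n^2$, so the lemma's bound $r \le n^2 - n + 1$ is comfortably satisfied.
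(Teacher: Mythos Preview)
Your proof is correct, but it proceeds quite differently from the paper's. The paper builds its rank-1 matrices as $\pm 1$-valued indicator matrices supported on axis-aligned rectangles: for large $r$ it takes the all-ones matrix on $[n]\times[n]$ and subtracts off a collection of rectangles that partition the off-diagonal entries (between $2(n-1)$ and $n^2-n$ rectangles, covering $r$ from $2n-1$ up to $n^2-n+1$); for small $r$ it does the analogous thing on a $k\times k$ block and fills in the remaining diagonal with $1\times 1$ blocks. Your construction is more structural: you confine each rank-1 term to a single column subspace $C_j$, so linear independence is automatic from the direct-sum decomposition $\RR^{n\times n}=\bigoplus_j C_j$, and within each $C_j$ you only need the easy vector fact that $e_j$ can be written as a sum of $k_j\le n$ independent nonzero vectors. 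Your approach is cleaner and in fact proves more, reaching every $r$ up to $n^2$ rather than stopping at $n^2-n+1$; the paper's rectangle construction is more concrete but requires two separate cases and only attains the stated bound.
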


\begin{proof}
The rank-1 matrices used in our construction will each be supported on a rectangle, either with all nonzero values equal to $1$ or all nonzero values equal to $-1$. One possible construction is to take the rectangle $[n] \times [n]$ and subtract off a collection of rectangles that partition $([n] \times [n]) \setminus \{(1,1),(2,2),\ldots,(n,n)\}$. The number of rectangles in the partition can range from $2(n-1)$ to $n^2-n$, so this covers all the $r$ values ranging from $2(n-1)+1$ to $n^2-n+1$. An alternative construction is to, for some $k \in [n]$, take the rectangle $[k] \times [k]$ and subtract off a collection of $2(k-1)$ rectangles that partition $([k] \times [k]) \setminus \{(1,1),(2,2),\ldots,(k,k)\}$, and then add $n-k$ additional $1 \times 1$ rectangles on the diagonal. This uses $1+2(k-1)+(n-k) = n+k-1$ rectangles in total, so by varying $k$ this covers all the $r$ values ranging from $n$ to $2n-1$.
\end{proof}

\addcontentsline{toc}{section}{Acknowledgments}
\section*{Acknowledgments}

We thank Aravindan Vijayaraghavan for helpful discussions.

\addcontentsline{toc}{section}{References}
\bibliographystyle{alpha}
\bibliography{main}

\end{document}